\documentclass[11pt]{article}
\usepackage[margin=1in]{geometry}

\usepackage[T1]{fontenc}
\usepackage[utf8]{inputenc}
\usepackage{lmodern}
\usepackage{microtype}
\usepackage{amsmath,amssymb,amsthm,mathtools,bbm}
\usepackage{enumitem}
\usepackage{multirow}
\usepackage{algorithm}
\usepackage{caption}

\usepackage{tikz}
\usepackage{graphicx}
\usepackage{pgfplots}
\pgfplotsset{compat=1.18}
\usetikzlibrary{patterns}
\usepackage{algpseudocode}
\algtext*{EndProcedure}
\usetikzlibrary{calc,arrows.meta,positioning}

\usepackage{booktabs}
\usepackage{makecell}
\usepackage{array}

\usepackage{xcolor}
\definecolor{utburnt}{HTML}{BF5700}
\definecolor{tealblue}{HTML}{007C91}

\usepackage{natbib}
\usepackage[
  colorlinks=true,
  linkcolor=utburnt,
  citecolor=tealblue,
  urlcolor=tealblue
]{hyperref}
\usepackage[nameinlink,capitalize]{cleveref}
\usepackage{aliascnt}

\theoremstyle{plain}
\newtheorem{theorem}{Theorem}[section]

\newaliascnt{proposition}{theorem}
\newtheorem{proposition}[proposition]{Proposition}
\aliascntresetthe{proposition}

\newaliascnt{lemma}{theorem}
\newtheorem{lemma}[lemma]{Lemma}
\aliascntresetthe{lemma}

\newaliascnt{corollary}{theorem}
\newtheorem{corollary}[corollary]{Corollary}
\aliascntresetthe{corollary}

\newaliascnt{fact}{theorem}
\newtheorem{fact}[fact]{Fact}
\aliascntresetthe{fact}

\theoremstyle{remark}
\newaliascnt{remark}{theorem}
\newtheorem{remark}[remark]{Remark}
\aliascntresetthe{remark}

\theoremstyle{definition}
\newaliascnt{definition}{theorem}
\newtheorem{definition}[definition]{Definition}
\aliascntresetthe{definition}

\Crefname{theorem}{Theorem}{Theorems}
\Crefname{proposition}{Proposition}{Propositions}
\Crefname{lemma}{Lemma}{Lemmas}
\Crefname{corollary}{Corollary}{Corollaries}
\Crefname{fact}{Fact}{Facts}
\Crefname{remark}{Remark}{Remarks}
\Crefname{definition}{Definition}{Definitions}

\setlist[itemize]{topsep=0.25em,itemsep=0.15em,leftmargin=1.5em}
\setlist[enumerate]{topsep=0.25em,itemsep=0.15em,leftmargin=1.8em}

\DeclareRobustCommand{\NazBound}{\cite{Naz03}}

\numberwithin{equation}{section}

\newcommand{\Ind}{\mathbbm{1}}

\newcommand{\R}{\mathbb{R}}
\DeclareMathOperator*{\E}{\mathbb{E}}
\newcommand{\eps}{\varepsilon}
\newcommand{\conv}{\mathrm{conv}}
\newcommand{\OPT}{\mathrm{OPT}}
\newcommand{\err}{\mathrm{err}}
\newcommand{\sign}{\mathrm{sign}}
\newcommand{\rank}{\mathrm{rank}}
\newcommand{\GSA}{\mathrm{GSA}}
\newcommand{\poly}{\mathrm{poly}}

\newcommand{\Var}{\mathrm{Var}}
\newcommand{\tr}{\mathrm{tr}}
\newcommand{\ip}[2]{\langle #1,#2\rangle}

\newcommand{\cG}{\mathcal{G}}
\newcommand{\cP}{\mathcal{P}}

\newcommand{\cL}{\mathcal{L}}
\newcommand{\cH}{\mathcal{H}}
\newcommand{\cN}{\mathcal{N}}

\usepackage{xstring}

\newcommand{\algoname}[1]{%
  \IfEqCase{#1}{%
    {1}{Agnostic Proper Learner for Halfspaces}%
  }[Unknown algorithm]%
}

\title{Proper Agnostic Learning of Functions of Halfspaces\\ under Gaussian Marginals}

\author{
 \begin{tabular}{cc}
        \begin{tabular}{c}
            Sergei Tikhonov\\ \texttt{tikhonov@utexas.edu} \\ UT Austin 
        \end{tabular} & 
         \begin{tabular}{c}
              Arsen Vasilyan\thanks{Supported by the NSF AI Institute for Foundations of Machine Learning (IFML).} \\ \texttt{arsenvasilyan@gmail.com} \\ UT Austin
         \end{tabular}
    \end{tabular}
}

\begin{document}

\maketitle

\begin{abstract}
  We study the problem of computationally efficient \emph{proper agnostic learning} of multidimensional concept classes under the Gaussian distribution. In this setting, given i.i.d.\ labeled samples from an unknown distribution over $\mathbb{R}^d \times \{\pm 1\}$ whose marginal on $\mathbb{R}^d$ is Gaussian, the goal is to output a hypothesis from a target class $\mathcal{F}$ whose 0-1 loss is within $\eps$ of that of the best classifier in $\mathcal{F}$.

We give the first efficient proper agnostic learning algorithm for arbitrary Boolean functions of $K$ halfspaces under Gaussian marginals. Our algorithm runs in time $d^{O(K^2 \log(1/\eps)/\eps^2)} + (K/\eps)^{O(K^3/\eps^{2.5})}$. Prior to our work, the only known algorithm for $K \geq 2$ was brute-force search, with run-time exponential in $d$. Moreover, the dependence of our run-time on the dimension $d$ matches that of the best known \emph{improper} learning algorithm, namely $d^{\widetilde{O}(K^2/\eps^2)}$.

For the special case of a single halfspace ($K=1$), the best previous run-time was $d^{O(1/\eps^4)} + (1/\eps)^{O(1/\eps^6)}$.\ Our algorithm improves this to $d^{O(1/\eps^2)} + (1/\eps)^{O(1/\eps^{2.5})}$. Once again, the dependence on $d$ matches that of the best known improper algorithm, namely $d^{O(1/\eps^2)}$. Furthermore, the dependence of our run-time on the dimension $d$ is essentially optimal in the statistical query model.

\end{abstract}

\section{Introduction}

In this work, we focus on computationally efficient \emph{proper agnostic learning} of multidimensional concepts such as halfspaces and Boolean functions of halfspaces. Proper agnostic learning is the fundamental task of \emph{fitting an approximately optimal classifier} from a hypothesis class $\mathcal{F}$ to an i.i.d. labeled dataset $S$. More formally:
\begin{definition}[Proper Agnostic Learning]
    Let $\mathcal{F}$ be a hypothesis class. Given an i.i.d. sample set $S=\{(\mathbf{x}_i,y_i)\}$ drawn from a distribution $D$, where $\mathbf{x}_i \in \R^d$ and $y_i \in \{\pm 1\}$, a proper agnostic learning algorithm for $\mathcal{F}$ must produce a hypothesis $h \in \mathcal{F}$ satisfying
    $$
    \Pr_{(\mathbf{x},y)\sim D}[h(\mathbf{x})\neq y]
    \leq
    \OPT_{\mathcal{F}} 
    +\eps
    $$
    with high probability, where $\OPT_{\mathcal{F}}:=\inf_{h'\in \mathcal{F}}
    \Pr_{(\mathbf{x},y)\sim D}[h'(\mathbf{x})\neq y]$.
\end{definition}
Fitting a near-optimal classifier from a hypothesis class, when no classifier fits the data perfectly, is one of the most basic goals in machine learning. Yet it is well known that without any assumption on the distribution $D$, this task is NP-hard even for the elementary class of \emph{halfspaces} (also known as \emph{linear classifiers}) \cite{guruswami2006hardness, feldman2006new}, and is therefore believed to require run-time exponential\footnote{This contrasts sharply with the realizable setting, in which the data is linearly separable and efficient algorithms have been known for decades \cite{rosenblatt1958perceptron}.} in the data dimension $d$. We therefore focus on the distribution-specific setting, where an assumption is placed on the distribution of the data points $\{\mathbf{x}_i\}$. Specifically, we adopt the standard assumption that $\{\mathbf{x}_i\}$ are drawn from a Gaussian distribution (see e.g.\ \cite{KKMS08,vempala2010learningconvex,vempala2010randomsampling,diakonikolas2018learning,awasthi2017power,diakonikolas2022learning, diakonikolas2020non,DiakonikolasKaneKontonisTzamosZarifis2021}).

Our approach applies to the concept class of \emph{Boolean functions of $K$ halfspaces}, yielding the first efficient proper agnostic learning algorithm for this hypothesis class under the Gaussian distribution. Specifically, our algorithm runs in time $d^{O(K^2\log(1/\eps)/\eps^2)}
+
(K/\eps)^{
O\!\left(K^3/\eps^{2.5}\right)}$ and returns a Boolean function of $K$ halfspaces whose accuracy is within $\eps$ of the optimal accuracy among all Boolean functions of $K$ halfspaces, which we denote as $\OPT_K$.
Previously, for $K\geq 2$, the only available algorithm for proper learning of this concept class was the exponential-time algorithm based on brute-force search.

We emphasize that even for the most basic class of multidimensional concepts --- the class of halfspaces (a.k.a. linear classifiers) --- our work yields significant improvements over what was known previously. The prior state-of-the-art algorithm of \cite{DiakonikolasKaneKontonisTzamosZarifis2021} runs in time $d^{O(1/\eps^4)}
+
(1/\eps)^{O(1/\eps^6)}$, where $d$ is the data dimension and $\eps$ is the accuracy parameter. Our algorithm for this task has the improved run-time of $d^{O(1/\eps^2)}
+
(1/\eps)^{O(1/\eps^{2.5})}$. In particular, the dependence of our run-time on the dimension $d$ is likely to be optimal as it matches the state-of-the-art improper learning run-time $d^{O(1/\eps^2)}$ \cite{KKMS08, DKN10}, as well as the run-time lower bound of $d^{\Omega(1/\eps^2)}$ for statistical query algorithms \cite{diakonikolas2020near, goel2020statistical, pmlr-v134-diakonikolas21c}.

In addition to proper agnostic learning of halfspaces and Boolean functions of halfspaces, our approach extends to \emph{intersections of halfspaces} --- also yielding the first efficient proper agnostic learning algorithm under the Gaussian distribution for this hypothesis class. Overall, we summarize our run-times in Table \ref{tab:proper-learning-comparison}.

\begin{table*}[h]
\centering
\scriptsize
\renewcommand{\arraystretch}{1.25}
\setlength{\tabcolsep}{6pt}
\begin{tabular}{@{}
    >{\centering\arraybackslash}p{2.8cm}
    >{\centering\arraybackslash}p{7.0cm}
    >{\centering\arraybackslash}p{3.2cm}
@{}}
\toprule
\textbf{Concept class} & \textbf{This Work} & \textbf{Prior Best} \\

\midrule

\makecell[c]{Boolean functions of\\ $K$ halfspaces}
&
$d^{O(K^2\log(1/\eps)/\eps^2)}
+
(K/\eps)^{
O\!\left(K^3/\eps^{2.5}\right)}$
&
\makecell[c]{$(1/\eps)^{\poly(dK)}$\\ (brute force search)}
\\
\midrule

\makecell[c]{Intersections of\\ $K$ halfspaces}
&
$d^{O(\log K\,\log(1/\eps)/\eps^2)}
+
\left(\sqrt{\log K}/\eps\right)^{
O\!\left(K^2\sqrt{\log K}/\eps^{2.5}\right)}$
&
\makecell[c]{$(1/\eps)^{\poly(dK)}$\\ (brute force search)}

\\
\midrule

\makecell[c]{Halfspaces}
&
$d^{O(1/\eps^2)}
+
(1/\eps)^{O(1/\eps^{2.5})}$
&
\makecell[c]{$d^{O(1/\eps^4)}
+
(1/\eps)^{O(1/\eps^6)}$ \\ \cite{DKTKZ21}}
\\

\bottomrule
\end{tabular}
\caption{Run-time comparison for proper agnostic learning under Gaussian marginals.}
\label{tab:proper-learning-comparison}
\end{table*}

\paragraph{Comparison with improper learning algorithms.}
Our run-time for proper agnostic learning of functions of $K$ halfspaces is close to matching the $d^{\tilde{O}(K^2/\eps^2)}$ run-time of the state-of-the-art \emph{improper} learning algorithm of \cite{KKMS08, KOS08, PSW26}. That algorithm achieves error $\OPT_{K}+\eps$, but the classifier it produces is a Polynomial Threshold Function (PTF), i.e.\ of the form $\sign(P(\mathbf{x}))$ where $P$ is a low-degree multivariate polynomial\footnote{A learning algorithm for a hypothesis class $\mathcal{F}$ is called \emph{improper} if the hypothesis it produces need not itself lie in $\mathcal{F}$.}.

Such improper learning algorithms have the obvious drawback of outputting a highly complex function---a PTF---whose accuracy is only guaranteed to match that of the best classifier in a much simpler hypothesis class (here, functions of $K$ halfspaces). Moreover, the use of PTFs incurs overhead in storage, evaluation time, and adversarial robustness of the decision boundary \cite{lange2025robust}.

Due to these limitations, matching the run-time of improper learning algorithms with proper ones has been a major research direction in computational learning theory \cite{DiakonikolasKaneKontonisTzamosZarifis2021, lange2022properly, lange2025agnostic, pinto2025learning, blanc2022properly}. We come close to achieving this for functions of halfspaces: the dependence of our run-time $d^{O(K^2\log(1/\eps)/\eps^2)} + (K/\eps)^{O\!\left(K^3/\eps^{2.5}\right)}$ on the dimension $d$ matches that of the best known \emph{improper} learning algorithm, $d^{O(K^2\log(1/\eps)/\eps^2)}$ \cite{KKMS08, KOS08, PSW26}.

\begin{figure}[h!]
\centering
\fcolorbox{black}{gray!6}{\resizebox{\dimexpr\textwidth-2\fboxsep-2\fboxrule\relax}{!}{%
\begin{tikzpicture}

  % Red hatched hardness region
  \fill[red!15] (-3,-0.4) rectangle (2,0.4);
  \fill[pattern=north east lines, pattern color=red!55]
      (-3,-0.4) rectangle (2,0.4);
  \node[red, scale=1.15, align=center] at (-0.8,1.0)
      {Computational \\ hardness};

  % Run-time axis
  \draw[->, >=stealth, line width=3pt] (-3,0) -- (13.5,0)
      node[right] {\large Run-time};

  % Tick marks (improper and this work coincide at x=2)
  \draw[line width=2pt] (2, 0.3)  -- (2, -0.3);
  \draw[blue, line width=2pt] (2, 0.3) -- (2, -0.3);
  \draw[line width=2pt] (8, 0.3)  -- (8, -0.3);

  % Tick labels
  \node[anchor=north west, xshift=-0.8cm, yshift=-14pt,
        scale=1.15, align=left] at (2,0)
      {$d^{{O}(1/\eps^{2})}$ \\
       Improper learning \\
       {\cite{KKMS08, DKN10}}};

  \node[blue, anchor=south west, xshift=-0.8cm, yshift=14pt,
        scale=1.15, align=left] at (2,0)
      {$d^{{O}(1/\eps^{2})}$ \\
       Proper learning \\
       \textbf{[This work]}};

  \node[anchor=north west, xshift=-0.8cm, yshift=-8pt,
        scale=1.15, align=left] at (8,0)
      {$d^{\tilde{O}(1/\eps^{4})}$ \\
       Proper learning (previous best) \\
    {\cite{DiakonikolasKaneKontonisTzamosZarifis2021}}};

\end{tikzpicture}}}%
\caption{Agnostic proper learning of a single halfspace from Gaussian data in the \emph{moderate-accuracy regime} %$\log d \geq 1/\eps^{C}$..
($\eps \geq  \frac{1}{\log(d)}$). Our algorithm improves over the previous best proper agnostic learning run-time and matches the best known improper learning run-time. The highlighted region indicates run-time lower bounds for statistical query algorithms \cite{diakonikolas2020near, goel2020statistical, pmlr-v134-diakonikolas21c}.}
  \label{fig:extreme-high-dim-regime}
\end{figure}

\begin{figure}[h!]
\centering
\fcolorbox{black}{gray!6}{\resizebox{\dimexpr\textwidth-2\fboxsep-2\fboxrule\relax}{!}{%
\begin{tikzpicture}

  \fill[red!15] (-3,-0.4) rectangle (2,0.4);
  \fill[pattern=north east lines, pattern color=red!55]
      (-3,-0.4) rectangle (2,0.4);
  \node[red, scale=1.15, align=center] at (-0.8,1.0)
      {Computational \\ hardness};

  \draw[->, >=stealth, line width=3pt] (-3,0) -- (13.5,0)
      node[right] {\large Run-time};

  \draw[line width=2pt] (2, 0.3)  -- (2, -0.3);   % improper
  \draw[blue, line width=2pt] (3, 0.3) -- (3, -0.3); % this work
  \draw[line width=2pt] (12, 0.3) -- (12,-0.3);   % prior proper
  
  \node[anchor=north west, xshift=-0.8cm, yshift=-14pt,
        scale=1.15, align=left] at (2,0)
      {$2^{\tilde{O}(1/\eps^{2})}$ \\
       Improper learning \\
       {\cite{KKMS08, DKN10}}};

  \node[blue, anchor=south west, xshift=-0.8cm, yshift=8pt,
        scale=1.15, align=left] at (3,0)
      {$2^{\tilde{O}(1/\eps^{2.5})}$ \\
       Proper learning \\
       \textbf{[This work]}};

  \node[anchor=north west, xshift=-0.8cm, yshift=-8pt,
        scale=1.15, align=left] at (12,0)
      {$2^{\tilde{O}(1/\eps^{6})}$ \\
       Proper learning \\\cite{DiakonikolasKaneKontonisTzamosZarifis2021}};

\end{tikzpicture}}}
\caption{Agnostic proper learning of a single halfspace from Gaussian data in the \emph{high-accuracy regime} ($\eps \leq \frac{1}{\log^2(d)}$). Our algorithm improves over the previous best proper agnostic learning run-time and comes close to matching the best known improper learning run-time. The highlighted region indicates run-time lower bounds for statistical query algorithms \cite{diakonikolas2020near, goel2020statistical, pmlr-v134-diakonikolas21c}.}
  \label{fig:moderately-high-dim-regime}
\end{figure}
\paragraph{Comparison to computational hardness results.}
Based on statistical query lower bounds of \cite{diakonikolas2020near, goel2020statistical, pmlr-v134-diakonikolas21c}, it is believed that any agnostic learning algorithm for the class of halfspaces under the Gaussian distribution needs to run in time at least $d^{\Omega(1/\eps^2)}$. This belief is further substantiated by run-time lower bounds based on cryptographic hardness assumptions \cite{pmlr-v202-diakonikolas23b}. Therefore, the run-time of our algorithm for proper agnostic learning of a single halfspace is close to matching not only that of the state-of-the-art improper agnostic learning algorithm \cite{KKMS08,DKN10}, but also the run-time lower bounds for the agnostic learning of halfspaces under Gaussian data points. 

Overall, see Figures \ref{fig:extreme-high-dim-regime} and \ref{fig:moderately-high-dim-regime} for a summary of how our run-time compares with the previous best proper agnostic run-time of \cite{DKTKZ21}, improper learning run-time \cite{KKMS08,DKN10} and computational intractability results for statistical query algorithms \cite{diakonikolas2020near, goel2020statistical, pmlr-v134-diakonikolas21c}.
\begin{itemize}
    \item 
We see that in the moderate-accuracy regime, where $\eps \geq  \frac{1}{\log(d)}$, our run-time matches the run-time of the best improper learning algorithm \cite{KKMS08, DKN10} and statistical query lower bounds (See Figure \ref{fig:extreme-high-dim-regime}).

\item 
In the high-accuracy regime, where $\eps \leq  \frac{1}{\log^2(d)}$, our run-time is $2^{\tilde{O}(1/\eps^{2.5})}$, improving on the run-time of $2^{\tilde{O}(1/\eps^{6})}$ of \cite{DiakonikolasKaneKontonisTzamosZarifis2021} and coming close to matching the run-time $2^{\tilde{O}(1/\eps^{2})}$ of the improper learning algorithm \cite{KKMS08, DKN10}.
\end{itemize}

Likewise, for the class of intersections of $K$ halfspaces, our run-time comes close to matching the state-of-the-art run-time lower bounds of $d^{\widetilde{\Omega}(\log K)}$ for  statistical query due to \cite{DKPZ21, HsuSSV22}.

\subsection{Our Contributions}

Our main theorem is the first proper learning algorithm for Boolean functions of halfspaces under Gaussian marginals.

\begin{theorem}[Proper learning of Boolean functions of halfspaces]\label{thm:main-B}
There exists a proper agnostic learner for the class of arbitrary Boolean functions of halfspaces under Gaussian marginals with the following guarantee. For every $K\ge 1$, every $\eps,\delta\in(0,1/2)$, and every distribution $D$ over $(\mathbf{x},y)\in\R^d\times\{\pm1\}$ whose $\mathbf{x}$-marginal is $\cN(0,\mathbf{I}_d)$, the learner draws
$
N = d^{O(K^2\log(1/\eps)/\eps^2)}\,\log(1/\delta)
$
samples, runs in time
$
d^{O(K^2\log(1/\eps)/\eps^2)}\,\log(1/\delta) +
(K/\eps)^{O(K^3/\eps^{2.5})}\,\log(1/\delta),
$
and returns a hypothesis $h\in\mathcal{C}_{K}$ such that, with probability at least $1-\delta$, 
$
\err_D(h)\le \OPT_{K}+\eps.
$
\end{theorem}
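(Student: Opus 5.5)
The plan has three stages: find a low-dimensional subspace, grid-search functions of halfspaces inside it, and validate on fresh samples.

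\textbf{Stage 1: dimension reduction via the Hermite spectrum.} Let $f^*\in\mathcal{C}_K$ be a near-optimal classifier, with $\err_D(f^*)\le \OPT_K+\eps/10$. It depends only on the projection of $\mathbf{x}$ onto a $K$-dimensional subspace $W^*$ spanned by its normals. By Gaussian surface area bounds, $\GSA(f^*)=O(\sqrt{\log K})\cdot K$ at most, so $f^*$ is $\eps$-close in $L_2$ to its degree-$\ell$ Hermite truncation with $\ell=O(K^2/\eps^2)$. With a $\log(1/\eps)$ loss, the same truncation can be taken to approximate the label function.

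Draw $N=d^{O(\ell\log(1/\eps))}$ samples and estimate the degree-$\ell$ Hermite expansion of $\E[y\mid \mathbf{x}]$, i.e., the tensors $T_j=\E[y\,H_j(\mathbf{x})]$ for $j\le \ell$. Flatten each tensor into a matrix $d\times d^{j-1}$ and take its top singular subspaces (the ``influence'' operators, as in the KOS/Vempala-style subspace recovery). Collect all directions whose singular value exceeds a threshold $\tau=\poly(\eps/K)$ into a subspace $V$. The key lemma is that $V$ has dimension $m=\poly(K/\eps)$, since the total Hermite mass is at most $1$. We also need that projecting $f^*$'s normals onto $V$ loses little: any direction $u\perp V$ carries only small correlation in every flattening, so replacing $f^*$ by a rotated version whose normals lie in $V$ costs at most $\eps/4$ in correlation with $y$. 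This uses a Gaussian noise-stability/anti-concentration argument for functions of halfspaces.

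This is the main obstacle. The bound must be on the 0-1 error of a \emph{proper} function of halfspaces with normals inside $V$, not merely on $L_2$ proximity. To get it, I would first try a structural result: the agnostic error of the best $g\in\mathcal{C}_K$ restricted to $V$ is at most $\OPT_K+\eps/2$. I would argue this by writing $y$'s correlation with $f^*$ as a sum over Hermite levels. The low-level part lives in $V\cup$ small components; the high-level part is bounded by the GSA tail. Finally, the small components of the normals orthogonal to $V$ perturb each halfspace by at most $O(\sqrt{\tau})$ in Gaussian measure.

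\textbf{Stage 2: brute-force search in the low-dimensional subspace.} Within $V$, enumerate a net of $K$-tuples of halfspaces. Normals come from an $\eta$-net of the unit sphere in $V$, and thresholds from a grid of width $\eta$ in $[-O(\sqrt{\log(1/\eps)}),O(\sqrt{\log(1/\eps)})]$. Combine each tuple with all $2^{2^K}$ Boolean functions, or fewer by restricting to realizable cells. Taking $\eta=\eps/K$, rotation-closeness gives a $K\eta$ error change. The enumeration has $(K/\eps)^{O(mK)}$ candidates. To reach the claimed $(K/\eps)^{O(K^3/\eps^{2.5})}$ count, $m$ must be $O(K^2/\eps^{2.5})$; I expect this to come from a sharper count of large singular values, bounding $\sum_j$ level mass by $\GSA\cdot\sqrt{\ell}$-type estimates, so that $m\lesssim \ell^{1.25}$-ish. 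Pinning down this exponent is a secondary difficulty.

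\textbf{Stage 3: validation and confidence.} Evaluate every candidate's empirical error on a fresh sample of size $O(mK\log(K/\eps)/\eps^2\cdot\log(1/\delta))$ and output the minimizer. Uniform convergence over the finite candidate set, together with Stage 1, gives $\err_D(h)\le\OPT_K+\eps$ with probability $1-\delta$. The overall running time is $d^{O(K^2\log(1/\eps)/\eps^2)}\log(1/\delta)$ plus the enumeration cost.
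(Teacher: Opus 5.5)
\textbf{Stage~1 has a genuine gap, and it sits exactly where the paper's main ideas are.}

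First, the approximation claims fail at your degree. The surface-area/noise-sensitivity bound gives $\sum_{|\alpha|>\ell}c_\alpha^2=O(\GSA(f^*)/\sqrt{\ell})$. At $\ell=K^2/\eps^2$ this yields only $\|f^*_{>\ell}\|_2=O(\sqrt{\eps})$. $L_2$-error $\eps$ requires degree of order $1/\eps^4$ already for a single halfspace, and the GSA bound gives $K^2/\eps^4$. Moreover, $\E[y\mid\mathbf{x}]$ is arbitrary in the agnostic setting, so no truncation ``approximates the label function.'' Splitting $\E[f^*y]$ by Hermite level therefore bounds the high-level part only by $O(\sqrt{\eps})$. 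That forces $d^{O(K^2/\eps^4)}$, which is the $L_2$-regression barrier the paper is designed to avoid.

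Second, the step ``the components of the normals orthogonal to $V$ are small, so rotating them into $V$ costs $O(\sqrt{\tau})$'' has no justification. $V$ is built from the labels, not from $f^*$, and nothing forces $(\mathbf{w}_j)_{V^\perp}$ to be small. Two examples: pure-noise labels, where $V=\{0\}$ and every classifier is optimal; or $y=\sign(x_1)$ with $f^*=B(\sign(x_1),\sign(x_2))$ and $B$ ignoring its second argument. What is small is the \emph{labels'} influence along $V^\perp$, not the normals.

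The missing idea is the paper's averaging operator. With $W=\operatorname{span}\{(\mathbf{w}_j)_{V^\perp}\}$, replace $f$ by $f_V(\mathbf{x})=\E_{\mathbf{z}}[f(\mathbf{x}_{W^\perp}+\mathbf{z})]$. This is a mixture of proper $h_{\mathbf{z}}\in\mathcal{C}_{K,V}$ with normals $(\mathbf{w}_j)_V$ and shifted thresholds. Then pass from $f-f_V$ to $Q=S-S_V$ for a low-degree $L_1$-approximant $S$. Only $L_1$ accuracy is needed, because the replacement happens inside an expectation against a bounded multiplier: $y$, or $y\psi(yP)$ in the paper. Finally use $\E[Q\mid\mathbf{x}_{W^\perp}]=0$ together with the subspace Poincar\'e bound $\|P-P_V\|_2^2\le K\eta$.

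\textbf{Even with that fix, your Stage~2 exponent does not follow, and the theorem's second running-time term depends on it.} Thresholding the spectrum of the labels' degree-$\ell$ Hermite projection $P$ gives only $\dim V\le\tr(\mathbf{M}(P))/\eta\le\ell/\eta=\widetilde O(K^3/\eps^4)$, and per-level thresholds on the flattenings do no better. The search then costs $(K/\eps)^{\widetilde O(K^4/\eps^4)}$, not $(K/\eps)^{O(K^3/\eps^{2.5})}$. Your guess $m\lesssim\ell^{1.25}$ has no mechanism behind it.

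In fact, with the averaging argument your unregularized projection would suffice for the structural step at degree $\widetilde O(K^2/\eps^2)$, provided $V$ is the large-eigenvalue space of $\mathbf{M}(P)$. The reason is that $Q\in\cP_\ell$ is orthogonal to $\E[y\mid\mathbf{x}]-P$, up to estimation error. So the dimension count is the real obstruction to the stated bound.

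The paper gets $O(K^2/\eps^{2.5})$ by fitting $P$ with the spectral regularizer $\nu\tr(\mathbf{M}(P)^{1/2})$, which gives $\dim V\le\tr(\mathbf{M}^{1/2})/\sqrt{\eta}=O(1/(\nu\sqrt{\eta}))$. The price is paid in the first-order ``Gain'' argument, which must absorb $\nu\tr(\mathbf{M}(Q)^{1/2})\le\nu\sqrt{2K\cdot O(K^2/\tau)}$. That is why the approximant needs a gradient bound and $\rank(\mathbf{M}(Q))\le 2K$, and why $\nu=\Theta(\eps^{3/2}/K^{3/2})$.

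Your Stage~3 matches the paper (cover at scale $\eps/K$, per-cell majority, ERM). The one difference is that a union bound over all Boolean maps adds a $2^K$ term to the validation size; alternatively, use a VC bound.
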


For Boolean functions of $K$ halfspaces, the improper learner by Klivans et al. \cite{KOS08} runs in time $d^{O(K^2/\eps^4)}$, while the result of Pesenti et al. \cite{PSW26} improves this to $d^{\widetilde{O}(K^2/\eps^2)}$. This matches the dimension dependence in the running time of our proper learner.

Moreover, in the case of a single halfspace $(K = 1)$, we give an algorithm with a slightly faster time $d^{O(1/\eps^2)}$ (\Cref{thm:main}), matching the best known improper polynomial regression exponent under Gaussian marginals \cite{KKMS08,DKN10}. See \Cref{app:intersections} for proper agnostic learning of intersections of halfspaces.
 %Moreover, this rate is essentially optimal in the statistical query model \cite{Kea98}: Diakonikolas et al. \cite{DKPZ21} obtained optimal SQ lower bounds for agnostically learning halfspaces under Gaussian marginals.

\subsection{Algorithm overview}

For the sake of simplicity, let us first focus on the problem of proper agnostic learning of a single halfspace. We discuss the more general case of Boolean functions of halfspaces in  \Cref{sec:extension-functions-halfspaces}. Recall that the previous best run-time for this task was achieved by the algorithm of \cite{DiakonikolasKaneKontonisTzamosZarifis2021} that follows the outline below:
\begin{enumerate}
    \item Compute a degree-$1/\eps^4$ polynomial $P$ that approximately minimizes the $L_2$ error, i.e. minimizing $\sqrt{\E_{\mathbf{x},y \sim D}[(y - P(\mathbf{x}))^2]}$.
    \item Compute a matrix  $\mathbf{M}(P)=\E[\nabla P(\mathbf{x})\nabla P(\mathbf{x})^\top]$, and set $V$ to be the span of eigenvectors of $\mathbf{M}(P)$ whose eigenvalues are sufficiently large.
    \item Perform a brute-force search among all halfspaces with normals in $V$ and select the most accurate halfspace in this collection (as measured on a validation set). 
\end{enumerate}

A key obstacle for improving the run-time of \cite{DiakonikolasKaneKontonisTzamosZarifis2021} is that the first step above is based on $L_2$ regression, and one \textbf{needs} degree-$1/\eps^4$ polynomial to $\eps$-approximate a halfspace in $L_2$ error. Yet, even writing down a degree-$1/\eps^4$ polynomial requires $d^{\Omega(1/\eps^4)}$ time and space.

Instead of using the $L_2$ loss, we use the following regularized convex objective
\begin{equation}\label{logistic-loss}
\cL_{\mu,\nu}(P)
:=
\E[\ell(yP(\mathbf{x}))]
+
\mu\E[P(\mathbf{x})^2]
+
\nu\,\tr\big(\mathbf{M}(P)^{1/2}\big),
\end{equation}
where the loss term $\E[\ell(yP(\mathbf{x}))]$ is chosen so that its negative derivative\footnote{The loss $\ell(u)$ is decreasing in the margin $u$, $\ell'(u)\le 0$. We define $\psi(u):=-\ell'(u)$ so that $\psi$ is nonnegative.} $\psi(u):=-\ell'(u)$ satisfies the following properties:
\begin{enumerate}
    \item \textbf{Boundedness:} $\psi$ is uniformly bounded.
    \item \textbf{Lipschitzness:} $\psi$ is Lipschitz.
    \item \textbf{Label-flip symmetry:}
    $
    \psi(yP(\mathbf{x}))+\psi(-yP(\mathbf{x}))=1.
    $
\end{enumerate}

Boundedness of the negative derivative $\psi$ makes sure the function $\ell$ does not grow too fast. At a high level, this allows us to argue the correctness of our algorithms using the existence of a polynomial that approximates halfspaces in $L_1$ norm (and satisfies a number of other technical conditions).
This, in turn, allows us to use a lower-degree polynomial because, unlike the case of the $L_2$ norm, approximating halfspaces under the Gaussian in $L_1$ norm (\Cref{lem:q-approx}) requires only degree $O(1/\eps^2)$. 

Lipschitzness of $\psi$ provides the needed smoothness; in particular, it allows us to control the change in the objective under small perturbations of the polynomial $P$, which is an important technique in our analysis (\Cref{lem:no-large-gain}). The label-flip symmetry gives a useful decomposition of the loss derivative (\Cref{lem:label-flip-decomposition}): it separates the label-dependent term from the term depending only on the fitted polynomial $P$.

The logistic loss $\ell(u):=\log(1+e^{-u})$ \cite{Berkson1944,Cox1958}, whose negative derivative is $\psi(u):=-\ell'(u)=1/(1+e^u)$, satisfies the boundedness, Lipschitzness, and label-flip symmetry conditions. Other losses, including the probit loss and the Huberized hinge loss, also fit this framework. In this paper, we use the logistic loss, allowing us to work with a classical and well-studied convex surrogate. 

The term $\E[P(\mathbf{x})^2]$ is included mainly for making the objective strongly convex; see Appendix~\ref{app:compute_reg}. This rules out the well-known pathology of unregularized logistic regression in separable or degenerate cases, where the infimum is approached only as the coefficient norm diverges \cite{AlbertAnderson1984}.

The term $\tr(\mathbf{M}(P)^{1/2})$ controls the spectrum of $\mathbf{M}(P)=\E[\nabla P(\mathbf{x})\nabla P(\mathbf{x})^\top]$. Since the final search subspace $V$ is defined by the large-eigenvalue directions of $\mathbf{M}(P)$, this regularizer helps keep the search space low-dimensional; see \Cref{lem:dim-bound}. It also has a natural convex-analytic interpretation: by \Cref{app:lem:nuclear-hermite}, $\tr(\mathbf{M}(P)^{1/2})$ is the nuclear norm (\Cref{app:nuclear-def}) of the Hermite coefficient matrix of $\nabla P$.
Overall, the trace of $\mathbf{M}(P)^{1/2}$ can be viewed as a convex surrogate for controlling the rank of $\mathbf{M}(P)$, in the same way that the lasso is used as a convex surrogate for sparsity \cite{Tibshirani1996, Donoho2006}.

Following this approach yields the single-halfspace learner given below.
\begin{algorithm}[H]
\caption{\protect\algoname{1}}
\label{alg:proper-halfspaces}
\begin{algorithmic}[1]
\Procedure{Agnostic-Proper-Learner}{$\eps,\delta,D$}
\State \textbf{Input:} $\eps>0$, $\delta>0$ and sample access to a distribution $D$ on $\R^d\times\{\pm1\}$
\State \textbf{Output:} A halfspace $h$ such that $\err_D(h)\le \OPT+\eps$ with probability at least $1-\delta$
\State Set $k \gets C_0 \eps^{-2}$ and $\eta \gets \eps^2/C_0$ \hfill $\triangleright$ $C_0$ is a sufficiently large absolute const
\State Set $\mu\gets 1/128$ and $\nu\gets c_\nu\eps^{3/2}$ \hfill $\triangleright$ $c_\nu$ is a sufficiently small absolute const
\State Find $P\in\cP_k$ such that $\cL_{\mu,\nu}(P)\le \inf_{P'\in\cP_k} \cL_{\mu,\nu}(P')+O(\eps^3)$ using \Cref{prop:compute-P}
\State Let $\mathbf{M}(P)=\E[\nabla P(\mathbf{x})\nabla P(\mathbf{x})^\top]$
\State Let $V$ be the span of the eigenvectors of $\mathbf{M}(P)$ whose eigenvalues are at least $\eta$
\State Construct a  $\eps$-cover $\cH$ of halfspaces with normals in $V$ using 
\Cref{fact:cover}
\State Draw $N_2=O\!\left(\log(|\cH|/\delta)/\eps^2\right)$ i.i.d. samples from $D$ and let $\widehat D$ be the empirical distribution
\State $h \gets \arg\min_{h'\in\cH}\err_{\widehat D}(h')$
\State \Return $h$
\EndProcedure
\end{algorithmic}
\end{algorithm}

\subsection{Extension to functions of halfspaces}
\label{sec:extension-functions-halfspaces}

We show how to extend the approach above from a single halfspace to arbitrary Boolean functions of several halfspaces. The core extension is based on the same principle as in the halfspace case: a polynomial is used as a device for identifying the low-dimensional structure on which a near-optimal classifier depends. Once this structure is recovered, we discard the polynomial and perform a search inside the original concept class.

The main additional difficulty is that the optimal classifier may now combine several halfspaces through an arbitrary Boolean rule. Thus, the decision boundary need not be close to a single hyperplane. Nevertheless, such classifiers still depend on the input only through the normals of the underlying halfspaces. Consequently, any near-optimal function of halfspaces varies meaningfully only along a low-dimensional subspace generated by these normals.

Our regularized polynomial objective \eqref{logistic-loss} is designed to recover such subspace. As before, we optimize a convex surrogate loss over low-degree polynomials (\Cref{prop:compute-P}) and then examine the gradient matrix $\mathbf{M}(P)=\E[\nabla P(\mathbf{x})\nabla P(\mathbf{x})^\top]$ of the computed polynomial. The spectral regularizer prevents the polynomial from spreading its variation across too many directions, so the large-eigenvalue eigenspace captures the directions that are relevant for the near-optimal function of $K$ halfspaces. The structural part of the argument is formalized in \Cref{thm:structural-B}, which shows that this large-eigenvalue subspace contains the directions needed to approximate a near-optimal function of halfspaces.

The main technical difference from the case of proper agnostic learning of a single halfspace is in the analysis --- we can no longer average directions one at a time. Instead, we need to average over an entire subspace of low-influence directions. This is where the subspace Gaussian Poincaré inequality (\Cref{app:subspace_gp}) is used. This smoothing step is formalized in \Cref{lem:subspace-poincare} and is the key ingredient that lets the dimension-reduction argument extend from a single halfspace to Boolean functions of halfspaces. 

Once this subspace is identified, we construct a  $\eps$-cover of functions of $K$ halfspaces restricted to this subspace and search for the best candidate in the cover. The covering argument, stated in \Cref{fact:cover}, shows that this cover contains a proper hypothesis whose error is close to the best achievable error among all Boolean functions of $K$ halfspaces. Finally, empirical risk minimization over the cover selects such a hypothesis with high probability. \Cref{lem:boolean-map-erm} and \Cref{fact:cover} together show how to perform this optimization step efficiently. Combining these ingredients gives the proper agnostic learning guarantee in \Cref{thm:main-B}.
\textbf{\begin{algorithm}[H]
\caption{Agnostic Proper Learner for Boolean functions of Halfspaces}
\label{alg:proper-boolean}
\begin{algorithmic}[1]
\Procedure{Agnostic-Proper-Learner-Boolean-Functions}{$K,\eps,\delta,D$}
\State \textbf{Input:} $K\ge 2$, $\eps>0$, $\delta>0$ and sample access to a distribution $D$ on $\R^d\times\{\pm1\}$
\State \textbf{Output:} A hypothesis $h\in\mathcal{C}_K$ such that $\err_D(h)\le \OPT_K+\eps$ with probability at \\
\hspace*{3em} least $1-\delta$
\State Set $k \gets C_0 K^2\log(1/\eps)/\eps^2$, $\eta \gets \eps^2/(C_0K)$ \hfill $\triangleright$ $C_0$ is a sufficiently large absolute const
\State Set $\mu\gets 1/128$, $\nu\gets c_\nu\eps^{3/2}/K^{3/2}$ \hfill $\triangleright$ $c_\nu$ is a sufficiently small absolute const
\State Find $P\in\cP_k$ such that $\cL_{\mu,\nu}(P)\le \inf_{P'\in\cP_k} \cL_{\mu,\nu}(P')+O(\eps^3)$ using \Cref{prop:compute-P} 
\State Let $\mathbf{M}(P)=\E[\nabla P(\mathbf{x})\nabla P(\mathbf{x})^\top]$
\State Let $\mathbf{V}$ be the span of the eigenvectors of $M(P)$ whose eigenvalues are at least $\eta$
\State Construct a $c\eps/K$-cover $\cH$ of halfspaces with normals in $\mathbf{V}$ using 
\Cref{fact:cover}
\State Draw
$
N_2=O\!\left((K\log|\cH|+2^K+\log(1/\delta))/\eps^2\right)
$
i.i.d. samples from $D$ and let $\widehat D$ be\\
\hspace*{3em} the empirical distribution
\State For each tuple $(h_1,\dots,h_K)\in\cH^K$, compute the empirically optimal Boolean map \\
\hspace*{3em} $B:\{\pm1\}^K\to\{\pm1\}$ using \Cref{lem:boolean-map-erm}
\State Return
$
h \gets \arg\min_{\substack{h_1,\dots,h_K\in\cH\\ B:\{\pm1\}^K\to\{\pm1\}}}
\err_{\widehat D}\!\big(\mathbf{x}\mapsto B(h_1(\mathbf{x}),\dots,h_K(\mathbf{x}))\big)
$
\EndProcedure
\end{algorithmic}
\end{algorithm}}
\subsection{Related Work}

\paragraph{Proper Learning.}

Improperness is a well-known limitation of the agnostic-learning-from-approximation paradigm of \cite{KKMS08}, and there has been significant interest in designing proper agnostic learning algorithms matching its run-time. Despite this interest, few such algorithms are known: in addition to this work and the already-discussed \cite{DiakonikolasKaneKontonisTzamosZarifis2021}, algorithms exist for proper agnostic learning of monotone functions \cite{lange2022properly, lange2025agnostic} and convex functions \cite{pinto2025learning}. The algorithm of \cite{mehta2002decision} can also match the run-time of \cite{KKMS08} for proper agnostic learning of decision trees from uniform samples. Our work adds intersections of halfspaces and Boolean functions of halfspaces to this list, while also improving the run-time for proper agnostic learning of halfspaces.

In the \emph{weakly agnostic} setting, where the goal is a classifier with error $O(\OPT)+\eps$ rather than $\OPT+\eps$, the state-of-the-art algorithms \cite{awasthi2017power,diakonikolas2018learning,diakonikolas2022learning, diakonikolas2020non} for learning halfspaces under the Gaussian distribution are already proper. %Similarly, \cite{daniely2015ptas} gave a poly-time improper weakly-agnostic algorithm with error $(1+\mu)\OPT+\eps$ for arbitrarily small absolute constant $\mu$, and \cite{DiakonikolasKaneKontonisTzamosZarifis2021} matched its run-time and accuracy with a proper hypothesis. 
These weakly-agnostic results are specific to halfspaces and do not extend to intersections or functions of halfspaces.

In the \emph{realizable}\footnote{The work of \cite{diakonikolas2018learning} also handles \emph{nasty noise}, with error polynomial in the noise rate.} setting --- where labels are described exactly by a function in the class --- \emph{partially proper} algorithms are known for learning intersections of halfspaces over the Gaussian distribution \cite{vempala2010learningconvex,vempala2010randomsampling,vempala2010corrigendum,diakonikolas2018learning}: when learning an intersection of $k$ halfspaces, they return an intersection of $k'>k$ halfspaces. Prior to our work, no efficient algorithm was known for learning the class of intersections of $k$ halfspaces that gives a hypothesis that  is itself an intersection of $k$ halfspaces (even in the realizable setting). For Boolean functions of halfspaces, even partially proper efficient algorithms were unknown.

Additionally, in the realizable setting, the recent work of \cite{alman2026learning} gives an improper learning algorithm for intersections of halfspaces under arbitrary data distributions with run-time $2^{\sqrt{d}(\log d)^{O(k)}} \poly(1/\eps)$.

So far we have discussed only proper learning algorithms with access to i.i.d.\ labeled samples. Proper agnostic learning has also been studied with queries; see \cite{ehrenfeucht1989learning, blanc2022properly} for such work on decision trees and \cite{diakonikolas2024agnostically} for work on halfspaces and RELUs.

\paragraph{Convex surrogates.} Convex surrogate losses have long been central in statistics and machine learning, beginning with logistic models \cite{Berkson1944, Cox1958} and later support vector machines and boosting \cite{CortesVapnik1995, FriedmanHastieTibshirani2000}. Although the $0$-$1$ loss is hard to optimize directly due to discontinuity and nonconvexity, convex surrogates are more than computational relaxations: they yield a margin-based formulation in which excess surrogate risk can be related to excess classification risk \cite{BartlettJordanMcAuliffe2006}.

\paragraph{Regularization.} Regularization was developed to address instability in statistical estimation, particularly for shrinkage and ill-posed problems \cite{HoerlKennard1970,TikhonovArsenin1977}. Norm penalties control function-class size, and thus margin, complexity, and generalization \cite{CortesVapnik1995,BartlettMendelson2002,ScholkopfSmola2002}. Different regularizers correspond to different structure: $L_2$ controls norm, $L_1$ promotes sparsity, and nuclear-norm regularization captures low-rank structure \cite{Tibshirani1996,SrebroRennieJaakkola2004,RechtFazelParrilo2010}.

\section{Notation}
\label{app:notation}
We work with distributions $D$ over pairs $(\mathbf{x},y)\in\R^d\times\{\pm1\}$, where $\mathbf{x}\sim\cN(0,\mathbf{I}_d)$. The label $y$ may depend on $\mathbf{x}$ arbitrarily. For a classifier $h$, write $\err_D(h):=\Pr_{(\mathbf{x},y)\sim D}[h(\mathbf{x})\neq y]$.

For the pointwise loss we use $\ell$, and for the corresponding population objective we use $\cL$. With regularization, we write $\cL_{\mu,\nu}$, where $\mu>0$ and $\nu\ge0$. We reserve $\lambda$ for eigenvalues, and $\|\cdot\|_*$ denotes the nuclear norm.

Let $\cP_k$ be the space of polynomials on $\R^d$ of degree at most $k$. Let $\mathcal{C}$ be the class of halfspaces $\sign(\langle \mathbf{w},\mathbf{x}\rangle+t)$ with $\|\mathbf{w}\|_2=1$, together with the two constant classifiers. If $V\subseteq\R^d$ is a subspace, $\mathcal{C}_V$ denotes the subclass whose normal vectors lie in $V$, again including the two constant classifiers. For $K\ge1$, let $\mathcal{C}_K$ be the class of arbitrary Boolean functions of $K$ halfspaces, $\mathcal{C}_K := \{\mathbf{x}\mapsto B(f_1(\mathbf{x}),\dots,f_K(\mathbf{x})): f_1,\dots,f_K\in\mathcal{C},\ B:\{-1,+1\}^K\to\{-1,+1\}\}$. Similarly, $\mathcal{C}_{K,V}$ denotes the subclass in which all halfspace normal vectors lie in $V$. We write $\OPT:=\inf_{f\in\mathcal{C}}\err_D(f)$, $\OPT_K:=\inf_{f\in\mathcal{C}_K}\err_D(f)$, and more generally $\OPT_{\mathcal F}:=\inf_{f\in\mathcal F}\err_D(f)$.

For a subspace $V\subseteq\R^d$, write $V^\perp$ for its orthogonal complement and $\mathbf{x}_V$ for the orthogonal projection of $\mathbf{x}$ onto $V$. We use $O(\cdot)$, $\Theta(\cdot)$, and $o(\cdot)$ with absolute hidden constants unless stated otherwise, and write $\poly(\cdot)$ for a polynomial in its arguments.

\section{Proper Learning of Halfspaces}\label{halfspace}

The single-halfspace learner is restated below.
\begin{algorithm}[H]
\caption*{Algorithm \ref{alg:proper-halfspaces} (restated): \protect\algoname{1}}

\end{algorithm}

We will also use the following standard facts.
To compute a polynomial satisfying the required near-optimality condition for this objective, we use the following result based on a standard uniform convergence argument.
\begin{proposition}[Regularized logistic-loss polynomial regression]\label{app:prop:compute-P}
Fix $\mu>0$ to be an absolute constant and let $\nu\ge 0$. Let $D$ be a distribution over $(\mathbf{x},y)\in\R^d\times\{\pm1\}$ whose $\mathbf{x}$-marginal is $\cN(0,\mathbf{I}_d)$. Let $k\in\mathbb Z_+$ and $\eps,\delta>0$. There is an algorithm that draws
$
N=(dk)^{O(k)}\frac{\log(1/\delta)}{\eps^2}
$
samples from $D$, runs in time $\poly(N,d)$, and outputs a polynomial $P\in\cP_k$ such that
$
\cL_{\mu,\nu}(P)\le \inf_{P'\in\cP_k}\cL_{\mu,\nu}(P')+\eps
$
with probability at least $1-\delta$.
\end{proposition}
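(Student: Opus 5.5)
Our plan is to work in the orthonormal Hermite basis of $\cP_k$. Write $P(\mathbf{x})=\sum_{|\alpha|\le k} c_\alpha H_\alpha(\mathbf{x})$ with coefficient vector $\mathbf{c}\in\R^{m}$, where $m=\binom{d+k}{k}\le (dk)^{O(k)}$. In these coordinates $\E[P^2]=\|\mathbf{c}\|_2^2$ holds exactly and needs no estimation. The gradient $\nabla P$ is linear in $\mathbf{c}$, and its coordinates again expand in Hermite polynomials of degree $\le k-1$. So $\mathbf{M}(P)=\mathbf{A}(\mathbf{c})\mathbf{A}(\mathbf{c})^\top$, where $\mathbf{A}(\mathbf{c})$ is the $d\times m'$ Hermite coefficient matrix of $\nabla P$ and depends linearly on $\mathbf{c}$. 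Consequently $\tr(\mathbf{M}(P)^{1/2})=\|\mathbf{A}(\mathbf{c})\|_*$, which is a convex function of $\mathbf{c}$ that we can evaluate exactly without samples (this is the content of \Cref{app:lem:nuclear-hermite}). The only term requiring samples is therefore $\E[\ell(yP(\mathbf{x}))]$.

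The algorithm draws $N$ samples and solves the convex program
\[
\min_{\mathbf{c}:\ \|\mathbf{c}\|_2\le R}\ \widehat{\E}[\ell(yP_{\mathbf{c}}(\mathbf{x}))]+\mu\|\mathbf{c}\|_2^2+\nu\|\mathbf{A}(\mathbf{c})\|_*
\]
to accuracy $\eps/3$ by the ellipsoid method or projected subgradient descent, in time $\poly(N,d)$. To fix the radius $R$, we compare with $P=0$. Since $\ell\ge 0$ and $\cL_{\mu,\nu}(0)=\log 2$, any minimizer, and any near-minimizer of either the empirical or the population objective, satisfies $\mu\|\mathbf{c}\|_2^2\le \log 2+\eps$. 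Because $\mu$ is an absolute constant, we can take $R=O(1)$ without losing optimality, for both the empirical and the population problem.

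The main step is uniform convergence over the ball $\{\|\mathbf{c}\|_2\le R\}$, that is, bounding
\[
\sup_{\|\mathbf{c}\|\le R}\big|\widehat{\E}[\ell(yP_{\mathbf{c}})]-\E[\ell(yP_{\mathbf{c}})]\big|\le\eps/3 .
\]
Since $\ell$ is $1$-Lipschitz with $\ell(u)\le \log 2+|u|$, we have $|\ell(yP_{\mathbf{c}})-\ell(yP_{\mathbf{c}'})|\le |\langle \mathbf{c}-\mathbf{c}',\mathbf{H}(\mathbf{x})\rangle|$, where $\mathbf{H}(\mathbf{x})$ is the vector of Hermite polynomials. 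By Rademacher contraction, the Rademacher complexity is at most $R\sqrt{\E\|\mathbf{H}(\mathbf{x})\|_2^2/N}=R\sqrt{m/N}$. The obstacle is concentration, because the losses are unbounded. We plan to handle this by truncation. Hypercontractivity gives $\Pr[|P_{\mathbf{c}}(\mathbf{x})|>t]\le \exp(-\Omega(t^{2/k}))$ for $\|\mathbf{c}\|\le R$. We can also bound $\|\mathbf{H}(\mathbf{x})\|_2\le B=(dk)^{O(k)}$ outside an event of probability $\eps/(dk)^{O(k)}$, and its contribution to the expectations is $O(\eps)$ by Cauchy--Schwarz and hypercontractive moment bounds. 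On the truncated event the losses are bounded by $O(RB)$, and McDiarmid's inequality gives deviation $O(RB\sqrt{\log(1/\delta)/N})$. Both terms are at most $\eps/3$ once $N=(dk)^{O(k)}\log(1/\delta)/\eps^2$. Equivalently, one can use an $\eps$-net of the ball of size $(R/\eps)^{m}$, combined with Chebyshev and median-of-means. Either route yields the stated sample complexity.

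Combining the pieces: let $\hat{\mathbf{c}}$ be the empirical $\eps/3$-minimizer and $\mathbf{c}^*$ the population minimizer, which lies in the ball. Then
\[
\cL(\hat{\mathbf{c}})\le \widehat{\cL}(\hat{\mathbf{c}})+\eps/3\le \widehat{\cL}(\mathbf{c}^*)+2\eps/3\le \cL(\mathbf{c}^*)+\eps .
\]
The regularizers cancel in these comparisons because they are computed exactly. The step we expect to need the most care is the truncation in the uniform-convergence argument; the remaining steps are routine convex optimization.
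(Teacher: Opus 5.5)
Your outline matches the paper's: Hermite coordinates, the regularizer evaluated exactly as $\|\mathbf{A}(\mathbf{c})\|_*$, restriction to the ball of radius $R=\sqrt{\log 2/\mu}$, the contraction bound $R\sqrt{m/N}$, and the same three-step comparison chain. The gap is in the truncation/concentration step.

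\textbf{The gap.} Your algorithm minimizes the \emph{untruncated} empirical risk $\widehat{\E}[\ell(yP_{\mathbf{c}}(\mathbf{x}))]$, but you only control the tail region $\{\|\mathbf{H}(\mathbf{x})\|_2>B\}$ inside the population expectation. The empirical average also contains samples with $\|\mathbf{H}(\mathbf{x}_i)\|_2>B$. On those the loss can be as large as $\log 2+R\|\mathbf{H}(\mathbf{x}_i)\|_2$, and McDiarmid, which needs bounded differences, says nothing about them.

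The natural repair is to require that no sample falls outside the truncation ball, i.e.\ $\Pr[\|\mathbf{H}(\mathbf{x})\|_2>B]\le\delta/N$. Already for $d=1$ this is costly:
\begin{itemize}
\item Since $\|\mathbf{H}(x)\|_2^2\ge H_k(x)^2\approx x^{2k}/k!$ for large $|x|$, the requirement forces $B^2\gtrsim(2\log(N/\delta)/k)^{k}$.
\item McDiarmid charges the full range $RB$ in the $\sqrt{\log(1/\delta)/N}$ term, so you need $N\gtrsim R^2B^2\log(1/\delta)/\eps^2$.
\item Hence $N$ picks up an extra factor of order $(\log(1/\delta)/k)^{k}$, and similarly $(\log(1/\eps)/k)^{k}$.
\end{itemize}
This is not of the form $(dk)^{O(k)}\log(1/\delta)/\eps^2$ once $\log(1/\delta)$ is superpolynomial in $dk$, and the proposition is stated for all $\eps,\delta$. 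Your claim that one can take $B=(dk)^{O(k)}$ with failure probability $\eps/(dk)^{O(k)}$ hides exactly this $\eps$- and $\delta$-dependence. The net-plus-median-of-means alternative does not close the gap either. A median-of-means objective is not convex, and enumerating a net of size $(R/\eps)^m$ does not run in $\poly(N,d)$ time.

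\textbf{How the paper avoids it.} The paper builds the truncation into the algorithm itself. It minimizes $\widehat{\cL}^\Lambda_{\mu,\nu}$, in which every sample with $\|\Phi(\mathbf{x}_i)\|_2>\Lambda$ contributes the constant $\ell_0$. This objective is still convex, because the indicator does not depend on $\mathbf{c}$.
\begin{itemize}
\item Only the \emph{population} truncation error then matters. The crude Markov-type bound $R\,\E\|\Phi\|_2^2/\Lambda=Rm/\Lambda$ with $\Lambda=8Rm/\eps$ suffices, so no hypercontractivity is needed.
\item For concentration, the paper uses Bousquet's Bernstein-type inequality instead of McDiarmid. The variance proxy is only $\E[h_{\mathbf{c}}^2]\le R^2$, and the range $R\Lambda$ enters only through the $R\Lambda\log(e/\delta)/N$ term.
\end{itemize}
Together these give $N=O(m\log(e/\delta)/\eps^2)$, even though $\Lambda$ is polynomial in $1/\eps$. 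Truncating inside the empirical objective, combined with a variance-sensitive rather than range-sensitive concentration inequality, is the idea your proof is missing.
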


A proof can be found in Appendix~\ref{app:compute_reg}. We remind our reader that this polynomial $P$ is used to construct the matrix $\mathbf{M}(P)=\E[\nabla P(\mathbf{x})\nabla P(\mathbf{x})^\top]$.

We will also use the fact that linear threshold functions can be approximated well by low-degree polynomials. 
\begin{theorem}[Gaussian $L_1$ approximation]\label{app:thm:threshold-approx}
There exists an absolute constant $C_{\mathrm{HS}}$ with the following property. For every $\tau\in(0,1/10)$ and every shift $b\in\R$, there exists a univariate polynomial $S:\R\to\R$ of degree at most $C_{\mathrm{HS}}\tau^{-2}$ such that
$$
\E_{x\sim\cN(0,1)}[\,|S(x)-\sign(x+b)|\,]\le \tau,
$$
$$
\E_{x\sim\cN(0,1)}[S'(x)^2]\le C\tau^{-1},
$$
$$
\E_{x\sim\cN(0,1)}[S(x)^2]\le 5.
$$

\end{theorem}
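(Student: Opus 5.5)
The plan is to build $S$ by smoothing the step function $\sign(x+b)$ with the Ornstein--Uhlenbeck semigroup and then truncating its Hermite expansion. The smoothing controls the $L_1$ error and the derivative energy; the truncation is where the degree $\tau^{-2}$ comes from.

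\emph{Reduction.} First dispose of large shifts. If $|b|\ge C\sqrt{\log(1/\tau)}$, then $\Pr[\sign(x+b)\neq\sign(b)]\le\tau$, so the constant polynomial $S\equiv\sign(b)$ satisfies all three bounds trivially. Henceforth assume $|b|\le C\sqrt{\log(1/\tau)}$.

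\emph{Smoothing.} Let $f(x)=\sign(x+b)$ and set $g=T_\rho f$, where $T_\rho$ is the Ornstein--Uhlenbeck operator and $\rho=1-\tau^2$ (the constant to be tuned). Explicitly,
$$g(x)=2\Phi\!\left(\frac{\rho x+b}{\sqrt{1-\rho^2}}\right)-1,$$
which satisfies $|g|\le1$ and $\E[g^2]\le1$.
\begin{itemize}
\item \emph{$L_1$ error.} Because $f$ is a threshold, $\E|f-g|$ is the probability mass of a window of width about $\sqrt{1-\rho^2}\approx\tau$ around $-b$, up to the $\rho$-rescaling. This gives $\E|f-g|\lesssim\tau$.
\item \emph{Derivative energy.} We have $g'(x)=\frac{2\rho}{\sqrt{1-\rho^2}}\,\phi\!\left(\frac{\rho x+b}{\sqrt{1-\rho^2}}\right)$. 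Squaring and integrating against the Gaussian gives
$$\E[g'^2]\lesssim\frac{1}{1-\rho^2}\cdot\sqrt{1-\rho^2}\approx\tau^{-1}.$$
\end{itemize}

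\emph{Truncation.} Write $f=\sum_j \hat f_j h_j$ in Hermite polynomials, so that $\hat g_j=\rho^j\hat f_j$, and let $S$ be the degree-$k$ truncation of $g$. The key input is the standard bound $\hat f_j^2\lesssim j^{-3/2}$ for the Hermite coefficients of a threshold, uniformly in $b$.
\begin{itemize}
\item \emph{Second moment.} Truncation only lowers the norm: $\E[S^2]\le\E[g^2]\le1\le5$.
\item \emph{Derivative energy.} Since $\E[S'^2]=\sum_{j\le k} j\hat g_j^2\le\E[g'^2]$, this is at most $C\tau^{-1}$.
\item \emph{Truncation error.} By Cauchy--Schwarz,
$$\E|g-S|\le\Big(\sum_{j>k}\rho^{2j}\hat f_j^2\Big)^{1/2}.$$
With $k=C\tau^{-2}$, the tail $\sum_{j>k} j^{-3/2}$ is about $k^{-1/2}\approx\tau$, so this bound is only about $\sqrt\tau$. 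That is too weak.
\end{itemize}

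\emph{Main obstacle.} The difficulty is getting $L_1$ error $\tau$ at degree $\tau^{-2}$, rather than the $L_2$ error that Cauchy--Schwarz gives. I would instead invoke the known $L_1$ approximation result (Diakonikolas--Gopalan--Jaiswal--Servedio--Viola / KKMS): there is a degree-$O(\tau^{-2})$ polynomial $p$ with $\E|p-f|\le\tau$. One natural construction is via a Jackson-type approximation to the smoothed $g$, combined with Gaussian hypercontractivity to compare the $L_1$ and $L_2$ norms on the bulk and the tails separately.

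To secure the remaining two bounds for this $p$, modify it to $S=T_{\rho'}p$ with $1-\rho'\approx\tau^2$. Then:
\begin{itemize}
\item $\E|S-f|\le\E|T_{\rho'}(p-f)|+\E|T_{\rho'}f-f|\lesssim\tau$, using that $T_{\rho'}$ is an $L_1$ contraction.
\item The derivative energy $\sum_j j\rho'^{2j}\hat p_j^2$ is controlled by the same computation as for $g$. Since $p$ has degree at most $k$, on $j\le k$ this reduces to $\hat p_j$ being close to $\hat f_j$ in the weighted sum, which should follow from closeness in $L_2$ after the smoothing.
\item $\E[S^2]\le5$ follows similarly, from $p$ being $L_2$-close to $f$ after smoothing.
\end{itemize}

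Verifying these two consequences of $L_2$-closeness for the smoothed $p$ is the step I expect to need the most care. Finally, adjust the constants so that the bound holds for all $\tau<1/10$.
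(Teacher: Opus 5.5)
Your large-shift reduction, the computations for $T_\rho f$, and the $L_1$ bound for $S=T_{\rho'}p$ are fine. The gap is in the two bounds that are the real content of this statement, $\E[S^2]\le 5$ and $\E[S'^2]\lesssim\tau^{-1}$. The paper also imports the degree-$O(\tau^{-2})$ $L_1$ bound, from the FT-mollification construction of DKN10.

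You try to get these two bounds by smoothing a black-box $L_1$ approximator $p$ with $T_{\rho'}$, $1-\rho'\approx\tau^2$, and appealing to $L_2$-closeness after smoothing. No such closeness is available. On polynomials of degree $k=C\tau^{-2}$, $T_{\rho'}$ multiplies each Hermite level by at least $\rho'^{k}=e^{-O(1)}$. So $\E[(T_{\rho'}q)^2]\ge e^{-O(1)}\E[q^2]$, and smoothing at this scale cannot turn $L_1$ information into $L_2$ information.

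Concretely, for even $k$ the polynomial $q(x)=\tau x^k/\E|x|^k$ has $\E|q|=\tau$ but $\E[q^2]=\tau^2(2k-1)!!/((k-1)!!)^2=2^{\Theta(k)}\tau^2$. Then $p\pm q$ are still degree-$k$ approximators with $L_1$ error $2\tau$. Yet for at least one sign, $\E[(T_{\rho'}(p\pm q))^2]\ge e^{-O(1)}\E[q^2]$ is exponentially large in $\tau^{-2}$. The derivative bound fails for the same reason. Comparing $\sum_{j\le k}j\rho'^{2j}\hat p_j^2$ with $\sum_j j\rho'^{2j}\hat f_j^2$ goes through a term like $k\|p-f\|_2^2$. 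That needs $\|p-f\|_2^2=O(\tau)$, an $L_2$ guarantee that no $L_1$ approximation theorem gives you.

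What is missing is pointwise control of a \emph{specific} approximant, which is exactly what the paper proves. It takes the DKN10 polynomial itself: the degree-$k$ Taylor polynomial $p$ at $0$ of $x\mapsto g(x+b)$, where $g$ is the FT-mollified sign. This $g$ is monotone, $[-1,1]$-valued, and satisfies $\sup|g^{(j)}|\le (C/\tau)^j$.
\begin{itemize}
\item On the window $|x|\le R=\alpha\sqrt k$: Taylor's theorem gives $|p(x)-g(x+b)|\le (C|x|/(\tau k))^{k+1}$, and an analogous bound for $p'$. So for $k\ge C\alpha^2\tau^{-2}$ one has $|p|\le 2$ and $|p'(x)-g'(x+b)|=O(1)$ there.
\item Outside the window: the Chebyshev growth bound $|p(x)|\le 2(2|x|/R)^k$ (and its analogue for $p'$), together with $\E|x|^{2k}\le (2k)^k$, bounds the tail contributions by $4(8/\alpha^2)^k$ and $O(\tau^{-2})(8/\alpha^2)^{k-1}$ respectively.
\item Gradient on the window: $\E[g'(x+b)^2]\le \sup g'\cdot\E[g'(x+b)]\lesssim\tau^{-1}$, by monotonicity and $\int g'\le 2$.
\end{itemize}
Your suggested Jackson-type construction could be pushed through only by proving this same kind of estimate: bounded on a window of width $\Theta(\sqrt k)$, with controlled polynomial growth outside. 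The right tool for the tails is a Chebyshev/Remez growth bound, not hypercontractivity.

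As a side remark, your OU-truncation attempt is better than you estimated. Keeping the factor $\rho^{2j}$ in the tail sum shows that degree $O(\tau^{-2}\log(1/\tau))$ suffices, which is the route the paper takes in its multidimensional appendix. But that extra $\log(1/\tau)$ is precisely what this statement avoids.
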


A proof is included in Appendix~\ref{app:approx}. It is based on the one-dimensional FT-mollification construction of \cite{DKN10}; the only additional requirement we verify is that, besides $L^1$-approximation, we also require the control of $L^2$ and gradient norms. 

The lemma below shows how the label-flip symmetry can be leveraged for decomposing the loss into a label-dependent and a model-dependent component.
\begin{lemma}\label{lem:label-flip-decomposition}
Assume $\psi:\R\to\R$ satisfies $\psi(t)+\psi(-t)=1$ for all $t\in\R$, and set
$\phi(u):=\frac{1}{2}-\psi(u)$. Then $\phi$ is odd and, for all $y\in\{\pm1\}$ and $u\in\R$,
\begin{equation}\label{eq:label-flip-symmetry}
y\,\psi(yu)=\frac{1}{2}\,y-\phi(u).
\end{equation}
If $\psi$ is $L$-Lipschitz, then so is $\phi$.
\end{lemma}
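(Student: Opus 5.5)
The plan is to verify the three claims directly from the identity $\psi(t)+\psi(-t)=1$, with no earlier results needed.

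\emph{Oddness of $\phi$.} For any $u\in\R$,
\[
\phi(-u)=\tfrac12-\psi(-u)=\tfrac12-\bigl(1-\psi(u)\bigr)=-\bigl(\tfrac12-\psi(u)\bigr)=-\phi(u).
\]

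\emph{The identity \eqref{eq:label-flip-symmetry}.} I will split on the value of $y$.
\begin{itemize}
\item If $y=1$, then $y\psi(yu)=\psi(u)=\tfrac12-\phi(u)$ by the definition of $\phi$.
\item If $y=-1$, then $y\psi(yu)=-\psi(-u)$, and the symmetry hypothesis gives $-\psi(-u)=\psi(u)-1$. Rewriting, $\psi(u)-1=-\tfrac12+\bigl(\psi(u)-\tfrac12\bigr)=-\tfrac12-\phi(u)$, which equals $\tfrac12 y-\phi(u)$ because $y=-1$.
\end{itemize}
An equivalent one-line route: $\phi$ is odd, so $y\phi(yu)=\phi(u)$ for $y\in\{\pm1\}$. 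Then $y\psi(yu)=y\bigl(\tfrac12-\phi(yu)\bigr)=\tfrac12 y-\phi(u)$.

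\emph{Lipschitz constant.} Since $\phi$ differs from $-\psi$ only by a constant, $|\phi(u)-\phi(v)|=|\psi(u)-\psi(v)|\le L|u-v|$.

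None of these steps is a real obstacle. The only point needing care is the sign bookkeeping in the case $y=-1$.
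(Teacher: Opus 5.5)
Your proposal is correct and follows the paper's proof: oddness comes from the symmetry hypothesis, and your ``one-line route'' via $y\phi(yu)=\phi(u)$ is exactly the paper's argument for the identity (the case split on $y$ just checks the same thing explicitly). The Lipschitz claim holds, as you say, because $\phi$ and $-\psi$ differ by a constant.
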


\begin{proof}
The symmetry gives
$\phi(-u)=\frac{1}{2}-\psi(-u)=\frac{1}{2}-(1-\psi(u))=-\phi(u)$,
so $\phi$ is odd. Hence, since $y\in\{\pm1\}$,
$
y\,\psi(yu)
=
y\left(\frac{1}{2}-\phi(yu)\right)
=
\frac{1}{2}\,y-y\phi(yu)
=
\frac{1}{2}\,y-\phi(u).
$
Finally, $|\phi(u)-\phi(v)|=|\psi(u)-\psi(v)|\le L|u-v|$.
\end{proof}

The next lemma is the conditional version of Gaussian Poincar\'e inequality.

\begin{lemma}[Conditional Gaussian Poincar\'e]\label{lem:cond-poincare}
Let $\boldsymbol{\xi}\in\R^d$ be a unit vector and write $\mathbf{x} = \mathbf{u} + z\boldsymbol{\xi}$ with $\mathbf{u}\in \boldsymbol{\xi}^\perp$ and $z\sim\cN(0,1)$ independent. For every weakly differentiable $G:\R^d\to\R$,
$$
\E\!\left[\Big(G(\mathbf{x})-\E[G(\mathbf{x})\mid \mathbf{u}]\Big)^2\right]
\le \E\big[(\partial_{\boldsymbol{\xi}} G(\mathbf{x}))^2\big].
$$
\end{lemma}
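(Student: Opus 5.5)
We will reduce the statement to the one-dimensional Gaussian Poincaré inequality by conditioning on the component of $\mathbf{x}$ orthogonal to $\boldsymbol{\xi}$.

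\textbf{Setup.} Since $\mathbf{x}\sim\cN(0,\mathbf{I}_d)$ and $\boldsymbol{\xi}$ is a unit vector, rotation invariance gives the decomposition $\mathbf{x}=\mathbf{u}+z\boldsymbol{\xi}$. Here $\mathbf{u}$ is the orthogonal projection onto $\boldsymbol{\xi}^\perp$, distributed as a standard Gaussian on that subspace, and $z=\ip{\mathbf{x}}{\boldsymbol{\xi}}\sim\cN(0,1)$ is independent of $\mathbf{u}$. For each fixed $\mathbf{u}$, define the univariate function $g_{\mathbf{u}}(t):=G(\mathbf{u}+t\boldsymbol{\xi})$. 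By independence, $\E[G(\mathbf{x})\mid\mathbf{u}]=\E_{z}[g_{\mathbf{u}}(z)]$. We may assume $\E[(\partial_{\boldsymbol{\xi}}G)^2]<\infty$, since otherwise the claim is trivial.

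\textbf{Step 1: the one-dimensional inequality.} For a fixed $\mathbf{u}$, apply the standard one-dimensional Gaussian Poincaré inequality $\Var_{z\sim\cN(0,1)}(g(z))\le\E[g'(z)^2]$ to $g=g_{\mathbf{u}}$. By the chain rule, $g_{\mathbf{u}}'(t)=\partial_{\boldsymbol{\xi}}G(\mathbf{u}+t\boldsymbol{\xi})$. This gives
$$\E_z\big[(g_{\mathbf{u}}(z)-\E_z g_{\mathbf{u}}(z))^2\big]\le \E_z\big[(\partial_{\boldsymbol{\xi}}G(\mathbf{u}+z\boldsymbol{\xi}))^2\big].$$
If the one-dimensional inequality itself needs justification, there is a quick route. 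Expand $g$ in the Hermite basis $g=\sum_j \hat g_j h_j$. Then $\Var(g)=\sum_{j\ge1}\hat g_j^2$, while $g'=\sum_{j\ge1}\sqrt{j}\,\hat g_j h_{j-1}$, so $\E[g'^2]=\sum_{j\ge1} j\,\hat g_j^2\ge \Var(g)$.

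\textbf{Step 2: average over $\mathbf{u}$.} Take the expectation of the Step 1 inequality over $\mathbf{u}$. The left side becomes $\E[(G(\mathbf{x})-\E[G(\mathbf{x})\mid\mathbf{u}])^2]$ by the tower property, and the right side becomes $\E[(\partial_{\boldsymbol{\xi}}G(\mathbf{x}))^2]$ by Fubini. This is exactly the claim.

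\textbf{Main obstacle.} The only delicate point is regularity. For a weakly differentiable $G$, we need that for almost every $\mathbf{u}$ the slice $g_{\mathbf{u}}$ is weakly (indeed absolutely continuously) differentiable with derivative equal to the restriction of $\partial_{\boldsymbol{\xi}}G$. This is the standard ACL characterization of Sobolev functions along lines. Alternatively, one can prove the inequality for smooth, compactly supported $G$ and pass to the limit by density in the Gaussian Sobolev space, since both sides are continuous in that norm. In our applications $G$ is a polynomial composed with Lipschitz maps, so the slicewise chain rule holds directly.
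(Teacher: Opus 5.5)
Your proposal is correct and matches the paper's proof: condition on $\mathbf{u}$, apply the one-dimensional Gaussian Poincar\'e inequality to the slice $g_{\mathbf{u}}(z)=G(\mathbf{u}+z\boldsymbol{\xi})$ using $g_{\mathbf{u}}'=\partial_{\boldsymbol{\xi}}G$, and then take the expectation over $\mathbf{u}$. Your Hermite-expansion justification of the one-dimensional inequality and your slicewise regularity remarks are additions (the paper simply cites the Poincar\'e inequality), but the argument is the same.
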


A proof can be found in Appendix~\ref{app:gp}. 

\subsection{Reducing to a Low-Dimensional Search Space}\label{sec:structural}

Let $V\subseteq \R^d$ be a subspace and let
$
f(\mathbf{x})=\sign(\langle \mathbf{w},\mathbf{x}\rangle+t)
$
be a halfspace with $\|\mathbf{w}\|_2=1$. Decompose $\mathbf{w}=\mathbf{w}_{V}+\mathbf{w}_{V^\perp}$, where $\mathbf{w}_{V}$ is in $V$ and $\mathbf{w}_{V^\perp}$ is orthogonal to $V$. If $\mathbf{w}_{V^\perp}\neq 0$, let
$
\boldsymbol{\xi}:=\frac{\mathbf{w}_{V^\perp}}{\|\mathbf{w}_{V^\perp}\|_2}\in V^\perp
$
be the normalized missing direction. Decompose $\mathbf{x}=\mathbf{x}_{\boldsymbol{\xi}^\perp}+\mathbf{x}_{\boldsymbol{\xi}}$, where $\mathbf{x}_{\boldsymbol{\xi}^\perp}$ and $\mathbf{x}_{\boldsymbol{\xi}}$ are the projections of $\mathbf{x}$ onto $\boldsymbol{\xi}^\perp$ and $\operatorname{span}\{\boldsymbol{\xi}\}$, respectively. For $z\sim\cN(0,1)$ independent of $\mathbf{x}_{\boldsymbol{\xi}^\perp}$, we have a resampling identity
\begin{equation}\label{eq:resample}
\mathbf{x}=\mathbf{x}_{\boldsymbol{\xi}^\perp}+\mathbf{x}_{\boldsymbol{\xi}} \overset{\mathrm d}{=} \mathbf{x}_{\boldsymbol{\xi}^\perp}+z\boldsymbol{\xi}.
\end{equation}

\begin{definition}[Averaging a halfspace]\label{app:def:halfspace-averaging}
For each $z\in\R$, define the shifted halfspace
$
h_z(\mathbf{x}):=f(\mathbf{x}_{\boldsymbol{\xi}^\perp}+z\boldsymbol{\xi}).
$
Based on the resampling identity\;\eqref{eq:resample}, define
\begin{equation}\label{app:eq:averaged-halfspace}
f_{V}(\mathbf{x}):=\E_{z\sim\cN(0,1)}[h_z(\mathbf{x})]
=
\E_{z\sim\cN(0,1)}[f(\mathbf{x}_{\boldsymbol{\xi}^\perp}+z\boldsymbol{\xi})].
\end{equation}
If $\mathbf{w}_{V^\perp}=0$, define $f_{V}(\mathbf{x}):=f(\mathbf{x})$.
\end{definition}

For each fixed $z\in\R$, the shifted function $h_z$ belongs to $\mathcal{C}_{V}$ (i.e. the concept class of halfspaces whose normals are in $V$). Indeed, set
$
t_z:=t+z\langle \mathbf{w},\boldsymbol{\xi}\rangle.
$
Since $\boldsymbol{\xi}\in V^\perp$ and $\mathbf{w}_{V^\perp}$ is parallel to $\boldsymbol{\xi}$, we have
\begin{equation}\label{app:eq:averaged-halfspace-shift}
h_z(\mathbf{x})
=
f(\mathbf{x}_{\boldsymbol{\xi}^\perp}+z\boldsymbol{\xi})
=
\sign(\langle \mathbf{w}_{V},\mathbf{x}\rangle+t_z).
\end{equation}
Thus each shifted halfspace has normal vector in $V$. Therefore $h_z\in\mathcal{C}_{V}$ for every fixed $z$, and consequently $f_{V}\in\conv(\mathcal{C}_{V})$.

\begin{remark}
If $\mathbf{w}_{V}=0$, the corresponding shifted halfspace is constant, which we regard as a degenerate halfspace.
\end{remark}

Using \Cref{app:def:halfspace-averaging} and the definition of the regularized logistic loss $\eqref{logistic-loss}$, we can construct a subspace that contains a near-optimal halfspace.

\begin{theorem}[Existence of a Low-Dimensional Subspace for Halfspaces]\label{app:thm:structural}
There exists an absolute constant $C>0$ and $c_\nu>0$ such that the following holds. Fix $\eps\in(0,1/10)$, set
$k:=C/\eps^2$, $\mu:=1/128$, $\eta:=\eps^2/1024$, and $\nu:=c_\nu\eps^{3/2}$. Let $P\in\cP_k$ satisfy $\cL_{\mu,\nu}(P)\le \inf_{P'\in\cP_k}\cL_{\mu,\nu}(P')+O(\eps^3)$,
where the hidden constant is sufficiently small. Let $V$ be the span of the eigenvectors of $\mathbf{M}(P)$ whose eigenvalues are at least $\eta$. Then for every halfspace $f\in\mathcal{C}$, the averaged function $f_V$ from \Cref{app:def:halfspace-averaging} satisfies
\begin{equation}\label{eq:averaging-main-bound}
\E_{(\mathbf{x},y)\sim D}[(f(\mathbf{x})-f_V(\mathbf{x}))y]\le \eps.
\end{equation}
Consequently, there exists $h\in\mathcal{C}_V$ such that
$$
\E_{(\mathbf{x},y)\sim D}[(f(\mathbf{x})-h(\mathbf{x}))y]\le \eps.
$$
\end{theorem}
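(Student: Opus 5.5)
We would prove \eqref{eq:averaging-main-bound} by a perturbation argument against the near-optimality of $P$. If $\mathbf{w}_{V^\perp}=0$ there is nothing to prove, so assume $\boldsymbol{\xi}$ is defined. The driving identity comes from \Cref{lem:label-flip-decomposition}: the directional derivative of the loss term of $\cL_{\mu,\nu}$ at $P$ in a direction $Q$ is
\[
-\E[y\psi(yP)Q]=-\tfrac12\E[yQ]+\E[\phi(P)Q].
\]
So the label correlation $\E[yQ]$ of any test polynomial $Q$ equals, up to the regularizer derivatives, twice the correlation $\E[\phi(P(\mathbf{x}))Q(\mathbf{x})]$, which depends only on $P$. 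Near-optimality of $P$ up to $O(\eps^3)$ will then force $\E[yQ]$ to be small for suitable $Q$.

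The steps are as follows.
\begin{enumerate}
\item Write $f=\sign(\langle\mathbf{w},\mathbf{x}\rangle+t)$ as a univariate threshold of the Gaussian $\langle\mathbf{w},\mathbf{x}\rangle$. Apply \Cref{app:thm:threshold-approx} with $\tau=\Theta(\eps)$ to get $S$ of degree $O(\eps^{-2})\le k$. Set $Q(\mathbf{x}):=S(\langle \mathbf{w},\mathbf{x}\rangle)$ and let $Q_V$ be its average over the $\boldsymbol{\xi}$-direction, exactly as in \Cref{app:def:halfspace-averaging}; $Q_V$ is also a polynomial of degree $\le k$.
\item Using the $L_1$ guarantee and $|y|=1$, replace $\E[(f-f_V)y]$ by $\E[(Q-Q_V)y]$, paying $O(\tau)$. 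Here $\E|f_V-Q_V|\le\E|f-Q|$ because conditional averaging contracts $L_1$.
\item Set $R:=Q-Q_V$. By the conditional Poincar\'e inequality (\Cref{lem:cond-poincare}), $\E[R^2]\le\E[(\partial_{\boldsymbol{\xi}}Q)^2]=\|\mathbf{w}_{V^\perp}\|^2\E[S'^2]\le C\tau^{-1}$.
\item Perturb $P\mapsto P+sR$. Because $\psi$ is Lipschitz, the loss term is smooth, with second-order error $O(s^2\E[R^2])$; the $\mu$ term likewise contributes $O(s\,\E[PR]+s^2\E R^2)$.
\item Show that $\E[\phi(P)R]$ is small. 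Since $R$ has conditional mean zero given $\mathbf{x}_{\boldsymbol{\xi}^\perp}$,
\[
\E[\phi(P)R]=\E\bigl[(\phi(P)-\E[\phi(P)\mid\mathbf{x}_{\boldsymbol{\xi}^\perp}])R\bigr],
\]
and by Cauchy--Schwarz and \Cref{lem:cond-poincare} with $\phi$ Lipschitz this is at most $L\sqrt{\E[(\partial_{\boldsymbol{\xi}}P)^2]\,\E R^2}$. Because $\boldsymbol{\xi}\in V^\perp$, we have $\E[(\partial_{\boldsymbol{\xi}}P)^2]=\boldsymbol{\xi}^\top\mathbf{M}(P)\boldsymbol{\xi}<\eta$, so the bound is $O(\sqrt{\eta/\tau})=O(\eps^{1/2})$. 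This is not yet small enough, so it must be combined with the next step.
\item Choose $s$ to balance the terms: taking the one-sided derivative gives $\tfrac12 s\E[yR]\le s\cdot(\text{gradient terms})+O(s^2\E R^2)+O(\eps^3)+(\text{nuclear change})$.
\end{enumerate}

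The main obstacle is the trace regularizer $\nu\,\tr(\mathbf{M}(P)^{1/2})$, which is not smooth, and how it changes when $P\mapsto P+sR$. Our plan is to use convexity and the triangle inequality for the nuclear norm of the Hermite coefficient matrix. The change is then at most $\nu s\|\,\text{coefficient matrix of }\nabla R\|_*$. Since $\nabla R$ is built from $\nabla Q$ and $\nabla Q_V$, which lie in $\operatorname{span}\{\mathbf{w},\mathbf{w}_V\}$, this matrix has rank at most $2$, so its nuclear norm is $O(\sqrt{\E\|\nabla R\|^2})=O(\tau^{-1/2})$.

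It remains to choose $s$ and $\nu$ so that every term is $\le\eps$. The step-5 term $O(\sqrt{\eta/\tau})$ is sharpened by $\eta=\eps^2/1024$ together with $\tau=\Theta(\eps)$, which puts the Poincaré cross term at $O(\eps^{1/2})$ times a small constant. If that does not close, we would first try to absorb the cross term $\E[\phi(P)R]$ via the optimality of $P$ in the direction $\partial_{\boldsymbol{\xi}}$. The nuclear term $\nu\tau^{-1/2}=c_\nu\eps$ is acceptable. The quadratic term $s\,\E R^2\sim s/\eps$ and the additive term $\eps^3/s$ together force $s\approx\eps^2$, giving a contribution of order $\eps$.

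Finally, since $f_V\in\conv(\mathcal{C}_V)$ is an average of the $h_z$, some $h=h_z$ satisfies $\E[-h y]\le\E[-f_V y]$, which yields the second claim.
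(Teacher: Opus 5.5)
There is a genuine gap, and it comes from the $L_2$ bound in your Step~3. You control $\E[R^2]$ through \Cref{lem:cond-poincare} and the derivative guarantee, getting $\E[R^2]\le C\tau^{-1}$. With that bound, both first-order cross terms are too large. By Cauchy--Schwarz, $\E[\phi(P)R]\le \tfrac14\sqrt{\eta\,\E[R^2]}$ and $2\mu\,\E[PR]\le 2\mu\sqrt{\eta\,\E[R^2]}$; you list the second in Step~4 but never bound it. Both are of order $\sqrt{\eta/\tau}=\Theta(\eps^{1/2})$, which dominates $\eps$ for small $\eps$. As written, your argument therefore only yields $\E[(f-f_V)y]=O(\sqrt{\eps})$, as you acknowledge. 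Your fallback of ``absorbing the cross term via optimality of $P$ in the direction $\partial_{\boldsymbol{\xi}}$'' is not a concrete step, and no such idea is needed.

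The missing observation is that $R$ is conditionally centered. Hence $\E[R^2]=\E[\Var(Q\mid \mathbf{x}_{\boldsymbol{\xi}^\perp})]\le \E[Q^2]=\E_{g\sim\cN(0,1)}[S(g)^2]\le 5$, by the third guarantee of \Cref{app:thm:threshold-approx}. This is an $O(1)$ bound, independent of $\tau$. With it:
\begin{itemize}
\item the cross terms satisfy $\E[\phi(P)R]\le\tfrac14\sqrt{5\eta}$ and $2\mu\,\E[PR]\le 2\mu\sqrt{5\eta}$, which are small constant multiples of $\eps$ when $\eta=\eps^2/1024$;
\item the quadratic remainder is $O(s^2)$, so optimizing $s$ costs only $O(\sqrt{\alpha})=O(\eps^{3/2})$;
\item the rest of your plan then closes exactly as in the paper: the rank-$2$ nuclear-norm bound $O(\tau^{-1/2})$ (the only place the gradient bound is actually needed), the $L_1$ transfer costing $2\tau$, and the extraction of some $h_z$.
\end{itemize}

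The paper avoids the issue in a second way. It applies the label-flip decomposition to $f-f_V$ itself rather than to $R$, so the Cauchy--Schwarz partner of $\phi(P)-\E[\phi(P)\mid\mathbf{x}_{\boldsymbol{\xi}^\perp}]$ has $L_2$ norm at most $2$. It passes to the polynomial $R$ only in the term $\E[R\,y\,\psi(yP)]$, using $|y\psi(yP)|\le 1$ and the $L_1$ approximation.
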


The idea of the proof is to show that averaging the halfspace over the missing directions does not lose much correlation with the labels. The quantity $f-f_V$ measures the contribution of the directions outside $V$ that is removed by averaging. Thus, proving that $\E[(f-f_V)y]$ is small means that these missing directions do not carry significant label correlation.

The proof has four main steps. First, \Cref{lem:logistic-gradient-correlation} lower bounds the correlation of $f-f_V$ with the weighted labels $y\psi(yP)$ in terms of its correlation with the labels. Second, \Cref{lem:q-approx} approximates $f-f_V$ by a low-degree polynomial $Q$. Third, \Cref{lem:regularizer-term} shows that the regularization cost of this perturbation is small. Finally, \Cref{lem:no-large-gain} uses approximate optimality of $P$ to upper bound the possible gain from such perturbations.

\begin{proof}
Fix a halfspace $f(\mathbf{x})=\sign(\langle \mathbf{w},\mathbf{x}\rangle+t)$. If $\mathbf{w}_{V^\perp}=0$, then $f_V(\mathbf{x})=f(\mathbf{x})$, and the claim is immediate. Hence assume $\mathbf{w}_{V^\perp}\neq 0$ and let $\boldsymbol{\xi}=\mathbf{w}_{V^\perp}/\|\mathbf{w}_{V^\perp}\|_2$. 

The next lemma controls the correlation between $f-f_V$ and the negative derivative of the logistic loss, $\psi(yP(\mathbf{x}))$. We remind the reader that the function $\psi$ is defined in \eqref{logistic-loss}.

\begin{lemma}[Correlation with the logistic gradient]\label{lem:logistic-gradient-correlation}
For the averaged function $f_V$ from \Cref{app:def:halfspace-averaging},
\begin{equation}\label{eq:g-logistic-grad}
\E_{(\mathbf{x},y)\sim D}[(f(\mathbf{x})-f_V(\mathbf{x}))\,y\,\psi(yP(\mathbf{x}))]
\ge
\frac{1}{2}\E_{(\mathbf{x},y)\sim D}[(f(\mathbf{x})-f_V(\mathbf{x}))y]-\frac{1}{2}\sqrt{\eta}.
\end{equation}
\end{lemma}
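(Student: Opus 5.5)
Apply \Cref{lem:label-flip-decomposition} pointwise with $u=P(\mathbf{x})$, which gives $y\psi(yP(\mathbf{x}))=\frac12 y-\phi(P(\mathbf{x}))$. Hence
\[
\E[(f-f_V)\,y\,\psi(yP)] = \frac12\E[(f-f_V)y] - \E[(f-f_V)\,\phi(P)] .
\]
It therefore suffices to show $\bigl|\E[(f-f_V)\phi(P)]\bigr|\le \frac12\sqrt{\eta}$. The label dependence is gone, and what remains depends only on $\mathbf{x}\sim\cN(0,\mathbf{I}_d)$.

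For the remaining term, use the averaging structure. Write $\mathbf{x}=\mathbf{u}+z\boldsymbol{\xi}$ with $\mathbf{u}=\mathbf{x}_{\boldsymbol{\xi}^\perp}$. By \Cref{app:def:halfspace-averaging}, $f_V(\mathbf{x})=\E[f(\mathbf{x})\mid\mathbf{u}]$, so $f-f_V$ has conditional mean zero given $\mathbf{u}$. Consequently, for $G:=\phi\circ P$,
\[
\E[(f-f_V)G]=\E\bigl[(f-f_V)\,(G-\E[G\mid\mathbf{u}])\bigr].
\]
By Cauchy--Schwarz, and since $|f-f_V|\le 2$, this is at most $2\sqrt{\E[(G-\E[G\mid\mathbf{u}])^2]}$. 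A sharper bound holds because $\E[(f-f_V)^2]\le\E[f^2]\le 1$: $f_V$ is a conditional expectation of $f$, so $\E[(f-f_V)^2]=\E[f^2]-\E[f_V^2]\le 1$. This gives $|\E[(f-f_V)G]|\le \sqrt{\E[(G-\E[G\mid\mathbf{u}])^2]}$.

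Now apply \Cref{lem:cond-poincare} to $G$ in direction $\boldsymbol{\xi}$:
\[
\E[(G-\E[G\mid\mathbf{u}])^2]\le\E[(\partial_{\boldsymbol{\xi}}G)^2]=\E[\phi'(P)^2(\partial_{\boldsymbol{\xi}}P)^2].
\]
For the logistic loss, $\psi(u)=1/(1+e^u)$ is $\frac14$-Lipschitz, so $\phi$ is $\frac14$-Lipschitz by \Cref{lem:label-flip-decomposition}, and $|\phi'|\le\frac14$ almost everywhere. Therefore
\[
\E[(\partial_{\boldsymbol{\xi}}G)^2]\le\frac1{16}\,\E[\langle\nabla P,\boldsymbol{\xi}\rangle^2]=\frac1{16}\,\boldsymbol{\xi}^\top\mathbf{M}(P)\boldsymbol{\xi}.
\]
Because $\boldsymbol{\xi}\in V^\perp$ and $V$ is spanned by the eigenvectors of $\mathbf{M}(P)$ with eigenvalue at least $\eta$, $V^\perp$ is spanned by eigenvectors with eigenvalues below $\eta$. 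Hence $\boldsymbol{\xi}^\top\mathbf{M}(P)\boldsymbol{\xi}<\eta$, and the error term is at most $\frac14\sqrt{\eta}\le\frac12\sqrt\eta$, which proves \eqref{eq:g-logistic-grad}.

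The step needing the most care is the conditional-mean-zero manipulation. It requires $f_V(\mathbf{x})=\E[f(\mathbf{x})\mid \mathbf{u}]$ exactly, which follows from \eqref{eq:resample} and the independence of $z$ from $\mathbf{u}$. It also requires checking that $G=\phi\circ P$ is weakly differentiable with square-integrable gradient so that \Cref{lem:cond-poincare} applies; this holds because $\phi$ is Lipschitz and smooth and $P$ is a polynomial. The degenerate case $\mathbf{w}_{V^\perp}=0$ is trivial, since then $f-f_V\equiv0$.
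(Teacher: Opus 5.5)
Your proposal is correct and follows the paper's proof step for step: the label-flip decomposition, subtracting the conditional mean $\E[G\mid\mathbf{x}_{\boldsymbol{\xi}^\perp}]$, Cauchy--Schwarz, the conditional Gaussian Poincar\'e inequality, $|\phi'|\le 1/4$, and the eigenvalue bound on $V^\perp$. The only difference is that you bound $\|f-f_V\|_2\le 1$ using the Pythagorean identity for conditional expectation, where the paper uses the cruder $|f-f_V|\le 2$; this gives the slightly sharper error term $\frac14\sqrt{\eta}$ instead of $\frac12\sqrt{\eta}$.
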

\begin{proof}
Using the label-flip property\;\eqref{eq:label-flip-symmetry},
\begin{equation}\label{eq:symmetric-corollary}
\E_{(\mathbf{x},y)\sim D}[(f(\mathbf{x})-f_V(\mathbf{x}))\,y\,\psi(yP(\mathbf{x}))]
=
\frac{1}{2}\E_{(\mathbf{x},y)\sim D}[(f(\mathbf{x})-f_V(\mathbf{x}))y]
-
\E_{\mathbf{x}\sim\cN(0,\mathbf{I}_d)}[(f(\mathbf{x})-f_V(\mathbf{x}))\phi(P(\mathbf{x}))].
\end{equation}
It remains to control the second term in Eq.~\eqref{eq:symmetric-corollary}. The needed estimate is
\begin{equation}\label{eq:gphi-bound}
\E_{\mathbf{x}\sim\cN(0,\mathbf{I}_d)}[(f(\mathbf{x})-f_V(\mathbf{x}))\phi(P(\mathbf{x}))]\le \frac{1}{2}\sqrt{\eta}.
\end{equation}
Let $G(\mathbf{x}):=\phi(P(\mathbf{x}))$ and $G_V(\mathbf{x}):=\E_{z\sim\cN(0,1)}[G(\mathbf{x}_{\boldsymbol{\xi}^\perp}+z\boldsymbol{\xi})]$. Since $G_V(\mathbf{x})$ depends only on $\mathbf{x}_{\boldsymbol{\xi}^\perp}$,
$$
\E_{\mathbf{x}\sim\cN(0,\mathbf{I}_d)}[(f(\mathbf{x})-f_V(\mathbf{x}))G_V(\mathbf{x})]
=
\E_{\mathbf{x}\sim\cN(0,\mathbf{I}_d)}\!\left[G_V(\mathbf{x})\,\E_{\mathbf{x}\sim\cN(0,\mathbf{I}_d)}[f(\mathbf{x})-f_V(\mathbf{x})\mid \mathbf{x}_{\boldsymbol{\xi}^\perp}]\right]
=
0,
$$
where the last equality follows from the definition of $f_V$. Hence
$
\E_{\mathbf{x}\sim\cN(0,\mathbf{I}_d)}[(f(\mathbf{x})-f_V(\mathbf{x}))\phi(P(\mathbf{x}))]
=
\E_{\mathbf{x}\sim\cN(0,\mathbf{I}_d)}[(f(\mathbf{x})-f_V(\mathbf{x}))(G(\mathbf{x})-G_V(\mathbf{x}))].
$
By \Cref{lem:cond-poincare},
$
\E_{\mathbf{x}\sim\cN(0,\mathbf{I}_d)}[(G(\mathbf{x})-G_V(\mathbf{x}))^2]\le \E_{\mathbf{x}\sim\cN(0,\mathbf{I}_d)}[(\partial_{\boldsymbol{\xi}} G(\mathbf{x}))^2].
$
Since $|\phi'|\le 1/4$,
$$
\E_{\mathbf{x}\sim\cN(0,\mathbf{I}_d)}[(\partial_{\boldsymbol{\xi}} G(\mathbf{x}))^2]
\le
\frac{1}{16}\E_{\mathbf{x}\sim\cN(0,\mathbf{I}_d)}[(\partial_{\boldsymbol{\xi}} P(\mathbf{x}))^2]
=
\frac{1}{16}\boldsymbol{\xi}^\top \mathbf{M}(P)\boldsymbol{\xi}
\le
\frac{\eta}{16},
$$
because $\boldsymbol{\xi}\in V^\perp$ and all eigenvalues of $\mathbf{M}(P)$ on $V^\perp$ are smaller than $\eta$. Finally, $|f(\mathbf{x})-f_V(\mathbf{x})|\le 2$, so the Cauchy--Schwarz inequality gives
$$
\E_{\mathbf{x}\sim\cN(0,\mathbf{I}_d)}[(f(\mathbf{x})-f_V(\mathbf{x}))(G(\mathbf{x})-G_V(\mathbf{x}))] \le \|f-f_V\|_2\|G-G_V\|_2 \le 2\cdot\frac{\sqrt{\eta}}{4} = \frac{1}{2}\sqrt{\eta}.
$$
This proves the Eq.\;\eqref{eq:gphi-bound}. Combining Eq.\;\eqref{eq:symmetric-corollary} and \eqref{eq:gphi-bound} gives the bound\;\eqref{eq:g-logistic-grad}.
\end{proof}

\begin{lemma}[Low-degree approximation of $f-f_V$]\label{lem:q-approx}
Fix $\tau\in(0,1/10)$. Suppose $k\ge C_{\mathrm{HS}}\tau^{-2}$. Then there exists $Q\in\cP_k$ such that $\E_{\mathbf{x}\sim\cN(0,\mathbf{I}_d)}[Q(\mathbf{x})^2]\le 5$ and
\begin{equation}\label{eq:q-approx-bound}
\E_{\mathbf{x}\sim\cN(0,\mathbf{I}_d)}[|Q(\mathbf{x})-(f(\mathbf{x})-f_V(\mathbf{x}))|]\le 2\tau.
\end{equation}
\end{lemma}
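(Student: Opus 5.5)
The plan is to build $Q$ from the univariate polynomial of \Cref{app:thm:threshold-approx}, applied along the direction $\mathbf{w}$, and then subtract its average over the missing direction $\boldsymbol{\xi}$, mirroring the construction of $f_V$.

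\emph{Degenerate case.} If $\mathbf{w}_{V^\perp}=0$, then $f_V=f$. We take $Q\equiv 0$, and both claims hold trivially. From now on let $\boldsymbol{\xi}=\mathbf{w}_{V^\perp}/\|\mathbf{w}_{V^\perp}\|_2$.

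\emph{Approximating $f$.} Apply \Cref{app:thm:threshold-approx} with accuracy $\tau$ and shift $b=t$. This gives a univariate $S$ of degree at most $C_{\mathrm{HS}}\tau^{-2}\le k$. Set $Q_1(\mathbf{x}):=S(\langle \mathbf{w},\mathbf{x}\rangle)$, which lies in $\cP_k$. Since $\|\mathbf{w}\|_2=1$, we have $\langle\mathbf{w},\mathbf{x}\rangle\sim\cN(0,1)$. Hence
\[
\E[|Q_1-f|]\le\tau \quad\text{and}\quad \E[Q_1^2]\le 5.
\]

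\emph{Approximating $f_V$.} Define
\[
Q_2(\mathbf{x}):=\E_{z\sim\cN(0,1)}\big[Q_1(\mathbf{x}_{\boldsymbol{\xi}^\perp}+z\boldsymbol{\xi})\big].
\]
Expanding $Q_1(\mathbf{x}_{\boldsymbol{\xi}^\perp}+z\boldsymbol{\xi})$ as a polynomial in $(\mathbf{x},z)$ and integrating out $z$ leaves a polynomial in $\mathbf{x}$ of degree at most $\deg S\le k$, so $Q_2\in\cP_k$. Set $Q:=Q_1-Q_2$.

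\emph{$L_1$ bound.} By Jensen's inequality and the definition of $f_V$,
\[
\E_{\mathbf{x}}|Q_2(\mathbf{x})-f_V(\mathbf{x})|\le \E_{\mathbf{x}}\E_{z}\big|(Q_1-f)(\mathbf{x}_{\boldsymbol{\xi}^\perp}+z\boldsymbol{\xi})\big| = \E|Q_1-f|\le\tau .
\]
The equality is the resampling identity \eqref{eq:resample}: $\mathbf{x}_{\boldsymbol{\xi}^\perp}+z\boldsymbol{\xi}\sim\cN(0,\mathbf{I}_d)$. The triangle inequality then gives \eqref{eq:q-approx-bound} with bound $2\tau$.

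\emph{$L_2$ bound.} This is the only step needing care. The naive estimate $(Q_1-Q_2)^2\le 2Q_1^2+2Q_2^2$ would only give a bound of $20$. Instead, observe via \eqref{eq:resample} that $Q_2=\E[Q_1\mid \mathbf{x}_{\boldsymbol{\xi}^\perp}]$, so $Q_2$ is the $L_2$ orthogonal projection of $Q_1$ onto functions of $\mathbf{x}_{\boldsymbol{\xi}^\perp}$. Therefore $Q=Q_1-Q_2$ is orthogonal to $Q_2$, and by Pythagoras
\[
\E[Q^2]=\E[Q_1^2]-\E[Q_2^2]\le \E[Q_1^2]\le 5 .
\]
This completes the proof.
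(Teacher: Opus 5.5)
Your proposal is correct and follows the paper's proof essentially step for step. The paper likewise sets $Q=S-S_V$, where $S_V$ is the Gaussian average of $S$ along $\boldsymbol{\xi}$. It gets the $L_1$ bound from the triangle inequality together with Jensen and the resampling identity. It bounds $\E[Q^2]$ by $\E[\Var(S(\mathbf{x})\mid \mathbf{x}_{\boldsymbol{\xi}^\perp})]\le \E[S^2]\le 5$, which is the same fact as your Pythagoras step. Your explicit composition $S(\langle\mathbf{w},\mathbf{x}\rangle)$ with shift $b=t$, and your separate handling of $\mathbf{w}_{V^\perp}=0$, only make precise what the paper leaves implicit.
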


\begin{proof}
By \Cref{app:thm:threshold-approx}, there exists a polynomial $S$ of degree at most $C_{\mathrm{HS}}\tau^{-2}$ such that $\E_{\mathbf{x}\sim\cN(0,\mathbf{I}_d)}[|S(\mathbf{x})-f(\mathbf{x})|]\le \tau$ and $\E_{\mathbf{x}\sim\cN(0,\mathbf{I}_d)}[S(\mathbf{x})^2]\le 5$. Since $k\ge C_{\mathrm{HS}}\tau^{-2}$, we have $S\in\cP_k$. Define
$
S_V(\mathbf{x}):=\E_{z\sim\cN(0,1)}[S(\mathbf{x}_{\boldsymbol{\xi}^\perp}+z\boldsymbol{\xi})]
$
and set $Q(\mathbf{x}):=S(\mathbf{x})-S_V(\mathbf{x})$. Gaussian averaging in the $\boldsymbol{\xi}$-direction preserves degree, so $Q\in\cP_k$. Moreover,
$$
\E_{\mathbf{x}\sim\cN(0,\mathbf{I}_d)}[Q(\mathbf{x})^2]
=
\E_{\mathbf{x}\sim\cN(0,\mathbf{I}_d)}[\Var(S(\mathbf{x})\mid \mathbf{x}_{\boldsymbol{\xi}^\perp})]
\le
\E_{\mathbf{x}\sim\cN(0,\mathbf{I}_d)}[S(\mathbf{x})^2]
\le
5.
$$
Also, $Q(\mathbf{x})-(f(\mathbf{x})-f_V(\mathbf{x}))=(S(\mathbf{x})-f(\mathbf{x}))-(S_V(\mathbf{x})-f_V(\mathbf{x}))$. Hence
$$
\E_{\mathbf{x}\sim\cN(0,\mathbf{I}_d)}[|Q(\mathbf{x})-(f(\mathbf{x})-f_V(\mathbf{x}))|]
\le
\E_{\mathbf{x}\sim\cN(0,\mathbf{I}_d)}[|S(\mathbf{x})-f(\mathbf{x})|]
+
\E_{\mathbf{x}\sim\cN(0,\mathbf{I}_d)}[|S_V(\mathbf{x})-f_V(\mathbf{x})|].
$$
By Jensen's inequality,
$$
\E_{\mathbf{x}\sim\cN(0,\mathbf{I}_d)}[|S_V(\mathbf{x})-f_V(\mathbf{x})|]
\le
\E_{\mathbf{x}\sim\cN(0,\mathbf{I}_d)}\!\left[
\E_{z\sim\cN(0,1)}[|S(\mathbf{x}_{\boldsymbol{\xi}^\perp}+z\boldsymbol{\xi})-f(\mathbf{x}_{\boldsymbol{\xi}^\perp}+z\boldsymbol{\xi})|]
\right]
=
\E_{\mathbf{x}\sim\cN(0,\mathbf{I}_d)}[|S(\mathbf{x})-f(\mathbf{x})|],
$$
where the last equality uses the Gaussian resampling identity $\mathbf{x}\overset{d}=\mathbf{x}_{\boldsymbol{\xi}^\perp}+z\boldsymbol{\xi}$. Therefore $\E_{\mathbf{x}\sim\cN(0,\mathbf{I}_d)}[|Q(\mathbf{x})-(f(\mathbf{x})-f_V(\mathbf{x}))|]\le 2\tau$.
\end{proof}

\begin{lemma}[Control of the regularization terms]\label{lem:regularizer-term}
Let $Q=S-S_V$ be the polynomial constructed in \Cref{lem:q-approx}. Then
\begin{equation}\label{eq:PQ-bound}
\E_{\mathbf{x}\sim\cN(0,\mathbf{I}_d)}[P(\mathbf{x})Q(\mathbf{x})]\le \sqrt{5\eta}.
\end{equation}
\begin{equation}\label{eq:Q-nuclear-bound}
\tr(\mathbf{M}(Q)^{1/2})\le O(\tau^{-1/2}).
\end{equation}
\end{lemma}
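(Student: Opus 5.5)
The plan is to treat the two bounds separately. In both, the key fact is that $Q=S-S_V$ has zero conditional mean given $\mathbf{x}_{\boldsymbol{\xi}^\perp}$, and that $Q$ depends on $\mathbf{x}$ only through a two-dimensional subspace.

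\emph{Bound \eqref{eq:PQ-bound}.} Define $P_V(\mathbf{x}):=\E_{z\sim\cN(0,1)}[P(\mathbf{x}_{\boldsymbol{\xi}^\perp}+z\boldsymbol{\xi})]$, which is a function of $\mathbf{x}_{\boldsymbol{\xi}^\perp}$ alone. By construction $\E[Q\mid\mathbf{x}_{\boldsymbol{\xi}^\perp}]=0$, so $\E[P_VQ]=0$, exactly as in the proof of \Cref{lem:logistic-gradient-correlation}. Hence
\[
\E[PQ]=\E[(P-P_V)Q]\le \|P-P_V\|_2\,\|Q\|_2 .
\]
For the first factor, \Cref{lem:cond-poincare} gives
\[
\|P-P_V\|_2^2\le \E[(\partial_{\boldsymbol{\xi}}P)^2]=\boldsymbol{\xi}^\top\mathbf{M}(P)\boldsymbol{\xi}\le\eta,
\]
where the last step uses $\boldsymbol{\xi}\in V^\perp$, on which all eigenvalues of $\mathbf{M}(P)$ are below $\eta$. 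For the second factor, \Cref{lem:q-approx} gives $\|Q\|_2^2\le 5$. Multiplying the two bounds yields $\sqrt{5\eta}$.

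\emph{Bound \eqref{eq:Q-nuclear-bound}.} I read the construction in \Cref{lem:q-approx} as $S(\mathbf{x})=S_1(\langle\mathbf{w},\mathbf{x}\rangle)$, where $S_1$ is the univariate polynomial from \Cref{app:thm:threshold-approx} with shift $b=t$. Write $a:=\|\mathbf{w}_{V^\perp}\|_2$. Since $\langle\mathbf{w},\mathbf{x}_{\boldsymbol{\xi}^\perp}+z\boldsymbol{\xi}\rangle=\langle\mathbf{w}_V,\mathbf{x}\rangle+az$, we get
\[
S_V(\mathbf{x})=\E_z\big[S_1(\langle\mathbf{w}_V,\mathbf{x}\rangle+az)\big].
\]
So $Q$ depends only on the projection of $\mathbf{x}$ onto $W:=\operatorname{span}\{\mathbf{w},\mathbf{w}_V\}$, and therefore $\nabla Q\in W$. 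It follows that $\mathbf{M}(Q)$ is PSD of rank at most $2$. Applying Cauchy--Schwarz to its at most two nonzero eigenvalues gives
\[
\tr(\mathbf{M}(Q)^{1/2})\le\sqrt{2\,\tr\mathbf{M}(Q)}=\sqrt{2\,\E\|\nabla Q\|_2^2}.
\]
Next, $\E\|\nabla Q\|^2\le 2\E\|\nabla S\|^2+2\E\|\nabla S_V\|^2$. The first term is easy: $\nabla S=S_1'(\langle\mathbf{w},\mathbf{x}\rangle)\mathbf{w}$, so $\E\|\nabla S\|^2=\E_{g\sim\cN(0,1)}[S_1'(g)^2]\le C\tau^{-1}$ by \Cref{app:thm:threshold-approx}. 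For the second term, $\nabla S_V=\E_z[S_1'(\langle\mathbf{w}_V,\mathbf{x}\rangle+az)]\,\mathbf{w}_V$ with $\|\mathbf{w}_V\|\le1$. Jensen's inequality, together with $\langle\mathbf{w}_V,\mathbf{x}\rangle+az\sim\cN(0,1)$ (the resampling identity \eqref{eq:resample}), gives $\E\|\nabla S_V\|^2\le \E_g[S_1'(g)^2]\le C\tau^{-1}$. Altogether $\tr(\mathbf{M}(Q)^{1/2})\le\sqrt{8C\tau^{-1}}=O(\tau^{-1/2})$.

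The only step needing care is the rank-$2$ observation. Without it one only controls $\tr\mathbf{M}(Q)$, and converting the nuclear-type quantity $\tr(\mathbf{M}(Q)^{1/2})$ into a trace bound requires a dimension factor; here that factor is just $\sqrt2$. The rest consists of Cauchy--Schwarz, Jensen and the conditional Poincaré inequality. Differentiating under $\E_z$ is harmless because $S_1$ is a polynomial.
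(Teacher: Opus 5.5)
Your proposal is correct and matches the paper's proof step for step. For the first bound you use $\E[P_VQ]=0$, the conditional Poincar\'e inequality and Cauchy--Schwarz. For the second you use the rank-$2$ observation, the rank--trace inequality, and the split $\E\|\nabla Q\|^2\le 2\E\|\nabla S\|^2+2\E\|\nabla S_V\|^2$. The only cosmetic difference is that you compute $\nabla S_V$ explicitly from the univariate form $S=S_1(\langle\mathbf{w},\mathbf{x}\rangle)$, whereas the paper bounds $\E\|\nabla S_V\|^2$ by a generic Jensen argument.
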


\begin{proof}
Recall that $Q(\mathbf{x})=S(\mathbf{x})-S_V(\mathbf{x})$, where $S_V(\mathbf{x})=\E_{z\sim\cN(0,1)}[S(\mathbf{x}_{\boldsymbol{\xi}^\perp}+z\boldsymbol{\xi})]$. Hence $\E_{\mathbf{x}\sim\cN(0,\mathbf{I}_d)}[Q(\mathbf{x})\mid \mathbf{x}_{\boldsymbol{\xi}^\perp}]=0$. Let $P_V(\mathbf{x}):=\E_{z\sim\cN(0,1)}[P(\mathbf{x}_{\boldsymbol{\xi}^\perp}+z\boldsymbol{\xi})]$. Since $P_V(\mathbf{x})$ depends only on $\mathbf{x}_{\boldsymbol{\xi}^\perp}$, we have $\E_{\mathbf{x}\sim\cN(0,\mathbf{I}_d)}[P_V(\mathbf{x})Q(\mathbf{x})]=0$, and therefore
$
\E_{\mathbf{x}\sim\cN(0,\mathbf{I}_d)}[P(\mathbf{x})Q(\mathbf{x})]
=
\E_{\mathbf{x}\sim\cN(0,\mathbf{I}_d)}[(P(\mathbf{x})-P_V(\mathbf{x}))Q(\mathbf{x})].
$
By \Cref{lem:cond-poincare},
\begin{equation}\label{eq:polynomial-poincare}
\E_{\mathbf{x}\sim\cN(0,\mathbf{I}_d)}[(P(\mathbf{x})-P_V(\mathbf{x}))^2]
\le
\E_{\mathbf{x}\sim\cN(0,\mathbf{I}_d)}[(\partial_{\boldsymbol{\xi}} P(\mathbf{x}))^2]
\le
\eta.
\end{equation}
Using Eq.\;\eqref{eq:polynomial-poincare} and $\E_{\mathbf{x}\sim\cN(0,\mathbf{I}_d)}[Q(\mathbf{x})^2]\le 5$ gives
$$
\E_{\mathbf{x}\sim\cN(0,\mathbf{I}_d)}[P(\mathbf{x})Q(\mathbf{x})]
\le
\Big(
\E_{\mathbf{x}\sim\cN(0,\mathbf{I}_d)}
\big[(P(\mathbf{x})-P_V(\mathbf{x}))^2\big]\,
\E_{\mathbf{x}\sim\cN(0,\mathbf{I}_d)}
\big[Q(\mathbf{x})^2\big]
\Big)^{1/2} 
\le \sqrt{5\eta}.
$$
This proves Eq.\;\eqref{eq:PQ-bound}.

We now prove Eq.\;\eqref{eq:Q-nuclear-bound}. By \Cref{app:thm:threshold-approx}, $S$ also satisfies $\E_{\mathbf{x}\sim\cN(0,\mathbf{I}_d)}\|\nabla S(\mathbf{x})\|^2\le O(\tau^{-1})$. Since $S_V(\mathbf{x})=\E_{z\sim\cN(0,1)}[S(\mathbf{x}_{\boldsymbol{\xi}^\perp}+z\boldsymbol{\xi})]$, Jensen's inequality gives $\E_{\mathbf{x}\sim\cN(0,\mathbf{I}_d)}\|\nabla S_V(\mathbf{x})\|^2\le \E_{\mathbf{x}\sim\cN(0,\mathbf{I}_d)}\|\nabla S(\mathbf{x})\|^2$. Thus, using the elementary inequality $\|a-b\|^2 \leq 2\|a\|^2 +2\|b\|^2$, we get that
$$
\E_{\mathbf{x}\sim\cN(0,\mathbf{I}_d)}\|\nabla Q(\mathbf{x})\|^2
\le
2\E_{\mathbf{x}\sim\cN(0,\mathbf{I}_d)}\|\nabla S(\mathbf{x})\|^2
+
2\E_{\mathbf{x}\sim\cN(0,\mathbf{I}_d)}\|\nabla S_V(\mathbf{x})\|^2
\le
O(\tau^{-1}).
$$
Moreover, $S$ depends only on the direction $\mathbf{w}$, while $S_V$ depends only on the projected direction $\mathbf{w}_V$. Hence $Q=S-S_V$ depends only on $\operatorname{span}\{\mathbf{w},\mathbf{w}_V\}$, and therefore $\rank(\mathbf{M}(Q))\le 2$. By \Cref{lem:nuclear-rank-trace},
$$
\tr(\mathbf{M}(Q)^{1/2})
\le
\sqrt{\rank(\mathbf{M}(Q))\tr(\mathbf{M}(Q))}
=
\sqrt{\rank(\mathbf{M}(Q))\,\E_{\mathbf{x}\sim\cN(0,\mathbf{I}_d)}\|\nabla Q(\mathbf{x})\|^2}
\le
O(\tau^{-1/2}).
$$
This proves Eq.\;\eqref{eq:Q-nuclear-bound}.
\end{proof}

For a polynomial $Q\in\cP_k$, write
\begin{equation}\label{eq:gain-definition}
\mathsf{Gain}_P(Q)
:=
\E_{(\mathbf{x},y)\sim D}[Q(\mathbf{x})\,y\,\psi(yP(\mathbf{x}))]
-
2\mu\E_{\mathbf{x}\sim\cN(0,\mathbf{I}_d)}[P(\mathbf{x})Q(\mathbf{x})]
-
\nu\,\tr\!\big(\mathbf{M}(Q)^{1/2}\big).
\end{equation}

Intuitively, $\mathsf{Gain}_P(Q)$ is the first-order change of how fast the regularized loss $\cL_{\mu,\nu}$ changes in the direction $Q$: approximate optimality of $P$ forces it to be small, while our constructed $Q$ would have large positive gain if averaging $f$ to $f_V$ lost too much label correlation.

We next combine the previous estimates to lower bound the gain of the polynomial $Q$ constructed in \Cref{lem:q-approx}. By \Cref{lem:q-approx}, \Cref{lem:logistic-gradient-correlation}, and the bound $|y\psi(yP(\mathbf{x}))|\le 1$,
\begin{equation}\label{eq:Q-logistic-correl}
\E_{(\mathbf{x},y)\sim D}[Q(\mathbf{x})\,y\,\psi(yP(\mathbf{x}))]
\ge
\frac{1}{2}\E_{(\mathbf{x},y)\sim D}[(f(\mathbf{x})-f_V(\mathbf{x}))y]
-
\frac{1}{2}\sqrt{\eta}
-
2\tau.
\end{equation}
Combining Eq.\;\eqref{eq:Q-logistic-correl}, \eqref{eq:PQ-bound}, and \eqref{eq:Q-nuclear-bound} with the definition of $\mathsf{Gain}_P(Q)$ in \eqref{eq:gain-definition} gives
\begin{equation}\label{eq:gain-lower}
\mathsf{Gain}_P(Q)
\ge
\frac{1}{2}\E_{(\mathbf{x},y)\sim D}[(f(\mathbf{x})-f_V(\mathbf{x}))y]
-
\left(\frac{1}{2}+2\mu\sqrt5\right)\sqrt{\eta}
-
2\tau
-
O(\nu\tau^{-1/2}).
\end{equation}

\begin{lemma}[Approximate optimality implies no large gain]\label{lem:no-large-gain}
Let $P\in\cP_k$ satisfy
$
\cL_{\mu,\nu}(P)\le \inf_{P'\in\cP_k}\cL_{\mu,\nu}(P')+\alpha.
$
If $Q\in\cP_k$ and $\E_{\mathbf{x}\sim\cN(0,\mathbf{I}_d)}[Q(\mathbf{x})^2]\le 5$, then
$$
\mathsf{Gain}_P(Q)\le O(\sqrt{\alpha}).
$$
\end{lemma}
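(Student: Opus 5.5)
Since $P$ is $\alpha$-near-optimal, the objective along the segment $P+sQ$ cannot drop by more than $\alpha$. So the plan is to expand $s\mapsto \cL_{\mu,\nu}(P+sQ)$ to second order for small $s>0$ and show that its one-sided derivative at $s=0$ is at most $-\mathsf{Gain}_P(Q)$. Since $P+sQ\in\cP_k$, we have $\cL_{\mu,\nu}(P+sQ)\ge \cL_{\mu,\nu}(P)-\alpha$. It then suffices to prove
\[
\cL_{\mu,\nu}(P+sQ)\le \cL_{\mu,\nu}(P)-s\,\mathsf{Gain}_P(Q)+C s^2
\]
for some absolute constant $C$ and all $s\in[0,1]$. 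Rearranging gives $s\,\mathsf{Gain}_P(Q)\le \alpha+Cs^2$, and choosing $s=\sqrt{\alpha}$ (if $\alpha\le 1$) yields $\mathsf{Gain}_P(Q)\le (1+C)\sqrt{\alpha}$. For $\alpha>1$ the claim is trivial: each term of $\mathsf{Gain}_P(Q)$ is bounded by an absolute constant, since $|y\psi|\le 1$, $\E[Q^2]\le5$, and $\E[PQ]$ is controlled by $\E[P^2]\le \cL_{\mu,\nu}(P)/\mu\le (\log 2+\alpha)/\mu$ (comparing with $P'=0$).

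I would bound the three terms of $\cL_{\mu,\nu}$ separately.

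\textbf{Loss term.} Since $\ell'=-\psi$ and $\psi$ is $1/4$-Lipschitz for the logistic loss, we have $\ell(u+v)\le \ell(u)-\psi(u)v+v^2/8$. Applying this with $u=yP$ and $v=syQ$ gives
\[
\E[\ell(y(P+sQ))]\le \E[\ell(yP)]-s\E[Q\,y\,\psi(yP)]+\tfrac{s^2}{8}\E[Q^2].
\]

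\textbf{$L_2$ term.} This term expands exactly:
\[
\mu\E[(P+sQ)^2]=\mu\E[P^2]+2\mu s\E[PQ]+\mu s^2\E[Q^2].
\]

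\textbf{Trace term.} This is the step needing care, because $\mathbf{M}(\cdot)^{1/2}$ is not linear. I would use the fact recorded after \eqref{logistic-loss} (via \Cref{app:lem:nuclear-hermite}) that $\tr(\mathbf{M}(P)^{1/2})$ equals the nuclear norm of the Hermite coefficient matrix $A_P$ of $\nabla P$. The map $P\mapsto A_P$ is linear, so $A_{P+sQ}=A_P+sA_Q$, and the triangle inequality for $\|\cdot\|_*$ gives
\[
\tr(\mathbf{M}(P+sQ)^{1/2})\le \tr(\mathbf{M}(P)^{1/2})+s\,\tr(\mathbf{M}(Q)^{1/2}).
\]
This matches the sign of the $-\nu\tr(\mathbf{M}(Q)^{1/2})$ term in $\mathsf{Gain}_P(Q)$, so no subgradient analysis of the nonsmooth regularizer is needed; subadditivity is enough.

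Summing the three bounds gives the displayed inequality with $C=(1/8+\mu)\cdot 5$. The only potential obstacle is justifying the nuclear-norm identity and the linearity of $P\mapsto A_P$. If I wanted to avoid relying on the Hermite representation, I would instead write $\mathbf{M}(P)=\E[\nabla P\nabla P^\top]$ as $B_PB_P^\top$, where $B_P$ is the linear operator $\mathbf{v}\mapsto \langle \mathbf{v},\nabla P\rangle\in L^2(\cN(0,\mathbf{I}_d))$. Then $\tr(\mathbf{M}(P)^{1/2})=\|B_P\|_*$, which is again a norm of a quantity linear in $P$.
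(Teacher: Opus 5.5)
Your proposal is correct and follows the paper's argument. It uses the same global Taylor bound $\ell(u+v)\le\ell(u)-\psi(u)v+v^2/8$, the exact expansion of the $L_2$ term, and subadditivity of the nuclear norm to get $\cL_{\mu,\nu}(P+sQ)\le\cL_{\mu,\nu}(P)-s\,\mathsf{Gain}_P(Q)+(\mu+\tfrac18)s^2\E[Q^2]$, which is then compared against the $\alpha$-optimality of $P$. The only differences are that you fix $s=\sqrt{\alpha}$ where the paper takes the optimizing step $\zeta=\mathsf{Gain}_P(Q)/(2(\mu+\tfrac18)\E[Q^2])$, and that your separate treatment of $\alpha>1$ is unnecessary (though harmless), since the quadratic bound holds for every $s\ge 0$.
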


\begin{proof}
By the standard second-order Taylor bound, since $\ell''\le 1/4$, for any $a,b$,
\begin{equation}\label{eq:taylor-smoothness}
\ell(b)\le \ell(a)+\ell'(a)(b-a)+\frac{1}{8}(b-a)^2.
\end{equation}
Apply Eq.\! \eqref{eq:taylor-smoothness} pointwise with $a=yP(\mathbf{x})$ and $b=y(P(\mathbf{x})+\zeta Q(\mathbf{x}))$ and take expectation. Since $\ell'(u)=-\psi(u)$, we get
$$
\E_{(\mathbf{x},y)\sim D}[\ell(y(P(\mathbf{x})+\zeta Q(\mathbf{x})))]
\le
\E_{(\mathbf{x},y)\sim D}[\ell(yP(\mathbf{x}))]
-\zeta\E_{(\mathbf{x},y)\sim D}[Q(\mathbf{x})y\psi(yP(\mathbf{x}))]
+\frac{1}{8}\zeta^2\E_{\mathbf{x}\sim\cN(0,\mathbf{I}_d)}[Q(\mathbf{x})^2].
$$
Adding the regularization terms and using
$
\tr(M(P+\zeta Q)^{1/2})
\le
\tr(\mathbf{M}(P)^{1/2})
+
\zeta\tr(\mathbf{M}(Q)^{1/2}),
$
we get
$$
\cL_{\mu,\nu}(P+\zeta Q)
\le
\cL_{\mu,\nu}(P)
-\zeta \mathsf{Gain}_P(Q)
+
\left(\mu+\frac{1}{8}\right)\zeta^2\E_{\mathbf{x}\sim\cN(0,\mathbf{I}_d)}[Q(\mathbf{x})^2].
$$
Choosing
$
\zeta=\frac{\mathsf{Gain}_P(Q)}{2\left(\mu+\frac{1}{8}\right)\E_{\mathbf{x}\sim\cN(0,\mathbf{I}_d)}[Q(\mathbf{x})^2]}
$
gives
$$
\cL_{\mu,\nu}(P+\zeta Q)
\le
\cL_{\mu,\nu}(P)
-
\frac{\mathsf{Gain}_P(Q)^2}
{4\left(\mu+\frac{1}{8}\right)\E_{\mathbf{x}\sim\cN(0,\mathbf{I}_d)}[Q(\mathbf{x})^2]}.
$$
Since $P$ is $\alpha$-approximately optimal, the decrease is at most $\alpha$. Using $\E_{\mathbf{x}\sim\cN(0,\mathbf{I}_d)}[Q(\mathbf{x})^2]\le 5$, we get
$$
\mathsf{Gain}_P(Q)\le 2\sqrt{5\left(\mu+\frac{1}{8}\right)\alpha}.
$$
\end{proof}

We now finish the proof of halfspace-averaging bound \eqref{eq:averaging-main-bound}. Choose $\tau:=\eps/32$. By the choice of the sufficiently large constant $C$ in $k=C/\eps^2$, the condition $k\ge C_{\mathrm{HS}}\tau^{-2}$ holds. Apply \Cref{lem:no-large-gain} with $\alpha=O(\eps^3)$. By taking the hidden constant in the optimization accuracy sufficiently small, as allowed in the theorem statement, we get
$$
\mathsf{Gain}_P(Q)\le \frac{\eps}{32}.
$$
On the other hand, Eq.\! \eqref{eq:gain-lower} gives
$$
\mathsf{Gain}_P(Q)
\ge
\frac{1}{2}\E_{(\mathbf{x},y)\sim D}[(f(\mathbf{x})-f_V(\mathbf{x}))y]
-
\left(\frac{1}{2}+2\mu\sqrt5\right)\sqrt{\eta}
-
2\tau
-
O(\nu\tau^{-1/2}).
$$
Since $\eta=\eps^2/1024$, $\mu=1/128$, and $\nu=c_\nu\eps^{3/2}$ with $c_\nu>0$ sufficiently small, we get
$$
\frac{1}{2}\E_{(\mathbf{x},y)\sim D}[(f(\mathbf{x})-f_V(\mathbf{x}))y]
\le
\frac{\eps}{32}
+
\left(\frac{1}{2}+2\mu\sqrt5\right)\sqrt{\eta}
+
2\tau
+
O(\nu\tau^{-1/2})
\le
\frac{\eps}{2}.
$$
Therefore, we obtain the halfspace-averaging bound \eqref{eq:averaging-main-bound}:
$$
\E_{(\mathbf{x},y)\sim D}[(f(\mathbf{x})-f_V(\mathbf{x}))y]\le \eps,
$$

It remains to extract a single proper halfspace in $V$. If $\mathbf{w}_{V^\perp}=0$, then $f\in\mathcal{C}_V$ and the claim is trivial. Otherwise, by \Cref{app:def:halfspace-averaging}, $f_V(\mathbf{x})=\E_{z\sim\cN(0,1)}[h_z(\mathbf{x})]$ with $h_z\in\mathcal{C}_V$ for every $z$. Hence
$
\E_{z\sim\cN(0,1)}\E_{(\mathbf{x},y)\sim D}[(f(\mathbf{x})-h_z(\mathbf{x}))y]
=
\E_{(\mathbf{x},y)\sim D}[(f(\mathbf{x})-f_V(\mathbf{x}))y]
\le
\eps.
$
Therefore there exists some $z$ such that $\E_{(\mathbf{x},y)\sim D}[(f(\mathbf{x})-h_z(\mathbf{x}))y]\le \eps$. Since $h_z\in\mathcal{C}_V$, taking $h=h_z$ proves the second claim.
\end{proof}

\subsection{Finite Search in the Reduced Subspace for Halfspaces}

We now show that the low-dimensional subspace $V$ from \Cref{app:thm:structural} has dimension independent of $d$. This allows us to replace the search over $\R^d$ by a finite search over a net in $V$.

\begin{lemma}[Dimension bound]\label{lem:dim-bound}
Let $V$ be the span of the eigenvectors of $\mathbf{M}=\mathbf{M}(P)$ with eigenvalues at least $\eta>0$, and suppose that $P$ is an $O(\eps^3)$-approximate minimizer of $\cL_{\mu,\nu}$. Then
$$
\dim(V)\le \frac{\tr(\mathbf{M}^{1/2})}{\sqrt{\eta}}
=
O\!\left(\frac{1}{\nu\sqrt{\eta}}\right).
$$
\end{lemma}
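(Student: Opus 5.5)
The proof has two independent parts: a spectral counting argument giving the first inequality, and a comparison of $P$ against the zero polynomial giving the bound $\tr(\mathbf{M}^{1/2})=O(1/\nu)$.

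\textbf{Counting step.} Since $\mathbf{M}=\mathbf{M}(P)$ is positive semidefinite, write its eigenvalues as $\lambda_1\ge\dots\ge\lambda_d\ge 0$. Then $\tr(\mathbf{M}^{1/2})=\sum_i\sqrt{\lambda_i}$. Every eigenvalue whose eigenvector enters $V$ satisfies $\lambda_i\ge\eta$, so it contributes at least $\sqrt{\eta}$ to this sum. Dropping the remaining nonnegative terms gives
\[
\dim(V)\sqrt{\eta}\le\tr(\mathbf{M}^{1/2}),
\]
which is the first inequality.

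\textbf{Bounding $\tr(\mathbf{M}^{1/2})$.} I will use approximate optimality of $P$ by comparing it with the zero polynomial $P'=0\in\cP_k$. We have $\mathbf{M}(0)=0$, so
\[
\cL_{\mu,\nu}(0)=\E[\ell(0)]=\log 2 .
\]
All three terms of $\cL_{\mu,\nu}$ are nonnegative: the logistic loss is positive, $\mu\E[P^2]\ge0$, and $\tr(\mathbf{M}^{1/2})\ge0$. Hence
\[
\nu\,\tr(\mathbf{M}(P)^{1/2})\le\cL_{\mu,\nu}(P)\le\inf_{P'\in\cP_k}\cL_{\mu,\nu}(P')+O(\eps^3)\le\log 2+O(\eps^3)=O(1).
\]
Dividing by $\nu$ gives $\tr(\mathbf{M}^{1/2})=O(1/\nu)$. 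Combined with the counting step, this yields $\dim(V)=O(1/(\nu\sqrt{\eta}))$.

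I do not expect a real obstacle here. The only points to check are that $\mathbf{M}(P)$ is PSD, which holds because it is an expectation of outer products $\nabla P\,\nabla P^\top$, and that every term in the objective is nonnegative. The comparison with $P'=0$ is what makes the bound hold uniformly, independently of the unknown distribution $D$.
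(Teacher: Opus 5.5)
Your proposal is correct and follows the paper's proof essentially line for line: you count the eigenvalues $\lambda_i\ge\eta$ (with multiplicity) to get $\dim(V)\sqrt{\eta}\le\tr(\mathbf{M}^{1/2})$. You then compare against the zero polynomial, using nonnegativity of all three terms, to get $\nu\,\tr(\mathbf{M}^{1/2})\le\log 2+O(\eps^3)=O(1)$.
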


\begin{proof}
Let $\lambda_1,\dots,\lambda_r$ be the eigenvalues of $\mathbf{M}$ corresponding to $V$, where $r=\dim(V)$. Since $\lambda_i\ge \eta$,
$$
\dim(V)\sqrt{\eta}
=
r\sqrt{\eta}
\le
\sum_{i=1}^r\sqrt{\lambda_i}
\le
\tr(\mathbf{M}^{1/2}).
$$
Also, since $0\in\cP_k$ and the regularization terms are nonnegative,
$$
\nu\tr(\mathbf{M}^{1/2})
\le
\cL_{\mu,\nu}(P)
\le
\cL_{\mu,\nu}(0)+O(\eps^3)
=
\log 2+O(\eps^3)
=
O(1).
$$
Thus $\tr(\mathbf{M}^{1/2})=O(1/\nu)$, and the claim follows.
\end{proof}

To turn the existence of a low-dimensional subspace into a proper learner, we use the same
discretization facts as in \cite[Section~3.1]{DKTKZ21}; see also
\cite[Corollary~4.2.13]{Ver18} for the covering bound. In particular, we construct a standard $\eps$-net
$\widetilde V \subseteq \{\mathbf{v} \in V:\ \|\mathbf{v}\|_2 = 1 \}$ for the unit sphere of the low-dimensional subspace $V$, together with a discrete set
$\widetilde T$ of thresholds $t\in\R$. This procedure results in the  $\eps$-cover of candidate halfspaces
$$
\mathcal H:
=\{\sign(\langle \mathbf{v},\mathbf{x}\rangle+t):\, \mathbf{v}\in\widetilde V,\ t\in\widetilde T\}.
$$

\begin{fact}[Covering and threshold discretization]\label{fact:cover}
Let $V\subseteq \R^d$ be a subspace with $r = \dim(V)$. For every $\eps\in(0,1/2)$ and every halfspace $h(\mathbf{x})=\sign(\ip{\mathbf{v}}{\mathbf{x}}+t)$ with $\mathbf{v}\in V$ and $\|\mathbf{v}\|_2=1$, there exists a halfspace $\widetilde h$ in a  $\eps$-cover $\cH$ of size $(1/\eps)^{O(r)}$ satisfying
$$
\Pr_{\mathbf{x}\sim\cN(0,\mathbf{I}_d)}[h(\mathbf{x})\neq \widetilde h(\mathbf{x})] \le O(\eps).
$$
\end{fact}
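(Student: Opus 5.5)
I would build the cover $\cH$ as a product of a direction net and a threshold grid, and bound the disagreement of nearby halfspaces by separate Gaussian estimates for a small rotation of the normal and a small shift of the threshold. Nothing about $V$ beyond $\dim(V)=r$ enters; rotation invariance of $\cN(0,\mathbf{I}_d)$ lets us work in coordinates where $V$ is the span of the first $r$ basis vectors.

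\textbf{Step 1 (constant and extreme thresholds).} Put the two constant classifiers into $\cH$. If $|t|\ge T:=\sqrt{2\log(1/\eps)}$, then $\ip{\mathbf{v}}{\mathbf{x}}\sim\cN(0,1)$ gives $\Pr[\sign(\ip{\mathbf{v}}{\mathbf{x}}+t)\neq \sign(t)]\le \Pr[|g|\ge T]\le O(\eps)$ for $g\sim\cN(0,1)$. So it suffices to treat $|t|\le T$.

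\textbf{Step 2 (direction net).} Let $\widetilde V$ be an $\eps$-net of the unit sphere of $V$ in $\ell_2$. By \cite[Corollary~4.2.13]{Ver18} we can take $|\widetilde V|\le(3/\eps)^r$. Fix $\mathbf{v}$ and a net point $\widetilde{\mathbf{v}}$ with $\|\mathbf{v}-\widetilde{\mathbf{v}}\|_2\le\eps$, and keep the same $t$. Disagreement requires
\[
|\ip{\mathbf{v}}{\mathbf{x}}+t|\le|\ip{\mathbf{v}-\widetilde{\mathbf{v}}}{\mathbf{x}}|.
\]
For any $s>0$ the probability of this is at most
\[
\Pr\big[|\ip{\mathbf{v}}{\mathbf{x}}+t|\le \eps s\big]+\Pr\big[|\ip{\mathbf{v}-\widetilde{\mathbf{v}}}{\mathbf{x}}|>\eps s\big].
\]
The first term is $O(\eps s)$ by anticoncentration, since $\ip{\mathbf{v}}{\mathbf{x}}$ is standard Gaussian with density at most $1/\sqrt{2\pi}$. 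The second term is at most $e^{-s^2/2}$, since $\ip{\mathbf{v}-\widetilde{\mathbf{v}}}{\mathbf{x}}$ is Gaussian with variance at most $\eps^2$. Taking $s=\sqrt{2\log(1/\eps)}$ gives $O(\eps\sqrt{\log(1/\eps)})$.

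\textbf{Step 3 (threshold grid).} Let $\widetilde T$ be the grid of spacing $\eps$ on $[-T,T]$, together with $\pm\infty$ encoding the constants. Changing $t$ to a grid point $\widetilde t$ with $|t-\widetilde t|\le\eps$ changes the output only when $\ip{\widetilde{\mathbf{v}}}{\mathbf{x}}$ lies in an interval of length $\eps$. That event has probability $O(\eps)$ by the same density bound.

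\textbf{Step 4 (combine).} By the triangle inequality for disagreement, the total disagreement is $O(\eps\sqrt{\log(1/\eps)})$. The size of $\cH$ is $(3/\eps)^r\cdot O(\sqrt{\log(1/\eps)}/\eps)=(1/\eps)^{O(r)}$.

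\textbf{Main obstacle.} The only delicate point is the extra $\sqrt{\log(1/\eps)}$ factor in the rotation step. I would remove it by running the construction at scale $\eps':=\eps/\sqrt{\log(1/\eps)}$. The size becomes $(1/\eps')^{O(r)}=(1/\eps)^{O(r)}$ and the disagreement becomes $O(\eps)$.

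An alternative that avoids the logarithm altogether is the exact angle formula. For homogeneous halfspaces, $\Pr[\sign\ip{\mathbf{v}}{\mathbf{x}}\neq\sign\ip{\widetilde{\mathbf{v}}}{\mathbf{x}}]=\angle(\mathbf{v},\widetilde{\mathbf{v}})/\pi$. For thresholded halfspaces, reduce to the two-dimensional Gaussian on $\operatorname{span}\{\mathbf{v},\widetilde{\mathbf{v}}\}$ and bound the measure of the wedge swept between the two lines. Either route yields the stated fact.
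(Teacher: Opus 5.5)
Your proposal is correct and is essentially the construction the paper uses. The paper gives no argument of its own: it cites \cite[Section~3.1]{DKTKZ21} and \cite[Corollary~4.2.13]{Ver18} for the product of an $\eps$-net on the unit sphere of $V$ with a discrete threshold set, which is exactly what you build. Your details correctly fill in that cited argument: the tail cutoff at $|t|\approx\sqrt{2\log(1/\eps)}$ with the two constant classifiers added, Gaussian anticoncentration for the rotation and threshold shifts, and the rescaling to $\eps/\sqrt{\log(1/\eps)}$, which removes the logarithmic loss while keeping the size $(1/\eps)^{O(r)}$.
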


We can now prove \Cref{thm:main}.

\begin{theorem}[Proper learning of halfspaces]\label{thm:main}
There exists a proper agnostic learner for halfspaces under Gaussian marginals with the following guarantee. For every $\eps,\delta \in (0,1/2)$ and every distribution $D$ over $(\mathbf{x},y)\in\R^d\times\{\pm1\}$ whose $\mathbf{x}$-marginal is $\cN(0,\mathbf{I}_d)$, the learner draws
$
N = d^{O(1/\eps^2)}\,\log(1/\delta)
$
examples, runs in time
$
d^{O(1/\eps^2)}\,\log(1/\delta)
+ (1/\eps)^{O(1/\eps^{2.5})}\,\log(1/\delta),
$
and returns a halfspace $h$ such that, with probability at least $1-\delta$,
$
\err_D(h) \le \OPT + \eps.
$
\end{theorem}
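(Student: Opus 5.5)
We prove \Cref{thm:main} by analyzing \Cref{alg:proper-halfspaces}, combining \Cref{app:prop:compute-P}, \Cref{app:thm:structural}, \Cref{lem:dim-bound}, and \Cref{fact:cover} with a standard ERM argument. The parameters are $k=C_0\eps^{-2}$, $\mu=1/128$, $\nu=c_\nu\eps^{3/2}$, and $\eta$ a small constant multiple of $\eps^2$, chosen so that the hypotheses of \Cref{app:thm:structural} hold. The errors are split across three events, each failing with probability at most $\delta/2$: the regression step, the structural step (which is deterministic given the regression succeeds), and the ERM step. The accuracy is split into $O(\eps)$ pieces, and we rescale $\eps$ by a constant at the end.

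\emph{Regression step.} Invoke \Cref{app:prop:compute-P} with target accuracy $\alpha=c\eps^3$, with $c$ small enough to satisfy the hypothesis of \Cref{app:thm:structural}. This costs $(dk)^{O(k)}\log(1/\delta)/\eps^6=d^{O(1/\eps^2)}\log(1/\delta)$ samples and time. The factor $k^{O(k)}=(1/\eps)^{O(1/\eps^2)}$ is absorbed into the second term of the run-time, or into the first when $d$ is large. Computing $\mathbf{M}(P)$ exactly from the Hermite coefficients of $P$ and taking its eigendecomposition costs $\poly(d^{k})$ time.

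\emph{Structural step.} Let $f^*\in\mathcal{C}$ satisfy $\err_D(f^*)\le\OPT+\eps$. For any $h$ we have $\err_D(h)=\tfrac12-\tfrac12\E[h(\mathbf{x})y]$. \Cref{app:thm:structural} therefore yields $h\in\mathcal{C}_V$ with $\err_D(h)\le\err_D(f^*)+\eps/2\le\OPT+2\eps$. By \Cref{lem:dim-bound}, $r=\dim V=O(1/(\nu\sqrt\eta))=O(\eps^{-5/2})$. Applying \Cref{fact:cover} to $h$ gives $\widetilde h\in\cH$ with $\Pr[h\ne\widetilde h]\le O(\eps)$, so $\err_D(\widetilde h)\le\OPT+O(\eps)$. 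The cover has size $|\cH|=(1/\eps)^{O(r)}=(1/\eps)^{O(\eps^{-2.5})}$. The constant classifiers are added to $\cH$ to handle the degenerate cases.

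\emph{ERM step.} With $N_2=O(\log(|\cH|/\delta)/\eps^2)$ fresh samples, Hoeffding's inequality and a union bound give $|\err_{\widehat D}(h')-\err_D(h')|\le\eps$ for all $h'\in\cH$ simultaneously. The empirical minimizer therefore has error at most $\OPT+O(\eps)$. Evaluating every cover element on every sample costs $|\cH|\cdot N_2\cdot\poly(d)=(1/\eps)^{O(\eps^{-2.5})}\log(1/\delta)\poly(d)$. The $\poly(d)$ factor is dominated by $d^{O(1/\eps^2)}$ once the two terms are combined, using $ab\le a^2+b^2$. Summing the costs of the three steps gives the claimed bounds.

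The main obstacle is bookkeeping rather than any new idea. First, the constants must be reconciled: the algorithm's choice $\eta=\eps^2/C_0$ has to match the $\eta=\eps^2/1024$ used in \Cref{app:thm:structural}, and the optimization slack has to be the small $O(\eps^3)$ required there. Second, the cover must be built from $V$ itself, with an orthonormal basis of $V$ obtained from the eigendecomposition, so that normals lying in $V$ are represented within the $(1/\eps)^{O(r)}$ size bound. Third, the $\poly(d)$ cost of evaluating the cover must be shown to be absorbed by the stated run-time.
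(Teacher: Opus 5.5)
Your proposal is correct and follows the paper's proof essentially step for step. Both compute an $O(\eps^3)$-approximate minimizer via \Cref{app:prop:compute-P}, bound $r=\dim V=O(\eps^{-2.5})$ by \Cref{lem:dim-bound}, convert \Cref{app:thm:structural} into an error bound through $\err_D(h)=\tfrac12-\tfrac12\E[h(\mathbf{x})y]$, discretize with \Cref{fact:cover}, and finish with Hoeffding plus a union bound for ERM. The paper splits $\delta$ into thirds rather than halves, which is immaterial.

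One small point of bookkeeping: the $k^{O(k)}$ factor you propose to absorb "into the second term" cannot be absorbed that way in the sample bound, since the sample bound has no second term. It also never needs to be absorbed, because the Hermite feature dimension satisfies $\binom{d+k}{k}\le (d+1)^k$. So the regression step already uses $d^{O(1/\eps^2)}\log(1/\delta)$ samples for $d\ge 2$.
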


\begin{proof}[Proof of \Cref{thm:main}]
Set $k=\Theta(\eps^{-2})$, $\eta=\Theta(\eps^2)$, $\mu:=1/128$, and $\nu=\Theta(\eps^{3/2})$. By \Cref{app:prop:compute-P}, using
$$
N_1 = d^{O(k)}\,\poly(1/\eps)\,\log(1/\delta)
= d^{O(1/\eps^2)}\,\poly(1/\eps)\,\log(1/\delta)
$$
samples we compute a degree-$k$ polynomial $P$ such that $ \cL_{\mu,\nu}(P) \le \inf_{P'\in\cP_k}\cL_{\mu,\nu}(P') + O(\eps^3)$ with probability at least $1-\delta/3$.

Form the influence matrix $\mathbf{M}(P) = \E[\nabla P(\mathbf{x}) \nabla P(\mathbf{x})^\top]$, let $V$ be the span of the eigenvectors with eigenvalues at least $\eta$, and let $r=\dim(V)$. By \Cref{lem:dim-bound},
$$
r = O\left(\frac{1}{\nu \sqrt{\eta}}\right) = O(\eps^{-2.5}).
$$
Fix any halfspace $f\in\mathcal{C}$ with $\err_D(f)\le \OPT + \eps$. By \Cref{app:thm:structural}, there exists $h_V\in\mathcal{C}_V$ such that
$$
\E[(f-h_V)y] \le \eps,
$$
hence
$$
\err_D(h_V) \le \err_D(f) + \eps/2 \le \OPT + 3\eps/2.
$$
Applying \Cref{fact:cover}, we obtain a  $\eps$-cover $\cH$ of halfspaces of size
$
|\cH| = (1/\eps)^{O(r)} = (1/\eps)^{O(1/\eps^{2.5})}
$
containing a hypothesis $\widetilde h$ with
$$
\err_D(\widetilde h) \le \err_D(h_V)+O(\eps) \le \OPT + O(\eps).
$$

Finally, let
$
h=\arg\min_{g\in\cH}\widehat{\err}(g),
$
where $\widehat{\err}$ is computed on an independent validation sample of size
$
N_2 = O\!\left(\frac{\log|\cH|+\log(1/\delta)}{\eps^2}\right).
$
By Hoeffding's inequality and a union bound over the $\eps$-cover $\cH$, with probability at least $1-\delta/3$,
$$
\err_D(h)\le \min_{g\in\cH}\err_D(g)+\eps\le \OPT+O(\eps).
$$
Rescaling $\eps$ by a sufficiently small absolute constant gives the stated $\OPT+\eps$ guarantee.

Combining this with the success event of \Cref{app:prop:compute-P} completes the proof.

The running time is the sum of the time needed to optimize over the degree-$k$ Hermite coefficients, compute the eigenspace of $\mathbf{M}(P)$, and search over the  $\eps$-cover $\cH$. The first term is $d^{O(1/\eps^2)}\log(1/\delta)$ by \Cref{app:prop:compute-P}, and the second is $(1/\eps)^{O(1/\eps^{2.5})}\log(1/\delta)$.
\end{proof}

\section{Proper Learning of Boolean Functions of Halfspaces}\label{subsec:boolean-halfspaces}

In this section, we analyze \Cref{alg:proper-boolean} and show that with high probability it gives a Boolean function of $K$ halfspaces whose prediction accuracy is $\eps$-close to the optimal accuracy $\OPT_K$ among all such functions. 

We denote by $\mathcal{C}_K$ the class of arbitrary Boolean functions of $K$ halfspaces
$$
\mathcal{C}_{K}
:=
\Big\{
\mathbf{x} \mapsto B(f_1(\mathbf{x}),\dots,f_K(\mathbf{x}))
:
f_1,\dots,f_K \in \mathcal{C}, \, B: \{-1, +1\}^K \to \{-1,+1\}
\Big\}.
$$
If $V\subseteq \R^d$ is a subspace, we write $\mathcal{C}_{K,V}$ for the corresponding subclass in which all halfspace normal vectors lie in $V$.

We first compute a near-minimizer of the regularized logistic-loss objective over low-degree polynomials.

\begin{proposition}[Regularized logistic-loss polynomial regression]\label{prop:compute-P}
Fix $\mu>0$ to be an absolute constant and let $\nu\ge 0$. Let $D$ be a distribution over $(\mathbf{x},y)\in\R^d\times\{\pm1\}$ whose $\mathbf{x}$-marginal is $\cN(0,\mathbf{I}_d)$. Let $k\in\mathbb Z_+$ and $\eps,\delta>0$. There is an algorithm that draws
$
N=(dk)^{O(k)}\frac{\log(1/\delta)}{\eps^2}
$
samples from $D$, runs in time $\poly(N,d)$, and outputs a polynomial $P\in\cP_k$ such that
$
\cL_{\mu,\nu}(P)\le \inf_{P'\in\cP_k}\cL_{\mu,\nu}(P')+\eps
$
with probability at least $1-\delta$.
\end{proposition}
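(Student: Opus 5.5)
We will prove the proposition by empirical risk minimization of the regularized objective over a bounded convex set of Hermite coefficient vectors, combined with uniform convergence. We parametrize $P\in\cP_k$ by its coefficient vector $\mathbf{c}\in\R^m$ in the orthonormal Hermite basis, with $m=\binom{d+k}{k}=(dk)^{O(k)}$. In these coordinates $\E[P^2]=\|\mathbf{c}\|_2^2$. The matrix $\mathbf{M}(P)$ is an explicit quadratic function of $\mathbf{c}$: it equals $\mathbf{A}\mathbf{A}^\top$, where $\mathbf{A}$ is the $d\times m'$ Hermite coefficient matrix of $\nabla P$, and $\mathbf{A}$ is linear in $\mathbf{c}$. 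Hence $\tr(\mathbf{M}(P)^{1/2})=\|\mathbf{A}\|_*$ by the nuclear-norm identity referenced earlier (\Cref{app:lem:nuclear-hermite}). This term is convex in $\mathbf{c}$, and it is computable exactly without samples because $\mathbf{x}$ is exactly Gaussian. Likewise $\mu\|\mathbf{c}\|^2$ is known exactly. The only unknown part of $\cL_{\mu,\nu}$ is therefore the logistic term $\E[\ell(yP(\mathbf{x}))]$.

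\emph{Step 1: a priori bound on the minimizer.} Since $\cL_{\mu,\nu}(0)=\log 2$ and every term is nonnegative, any $P$ with $\cL_{\mu,\nu}(P)\le \log 2$ satisfies $\mu\|\mathbf{c}\|^2\le\log 2$. So we may restrict to the ball $\|\mathbf{c}\|_2\le R:=\sqrt{\log 2/\mu}=O(1)$ without loss. Strong convexity (from the $\mu$ term) further guarantees that the minimizer is unique and lies in this ball.

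\emph{Step 2: uniform convergence of the logistic term on the ball.} We will draw $N$ samples and define $\widehat{\cL}(\mathbf{c})$ by replacing only the logistic expectation with its empirical average. Since $\ell$ is $1$-Lipschitz, the Rademacher complexity of $\{(\mathbf{x},y)\mapsto\ell(y\langle\mathbf{c},\mathbf{H}(\mathbf{x})\rangle):\|\mathbf{c}\|\le R\}$ is at most $R\sqrt{\E\|\mathbf{H}(\mathbf{x})\|^2/N}=R\sqrt{m/N}$, where $\mathbf{H}$ is the Hermite feature vector (so $\E\|\mathbf{H}\|^2=m$).

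\emph{Main obstacle: concentration with unbounded features.} The losses are unbounded, so McDiarmid does not apply directly; this is the step that needs the most care. We will use a Chebyshev/median-of-means style bound. Each loss is bounded by $\log 2+|P(\mathbf{x})|$ and has second moment $O(1+R^2)$. Combining the symmetrization bound in expectation with Markov gives constant success probability for $N=O(m/\eps^2)$. We then boost to $1-\delta$ by running $O(\log(1/\delta))$ independent repetitions and selecting the candidate with the smallest objective on a fresh validation set. To make that selection valid, we compare candidates using validation estimates, for which a single fixed function has variance $O(1)$. Taking a median of means over $O(\log(1/\delta))$ blocks then gives a $1-\delta$ guarantee. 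Alternatively, we can use hypercontractivity to bound $\sup|P|$ tails and apply a truncation argument. Either route yields $N=(dk)^{O(k)}\log(1/\delta)/\eps^2$.

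\emph{Step 3: optimization.} $\widehat{\cL}$ is a convex function on the ball, a sum of smooth terms plus a nuclear norm of a linear map. It can be minimized to accuracy $\eps/3$ in $\poly(N,d)$ time, for example by the ellipsoid method or projected subgradient descent. Subgradients of $\|\mathbf{A}(\mathbf{c})\|_*$ are obtained from an SVD, and all gradients have polynomially bounded norm on the ball. Finally, chaining the bounds gives
\[
\cL(\widehat P)\le\widehat\cL(\widehat P)+\eps/3\le\widehat\cL(P^*)+2\eps/3\le\cL(P^*)+\eps,
\]
where $P^*$ is the population minimizer.
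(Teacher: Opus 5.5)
Your proposal is correct, but it handles the unbounded-feature concentration step by a different route from the paper.

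\textbf{The paper's route.} It does not boost. It truncates the Hermite feature map: the loss is replaced by $\ell(0)=\log 2$ whenever $\|\Phi(\mathbf{x})\|_2>\Lambda:=8Rm/\eps$. The resulting bias is bounded by $R\,\E\|\Phi\|_2^2/\Lambda=\eps/8$, using only second moments. It then obtains a single high-probability uniform deviation bound for the truncated class via symmetrization, contraction, and Bousquet's form of Talagrand's inequality, with variance proxy $R^2$ and range $R\Lambda$. The algorithm is a single approximate ERM on the truncated empirical objective.

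\textbf{Your route.} You keep the untruncated empirical loss and get the in-expectation bound $O(R\sqrt{m/N})$ from symmetrization and contraction. Markov's inequality then gives constant success probability, and you amplify with $O(\log(1/\delta))$ independent runs followed by median-of-means validation. That selection step is sound because every candidate lies in $\mathcal B_R$. Hence each fixed candidate's loss has second moment $O(1+R^2)$, and $O(\log(T/\delta))$ blocks plus a union bound over the $T$ candidates suffice, with the regularizers computed exactly.

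\textbf{Comparison.} Your argument uses only elementary tools (Markov, Chebyshev), at the cost of an extra confidence-boosting layer in the algorithm. The paper's gives a one-shot ERM guarantee, at the cost of a heavier concentration inequality. Both yield $N=O(m\log(1/\delta)/\eps^2)$. Your alternative ``truncation'' route is essentially the paper's, though hypercontractivity is unnecessary there: the bound $\E[\|\Phi\|_2\Ind\{\|\Phi\|_2>\Lambda\}]\le m/\Lambda$ already suffices.

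\textbf{One point to tighten.} The truncation also makes the paper's empirical objective $\Lambda$-Lipschitz deterministically, which directly supports the $\poly(N,d)$ optimization claim. In your version the subgradient norm of the empirical loss is as large as $\max_i\|\Phi(\mathbf{x}_i)\|_2$. You should either bound this with high probability or rely on the ellipsoid method's logarithmic dependence on it, rather than simply asserting it is polynomially bounded.
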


A proof can be found in Appendix~\ref{app:compute_reg}.

Klivans et al. \cite{KOS08} showed that Gaussian surface area is a useful instrument for learning broad families of geometric concepts. 
We recall the definition.

\begin{definition}[Gaussian surface area]\label{def:gsa}
Let $\gamma_d$ denote the standard Gaussian measure on $\R^d$. For a measurable set $A\subseteq\R^d$ and $\delta>0$, define
$A_\delta:=\{\mathbf{x}\in\R^d:\operatorname{dist}(\mathbf{x},A)\le \delta\}$.
The Gaussian surface area of $A$ is
$$
\GSA(A):=\liminf_{\delta\to 0^+}\frac{\gamma_d(A_\delta\setminus A)}{\delta}.
$$
\end{definition}

For Boolean functions of halfspaces, we use the following Gaussian surface area bound.

\begin{fact}[GSA bound for Boolean functions of $K$ halfspaces {\normalfont\protect\cite[Fact~17]{KOS08}}]\label{fact:gsa-boolean}
There exists an absolute constant $C>0$ such that for every $K\ge 1$ and every $f\in\mathcal{C}_{K}$,
$$
\GSA(f)\le O(K).
$$
\end{fact}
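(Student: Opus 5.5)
The plan is to bound the Gaussian surface area directly from \Cref{def:gsa}: the boundary of any set cut out by $K$ hyperplanes is contained in the union of those hyperplanes, and each hyperplane contributes only $O(1)$ surface area. Write $f(\mathbf{x})=B(f_1(\mathbf{x}),\dots,f_K(\mathbf{x}))$ with $f_i(\mathbf{x})=\sign(\ip{\mathbf{w}_i}{\mathbf{x}}+t_i)$ and $\|\mathbf{w}_i\|_2=1$. Let $A:=\{\mathbf{x}:f(\mathbf{x})=1\}$; the set $\{f=-1\}$ has the same form with $B$ replaced by $-B$. Let $H_i:=\{\mathbf{x}:\ip{\mathbf{w}_i}{\mathbf{x}}+t_i=0\}$. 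Constant classifiers $f_i$ contribute no hyperplane and can simply be dropped.

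\emph{Step 1 (geometric containment).} I will show that for every $\delta>0$,
\[
A_\delta\setminus A\subseteq \bigcup_{i=1}^K \{\mathbf{x}:\operatorname{dist}(\mathbf{x},H_i)\le\delta\}.
\]
Take $\mathbf{x}\in A_\delta\setminus A$. For each $n$ there is $\mathbf{a}_n\in A$ with $\|\mathbf{x}-\mathbf{a}_n\|\le\delta+1/n$.

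Since $f(\mathbf{x})=-1\neq 1=f(\mathbf{a}_n)$, the sign vectors $(f_i(\mathbf{x}))_i$ and $(f_i(\mathbf{a}_n))_i$ must differ in some coordinate $i_n$. The affine function $\ip{\mathbf{w}_{i_n}}{\cdot}+t_{i_n}$ then takes values of opposite sign, or zero, at the two endpoints. This holds whatever convention is used for $\sign(0)$. Hence the segment $[\mathbf{x},\mathbf{a}_n]$ meets $H_{i_n}$, which gives $\operatorname{dist}(\mathbf{x},H_{i_n})\le\delta+1/n$.

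Some index $i$ occurs for infinitely many $n$. Letting $n\to\infty$ along those $n$ gives $\operatorname{dist}(\mathbf{x},H_i)\le\delta$.

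\emph{Step 2 (slab measure).} Since $\|\mathbf{w}_i\|_2=1$, the distance from $\mathbf{x}$ to $H_i$ is $|\ip{\mathbf{w}_i}{\mathbf{x}}+t_i|$. The quantity $\ip{\mathbf{w}_i}{\mathbf{x}}$ is distributed as $\cN(0,1)$ under $\gamma_d$. Therefore
\[
\gamma_d\big(\{\operatorname{dist}(\mathbf{x},H_i)\le\delta\}\big)=\Pr_{g\sim\cN(0,1)}\big[|g+t_i|\le\delta\big]\le\frac{2\delta}{\sqrt{2\pi}}.
\]
By a union bound and Step 1, $\gamma_d(A_\delta\setminus A)/\delta\le 2K/\sqrt{2\pi}$ for every $\delta>0$. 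Taking $\liminf_{\delta\to0^+}$ yields $\GSA(f)\le \sqrt{2/\pi}\,K=O(K)$.

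I expect the only delicate point to be Step 1. Two issues need care there: $A$ need not be closed, and ties $\sign(0)$ occur on the hyperplanes. Working with approximating points $\mathbf{a}_n$ and a closed-slab neighborhood handles both, as sketched above. The remaining steps are a one-dimensional Gaussian density bound followed by a union bound.
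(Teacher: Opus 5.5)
Your proof is correct. Step~1's approximating-sequence and pigeonhole argument validly places $A_\delta\setminus A$ inside the union of the $K$ closed $\delta$-slabs around the hyperplanes $H_i$, whether or not $A$ is closed and however $\sign(0)$ is resolved; the slab bound plus a union bound then give the explicit estimate $\GSA(f)\le\sqrt{2/\pi}\,K$.

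The paper gives no proof of its own and simply cites \cite[Fact~17]{KOS08}. Your argument is the standard justification of that fact: the boundary of a Boolean function of $K$ halfspaces lies in the union of $K$ hyperplanes, each of $O(1)$ Gaussian surface area. You carry it out directly against the paper's outer-neighborhood, $\liminf$ definition, so it is a self-contained verification of the cited fact.
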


One can view Gaussian surface area as a quantitative measure of the regularity of the decision boundary, which we use to obtain the required low-degree approximation. For a single halfspace, the one-dimensional FT-mollification argument of \cite{DKN10} gives a sharper approximation bound, as discussed in \Cref{halfspace}. For multidimensional concepts, such as Boolean functions of halfspaces, the decision boundary is no longer determined by a single normal direction. We therefore use a Hermite-concentration argument based on the Ornstein--Uhlenbeck operator and Gaussian surface area, which applies to general decision boundaries.

\begin{theorem}[Gaussian $L_1$ approximation for Boolean functions of halfspaces]\label{thm:boolean-approx}
There exist absolute constants $C_{\mathrm{BH}},C'_{\mathrm{BH}}>0$ with the following property. For every integer $K\ge 1$, every $\tau\in(0,1/10)$, and every $f\in\mathcal{C}_{K}$, there exists a polynomial $S:\R^d\to\R$ of degree at most $C_{\mathrm{BH}}\frac{K^2\log(1/\tau)}{\tau^2}$ such that
$$
\E_{\mathbf{x}\sim\cN(0,\mathbf{I}_d)}[\,|f(\mathbf{x})-S(\mathbf{x})|\,]\le \tau,
$$
$$
\E_{\mathbf{x}\sim\cN(0,\mathbf{I}_d)}\|\nabla S(\mathbf{x})\|^2\le C'_{\mathrm{BH}}\frac{K^2}{\tau},
$$
$$
\E_{\mathbf{x}\sim\cN(0,\mathbf{I}_d)}[S(\mathbf{x})^2]\le 1.
$$
\end{theorem}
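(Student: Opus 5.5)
The construction will be Ornstein--Uhlenbeck smoothing followed by Hermite truncation. Let $T_\rho$ denote the Ornstein--Uhlenbeck semigroup, $T_\rho f(\mathbf{x})=\E_{\mathbf{z}}[f(\rho\mathbf{x}+\sqrt{1-\rho^2}\mathbf{z})]$, and write $f=\sum_\alpha \hat f(\alpha)H_\alpha$ in the Hermite basis. The candidate polynomial is
$$
S:=\sum_{|\alpha|\le k}\rho^{|\alpha|}\hat f(\alpha)H_\alpha,
$$
that is, the degree-$k$ truncation of $T_\rho f$, with $\rho=1-\theta$ and $k$ to be chosen. The second moment bound is immediate: by Parseval, $\E[S^2]\le\sum_\alpha\hat f(\alpha)^2=\E[f^2]=1$.

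For the $L_1$ bound, we split
$$
\E|f-S|\le \E|f-T_\rho f|+\|T_\rho f-S\|_2 .
$$
The first term is controlled by Ledoux's inequality, in the form used by \cite{KOS08}: $\E|f-T_\rho f|\le O(\sqrt{\theta})\,\GSA(f)$. By \Cref{fact:gsa-boolean} this is $O(K\sqrt\theta)$. Taking $\theta=c\tau^2/K^2$ makes the first term at most $\tau/2$. The second term equals $\big(\sum_{|\alpha|>k}\rho^{2|\alpha|}\hat f(\alpha)^2\big)^{1/2}\le\rho^{k}\le e^{-\theta k}$. This is at most $\tau/2$ once $k\ge\theta^{-1}\log(2/\tau)=O(K^2\log(1/\tau)/\tau^2)$, which matches the claimed degree.

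For the gradient bound we use $\E\|\nabla S\|^2=\sum_{|\alpha|\le k}|\alpha|\rho^{2|\alpha|}\hat f(\alpha)^2$. The crude estimate $\max_j j\rho^{2j}\le 1/(2e\theta)$ only gives $O(K^2/\tau^2)$, which is too weak. We therefore need the finer fact that $f$ has bounded-surface-area noise sensitivity at every scale. For $\sigma\in(0,1)$,
$$
W_{\ge 1/\sigma}:=\sum_{|\alpha|\ge 1/\sigma}\hat f(\alpha)^2\le O\big(\tfrac12\E[f(f-T_{1-\sigma}f)]\big)\le O\big(\E|f-T_{1-\sigma}f|\big)\le O(K\sqrt{\sigma}),
$$
where the middle step uses $|f|\le1$ and the last uses Ledoux's inequality again. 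Write $w_j:=\sum_{|\alpha|=j}\hat f(\alpha)^2$. Summing by parts over dyadic shells $j\in[2^m,2^{m+1})$, each shell contributes at most $2^{m+1}\rho^{2^{m+1}}\cdot O(K2^{-m/2})=O(K2^{m/2}e^{-\theta 2^{m+1}})$. The geometric-type sum over $m$ is dominated near $2^m\approx1/\theta$, giving $O(K\theta^{-1/2})$. With $\theta=c\tau^2/K^2$ this is $O(K\cdot K/\tau)=O(K^2/\tau)$, as required.

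The main obstacle is this gradient estimate. It requires the tail-weight bound $W_{\ge j}\lesssim K/\sqrt j$ at all scales rather than only at the truncation degree, together with care in converting Ledoux's $L_1$ inequality into a bound on Hermite weight (via $\E[f(f-T_\rho f)]\le\E|f-T_\rho f|$). The remaining pieces are Parseval computations and the choice of constants $C_{\mathrm{BH}},C'_{\mathrm{BH}}$.
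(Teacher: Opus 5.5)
Your proof is correct. The construction is exactly the paper's: Ornstein--Uhlenbeck smoothing with $1-\rho=\Theta(\tau^2/K^2)$, Hermite truncation at degree $O(\log(1/\tau)/(1-\rho))$, Parseval for the $L_2$ bound, and the Gaussian-surface-area noise-sensitivity bound of \cite{KOS08} for the $L_1$ bound.

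The only genuine difference is the gradient estimate. You bound the Hermite tail weight $\sum_{|\alpha|\ge j}\hat f(\alpha)^2=O(K/\sqrt{j})$ at every scale, applying the noise-sensitivity bound at each $\sigma=2^{-m}$, and then sum over dyadic shells. The paper instead uses the pointwise inequality $n\rho^{2n}\le (1-\rho^n)/(1-\rho)$, which follows from $1+\rho+\cdots+\rho^{n-1}\ge n\rho^{n-1}$. With $w_n=\sum_{|\alpha|=n}\hat f(\alpha)^2$, this gives in one line
\[
\E\|\nabla S\|^2\le \frac{1}{1-\rho}\sum_{n\ge1}(1-\rho^n)w_n=\frac{2\Pr[f(\mathbf{x})\neq f(\mathbf{y})]}{1-\rho}\le O\Big(\frac{\GSA(f)}{\sqrt{1-\rho}}\Big).
\]
This uses noise sensitivity only at the single scale $\rho$ already fixed for the $L_1$ bound.

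The paper's route is shorter and needs only one-scale information about $f$. Yours proves the stronger spectral-concentration profile $W_{\ge j}\lesssim K/\sqrt{j}$, which is reusable for other weighted sums such as $\sum_j j^s\rho^{2j}w_j$. The cost is invoking the noise-sensitivity bound at all scales, which is harmless here because the GSA bound is scale-free.

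A minor point: what you call summation by parts is just bounding each shell's weight by the corresponding tail weight, and that is all the argument needs.
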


A proof can be found in Appendix~\ref{app:multi-ou}. Compared with the one-dimensional halfspace approximation \Cref{app:thm:threshold-approx}, this argument incurs an additional logarithmic factor $\log(1/\tau)$ in the polynomial degree.

We also use the following lemma.
\begin{lemma}[Subspace Gaussian Poincar\'e]\label{lem:subspace-poincare}
Let $W\subseteq \R^d$ be a subspace of dimension $r$, and let $\{\boldsymbol{\xi}_1,\dots,\boldsymbol{\xi}_r\}$ be an orthonormal basis of $W$. Write $\mathbf{x}=\mathbf{u}+\mathbf{z}$ with $\mathbf{u}\in W^\perp$ and $\mathbf{z}=\sum_{i=1}^r z_i\boldsymbol{\xi}_i$, where $\mathbf{u}$ and $z_1,\dots,z_r$ are independent and $z_1,\dots,z_r \stackrel{\mathrm{i.i.d.}}{\sim} \cN(0,1)$. Then for every weakly differentiable $G:\R^d\to\R$,
$$
\E\!\left[\Big(G(\mathbf{x})-\E[G(\mathbf{x})\mid \mathbf{u}]\Big)^2\right]
\le
\sum_{i=1}^r \E\big[(\partial_{\boldsymbol{\xi}_i}G(\mathbf{x}))^2\big].
$$
\end{lemma}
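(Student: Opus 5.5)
The plan is to reduce to the standard Gaussian Poincaré inequality on $\R^r$, applied fiberwise. Since $\mathbf{u}$ and $\mathbf{z}$ are independent, I condition on $\mathbf{u}$ and work on the fiber $\mathbf{u}+W$. Fix $\mathbf{u}\in W^\perp$ and define the function $g_{\mathbf{u}}:\R^r\to\R$ by
$$g_{\mathbf{u}}(z_1,\dots,z_r):=G\Big(\mathbf{u}+\sum_{i=1}^r z_i\boldsymbol{\xi}_i\Big).$$
Because $(z_1,\dots,z_r)\sim\cN(0,\mathbf{I}_r)$ is independent of $\mathbf{u}$, the conditional expectation $\E[G(\mathbf{x})\mid\mathbf{u}]$ is exactly $\E_{\mathbf{z}}[g_{\mathbf{u}}(\mathbf{z})]$. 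Likewise, the conditional second moment of $G-\E[G\mid\mathbf{u}]$ is $\Var_{\mathbf{z}}(g_{\mathbf{u}})$.

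On each fiber I apply the $r$-dimensional Gaussian Poincaré inequality $\Var(g)\le\E\|\nabla g\|^2$. This inequality can be assumed as standard; alternatively, it can be derived by tensorizing the one-dimensional case, namely the case $r=1$ of \Cref{lem:cond-poincare}, via the Efron–Stein/martingale decomposition over the coordinates $z_1,\dots,z_r$. By the chain rule, $\partial_{z_i}g_{\mathbf{u}}(\mathbf{z})=\partial_{\boldsymbol{\xi}_i}G(\mathbf{u}+\mathbf{z})$, so
$$\|\nabla g_{\mathbf{u}}(\mathbf{z})\|^2=\sum_{i=1}^r(\partial_{\boldsymbol{\xi}_i}G(\mathbf{x}))^2.$$
Taking expectation over $\mathbf{u}$ and using the tower property gives
$$\E\big[(G-\E[G\mid\mathbf{u}])^2\big]=\E_{\mathbf{u}}\big[\Var_{\mathbf{z}}(g_{\mathbf{u}})\big]\le\sum_{i=1}^r\E\big[(\partial_{\boldsymbol{\xi}_i}G(\mathbf{x}))^2\big],$$
which is the claim.

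If I prove the tensorization step rather than cite it, I would write $g-\E g=\sum_{j=1}^r(E_j g-E_{j-1}g)$, where $E_j$ denotes the conditional expectation given $z_1,\dots,z_j$. These increments are orthogonal. Each increment satisfies $\E[(E_jg-E_{j-1}g)^2]\le\E[(g-E_{\setminus j}g)^2]$, where $E_{\setminus j}$ integrates out $z_j$ only; this follows from Jensen's inequality. The right-hand side is then bounded by the one-dimensional Poincaré inequality in $z_j$, which gives $\E[(\partial_{z_j}g)^2]$.

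The main obstacle is only technical: weak differentiability and integrability on almost every fiber. I would handle it by Fubini, assuming the right-hand side is finite (otherwise the statement is trivial). Then $\partial_{\boldsymbol{\xi}_i}G\in L^2$ restricts to $L^2$ on almost every fiber, and $g_{\mathbf{u}}$ is weakly differentiable there. In our application $G$ is $\phi\circ P$ with $P$ a polynomial, which is smooth, so no subtlety actually arises.
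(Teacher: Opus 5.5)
Your proposal is correct and follows the paper's proof: condition on $\mathbf{u}$, apply the $r$-dimensional Gaussian Poincar\'e inequality to $g_{\mathbf{u}}$ (the paper simply cites it as standard), identify $\partial_{z_i} g_{\mathbf{u}}$ with $\partial_{\boldsymbol{\xi}_i} G$, and average over $\mathbf{u}$. Your optional Efron--Stein tensorization sketch and the Fubini remark on fiberwise weak differentiability are both valid details that the paper does not spell out.
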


A proof can be found in Appendix~\ref{app:gp}. We will use this concentration inequality to show that averaging over a low-influence direction changes the relevant quantities by only a small amount.

\subsection{Reducing to a Low-Dimensional Search Space}\label{sec:existence-guarantee}

%We now extend the low-dimensional subspace argument to arbitrary Boolean functions of halfspaces. 
The first step in our analysis is to define the appropriate averaging operator.
Let $V\subseteq\R^d$ be a subspace and let
$
W:=\operatorname{span}\{(\mathbf{w}_1)_{V^\perp},\dots,(\mathbf{w}_K)_{V^\perp}\}\subseteq V^\perp
$. Let  $\{\boldsymbol{\xi}_1,\dots,\boldsymbol{\xi}_r\}$ be an orthonormal basis of $W$, where $r=\dim(W)\le K$. Decompose $\mathbf{x}=\mathbf{x}_{W^\perp}+\mathbf{x}_{W}$, where $\mathbf{x}_{W^\perp}$ and $\mathbf{x}_{W}$ are the projections of $\mathbf{x}$ onto $W^\perp$ and $W$, respectively. For
$
\mathbf{z}=\sum_{i=1}^r z_i \boldsymbol{\xi}_i
$
with $z_i\stackrel{\mathrm{i.i.d.}}{\sim}\cN(0,1)$ independent of $\mathbf{x}_{W^\perp}$, we have a resampling identity 
\begin{equation}\label{eq:B-resample}
\mathbf{x}=\mathbf{x}_{W^\perp}+\mathbf{x}_{W} \overset{\mathrm d}{=} \mathbf{x}_{W^\perp}+\mathbf{z}.
\end{equation}

This is analogous to the one-dimensional resampling identity used for a single halfspace in \eqref{eq:resample}, except that now $\mathbf{z}$ is a Gaussian vector lying in the subspace $W$.

\begin{definition}[Averaging a Boolean function of halfspaces]\label{def:boolean-averaging}

For each $\mathbf{z}\in W$, define the shifted Boolean function of halfspaces $h_{\mathbf{z}}(\mathbf{x}):=f(\mathbf{x}_{W^\perp}+\mathbf{z})$. Based on the resampling identity \eqref{eq:B-resample}, define
\begin{equation}\label{eq:B-averaged}
f_{V}(\mathbf{x}):=\E_{\mathbf{z}}[h_{\mathbf{z}}(\mathbf{x})] = \E_{\mathbf{z}}[f(\mathbf{x}_{W^\perp}+\mathbf{z})].
\end{equation}
If $W=\{0\}$, define $f_{V}(\mathbf{x}):=f(\mathbf{x})$.
\end{definition}

For each fixed $\mathbf{z}\in W$, the shifted function $h_{\mathbf{z}}$ belongs to $\mathcal{C}_{K,V}$. Indeed, for each $j$, set $t_{j,\mathbf{z}}:=t_j+\langle (\mathbf{w}_j)_{V^\perp},\mathbf{z}\rangle$. Since $(\mathbf{w}_j)_{V^\perp}\in W$ and $(\mathbf{w}_j)_{V}\perp W$, we have
$$
f_j(\mathbf{x}_{W^\perp}+\mathbf{z}) = \sign(\langle (\mathbf{w}_j)_{V},\mathbf{x}\rangle+t_{j,\mathbf{z}}).
$$
Thus each shifted halfspace has normal vector in $V$. Applying the same Boolean map $B$ gives $h_{\mathbf{z}}\in\mathcal{C}_{K,V}$ for every fixed $\mathbf{z}$, and therefore $f_{V}\in\conv(\mathcal{C}_{K,V})$.

\begin{remark}
If $(\mathbf{w}_j)_{V}=0$, the shifted halfspace is a constant function, which we regard as a degenerate halfspace. 
\end{remark}

\begin{theorem}[Low-dimensional reduction for Boolean functions of halfspaces]\label{thm:structural-B}
There exists an absolute constant $C>0$ and an absolute constant $c_\nu>0$ such that the following holds. Fix $K\ge 1$ and $\eps\in(0,1/10)$, and set $k:=CK^2\log(1/\eps)/\eps^2$, $\mu:=1/128$, $ \eta:=\eps^2/(1024K)$, and $\nu:=c_\nu\eps^{3/2}/K^{3/2}$.
Let $P\in\cP_k$ satisfy $\cL_{\mu,\nu}(P)\le \inf_{P'\in\cP_k}\cL_{\mu,\nu}(P')+O(\eps^3)$, where the hidden constant is sufficiently small. Let $V$ be the span of the eigenvectors of $\mathbf{M}(P)$ whose eigenvalues are at least $\eta$. Then for every $f\in\mathcal{C}_{K}$, the averaged function $f_V$ from \Cref{def:boolean-averaging} satisfies
\begin{equation}\label{eq:B-averaging-main-bound}
\E_{(\mathbf{x},y)\sim D}[(f(\mathbf{x})-f_V(\mathbf{x}))y]\le \eps.
\end{equation}
Consequently, there exists $h\in\mathcal{C}_{K,V}$ such that
$$
\E_{(\mathbf{x},y)\sim D}[(f(\mathbf{x})-h(\mathbf{x}))y]\le \eps.
$$
\end{theorem}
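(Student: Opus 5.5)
We follow the four-step template of the proof of \Cref{app:thm:structural}, with the one-dimensional averaging replaced by averaging over the subspace $W\subseteq V^\perp$ of dimension $r\le K$. Fix $f=B(f_1,\dots,f_K)\in\mathcal{C}_K$. If $W=\{0\}$ the claim is trivial, so assume $r\ge1$.

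\textbf{Step 1 (correlation with the logistic gradient).} Using \Cref{lem:label-flip-decomposition}, write
\[
\E[(f-f_V)y\psi(yP)]=\tfrac12\E[(f-f_V)y]-\E[(f-f_V)\phi(P)].
\]
Set $G=\phi(P)$ and $G_V(\mathbf{x})=\E_{\mathbf{z}}[G(\mathbf{x}_{W^\perp}+\mathbf{z})]$. Since $\E[f-f_V\mid \mathbf{x}_{W^\perp}]=0$, the term $G_V$ can be subtracted for free. \Cref{lem:subspace-poincare} together with $|\phi'|\le 1/4$ then gives
\[
\|G-G_V\|_2^2\le \tfrac1{16}\sum_{i=1}^r\boldsymbol{\xi}_i^\top\mathbf{M}(P)\boldsymbol{\xi}_i\le \tfrac{r\eta}{16}\le\tfrac{K\eta}{16},
\]
because each $\boldsymbol{\xi}_i\in V^\perp$ and $\mathbf{M}(P)$ restricted to $V^\perp$ has all eigenvalues below $\eta$. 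Cauchy--Schwarz with $|f-f_V|\le2$ bounds the error term by $\tfrac12\sqrt{K\eta}=O(\eps/\sqrt{1024})$. This explains the choice $\eta=\eps^2/(1024K)$.

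\textbf{Step 2 (low-degree approximation and regularizer control).} Take $S$ from \Cref{thm:boolean-approx} with $\tau=\Theta(\eps)$, so that $\deg S\le k$ for $C$ large. Put
\[
Q=S-S_W,\qquad S_W(\mathbf{x})=\E_{\mathbf{z}}[S(\mathbf{x}_{W^\perp}+\mathbf{z})].
\]
Averaging over $W$ preserves degree. Moreover $\E Q^2=\E\Var(S\mid\mathbf{x}_{W^\perp})\le1$, and by the same triangle-inequality-plus-Jensen argument as in \Cref{lem:q-approx}, $\E|Q-(f-f_V)|\le2\tau$. For the quadratic regularizer, $\E[PQ]=\E[(P-P_W)Q]$, and \Cref{lem:subspace-poincare} gives $\|P-P_W\|_2\le\sqrt{K\eta}$, so $\E[PQ]\le\sqrt{K\eta}$.

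For the nuclear term, Jensen gives $\E\|\nabla Q\|^2\le 4\E\|\nabla S\|^2=O(K^2/\tau)$. The key structural point is rank: $f$ depends only on $\operatorname{span}\{\mathbf{w}_j\}$, and $S$ can be taken to depend only on this span. If the construction in \Cref{thm:boolean-approx} does not already give this, we project it: conditioning on that span does not increase any of the three norms. Then $S_W$ depends only on the span of the projections of the $\mathbf{w}_j$ onto $W^\perp$. Hence $\rank\mathbf{M}(Q)\le 2K$, and by \Cref{lem:nuclear-rank-trace},
\[
\tr(\mathbf{M}(Q)^{1/2})\le\sqrt{2K\cdot O(K^2/\tau)}=O(K^{3/2}\tau^{-1/2}).
\]
With $\nu=c_\nu\eps^{3/2}/K^{3/2}$ this makes $\nu\tr(\mathbf{M}(Q)^{1/2})=O(c_\nu\eps)$, which matches the chosen scaling.

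\textbf{Step 3 (conclusion).} Combining Steps 1 and 2 with $|y\psi(yP)|\le1$ gives
\[
\mathsf{Gain}_P(Q)\ge\tfrac12\E[(f-f_V)y]-O(\sqrt{K\eta})-2\tau-O(\nu K^{3/2}\tau^{-1/2}).
\]
\Cref{lem:no-large-gain} applies verbatim, since $\E Q^2\le 1\le 5$, and with $\alpha=O(\eps^3)$ it gives $\mathsf{Gain}_P(Q)\le\eps/32$. Choosing $\tau=\eps/32$ and $c_\nu$ small yields $\E[(f-f_V)y]\le\eps$.

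Finally, $f_V=\E_{\mathbf{z}}[h_{\mathbf{z}}]$ with every $h_{\mathbf{z}}\in\mathcal{C}_{K,V}$. Averaging over $\mathbf{z}$, some $h_{\mathbf{z}}$ satisfies $\E[(f-h_{\mathbf{z}})y]\le\eps$, which proves the second claim.

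The main obstacle is the nuclear-norm bound in Step 2, specifically ensuring that $\nabla Q$ is confined to an $O(K)$-dimensional space. That is what gives the $K^{3/2}$ factor rather than a dimension-dependent one. I would handle it by replacing $S$ with its conditional expectation onto $\operatorname{span}\{\mathbf{w}_j\}$, and checking that this keeps the degree, $L_1$ error, $L_2$ norm and gradient bounds of \Cref{thm:boolean-approx}.
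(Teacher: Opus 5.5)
Your proposal is correct and follows the paper's proof essentially step for step. It uses the same label-flip decomposition and subspace Poincar\'e bound giving $\tfrac12\sqrt{K\eta}$, the same $Q=S-S_W$ with $\E[Q^2]\le 1$ and $\E[PQ]\le\sqrt{K\eta}$, the same rank-$2K$ nuclear-norm bound $O(K^{3/2}\tau^{-1/2})$, and the same gain argument with $\tau=\eps/32$. Your fallback of replacing $S$ by its conditional expectation given the projection onto $\operatorname{span}\{\mathbf{w}_j\}$ is valid but unnecessary: the paper's appendix version of the approximation theorem already states that $S$ depends only on that span, because Ornstein--Uhlenbeck smoothing and Hermite-level truncation commute with rotations.
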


\begin{proof}
Fix $f\in\mathcal{C}_{K}$. If $W=\{0\}$, then $f_V=f$, and the bound \eqref{eq:B-averaging-main-bound} is immediate. Hence assume $W\neq\{0\}$.

The next lemma controls the correlation between the averaging residual $f-f_{V}$ and the logistic gradient, i.e. the gradient of the logistic loss at the margin $yP$.

\begin{lemma}[Correlation with the logistic gradient]\label{lem:B-logistic-gradient-correlation}
For the averaged function $f_{V}$ from \Cref{def:boolean-averaging},
\begin{equation}\label{eq:B-g-logistic-grad}
\E_{(\mathbf{x},y)\sim D}[(f(\mathbf{x})-f_{V}(\mathbf{x}))\,y\,\psi(yP(\mathbf{x}))]
\ge
\frac{1}{2}\E_{(\mathbf{x},y)\sim D}[(f(\mathbf{x})-f_{V}(\mathbf{x}))y]
-
\frac{1}{2}\sqrt{K\eta}.
\end{equation}
\end{lemma}

\begin{proof}
Using \Cref{lem:label-flip-decomposition}, \begin{equation}\label{eq:B-symmetric-corollary} \E_{(\mathbf{x},y)\sim D}[(f(\mathbf{x})-f_{V}(\mathbf{x}))\,y\,\psi(yP(\mathbf{x}))] = \frac{1}{2}\E_{(\mathbf{x},y)\sim D}[(f(\mathbf{x})-f_{V}(\mathbf{x}))y] - \E_{\mathbf{x}\sim\cN(0,\mathbf{I}_d)}[(f(\mathbf{x})-f_{V}(\mathbf{x}))\phi(P(\mathbf{x}))]. \end{equation}
We first bound the model-dependent term in Eq.\;\eqref{eq:B-symmetric-corollary}.  The estimate we need is
\begin{equation}\label{eq:B-gphi-bound}
\E_{\mathbf{x}\sim\cN(0,\mathbf{I}_d)}[(f(\mathbf{x})-f_{V}(\mathbf{x}))\phi(P(\mathbf{x}))]
\le
\frac{1}{2}\sqrt{K\eta}.
\end{equation}

Let $G(\mathbf{x}):=\phi(P(\mathbf{x}))$ and
$
G_{V}(\mathbf{x}):=\E_{\mathbf{z}}[G(\mathbf{x}_{W^\perp}+\mathbf{z})].
$
Since $G_{V}(\mathbf{x})$ depends only on $\mathbf{x}_{W^\perp}$,
$$
\E_{\mathbf{x}\sim\cN(0,\mathbf{I}_d)}[(f(\mathbf{x})-f_{V}(\mathbf{x}))G_{V}(\mathbf{x})]
=
\E_{\mathbf{x}\sim\cN(0,\mathbf{I}_d)}
\!\left[
G_{V}(\mathbf{x})\,
\E_{\mathbf{x}\sim\cN(0,\mathbf{I}_d)}
[f(\mathbf{x})-f_{V}(\mathbf{x})\mid \mathbf{x}_{W^\perp}]
\right]
=
0,
$$
where the last equality follows from the definition of $f_{V}$. Hence
$$
\E_{\mathbf{x}\sim\cN(0,\mathbf{I}_d)}[(f(\mathbf{x})-f_{V}(\mathbf{x}))\phi(P(\mathbf{x}))]
=
\E_{\mathbf{x}\sim\cN(0,\mathbf{I}_d)}[(f(\mathbf{x})-f_{V}(\mathbf{x}))(G(\mathbf{x})-G_{V}(\mathbf{x}))].
$$
By \Cref{lem:subspace-poincare},
$$
\E_{\mathbf{x}\sim\cN(0,\mathbf{I}_d)}[(G(\mathbf{x})-G_{V}(\mathbf{x}))^2]
\le
\sum_{i=1}^r
\E_{\mathbf{x}\sim\cN(0,\mathbf{I}_d)}[(\partial_{\boldsymbol{\xi}_i}G(\mathbf{x}))^2].
$$
Since $|\phi'|\le 1/4$,
$$
\sum_{i=1}^r
\E_{\mathbf{x}\sim\cN(0,\mathbf{I}_d)}[(\partial_{\boldsymbol{\xi}_i}G(\mathbf{x}))^2]
\le
\frac{1}{16}
\sum_{i=1}^r
\E_{\mathbf{x}\sim\cN(0,\mathbf{I}_d)}[(\partial_{\boldsymbol{\xi}_i}P(\mathbf{x}))^2]
=
\frac{1}{16}\sum_{i=1}^r \boldsymbol{\xi}_i^\top \mathbf{M}(P)\boldsymbol{\xi}_i
\le
\frac{r\eta}{16}
\le
\frac{K\eta}{16},
$$
because every $\boldsymbol{\xi}_i\in W\subseteq V^\perp$, and all eigenvalues of $\mathbf{M}(P)$ on $V^\perp$ are smaller than $\eta$. Finally, $|f(\mathbf{x})-f_{V}(\mathbf{x})|\le 2$, so the Cauchy--Schwarz inequality gives
$$
\E_{\mathbf{x}\sim\cN(0,\mathbf{I}_d)}
[(f(\mathbf{x})-f_{V}(\mathbf{x}))(G(\mathbf{x})-G_{V}(\mathbf{x}))]
\le
\|f-f_{V}\|_2\|G-G_{V}\|_2
\le
2\cdot\frac{\sqrt{K\eta}}{4}
=
\frac{1}{2}\sqrt{K\eta}.
$$
This proves Eq.\;\eqref{eq:B-gphi-bound}. Combining Eq.\;\eqref{eq:B-symmetric-corollary} with \eqref{eq:B-gphi-bound} gives Eq.\;\eqref{eq:B-g-logistic-grad}.
\end{proof}

\begin{lemma}[Low-degree approximation of $f-f_{V}$]\label{lem:B-q-approx}
Fix $\tau\in(0,1/10)$. Suppose
$
k\ge C_{\mathrm{BH}}\frac{K^2\log(1/\tau)}{\tau^2}.
$
Then there exists $Q\in\cP_k$ such that
\begin{equation}\label{eq:B-q-approx-bound}
\E_{\mathbf{x}\sim\cN(0,\mathbf{I}_d)}
\big[|Q(\mathbf{x})-(f(\mathbf{x})-f_{V}(\mathbf{x}))|\big]
\le 2\tau,
\end{equation}
\begin{equation}\label{eq:B-q-bound}
\E_{\mathbf{x}\sim\cN(0,\mathbf{I}_d)}[Q(\mathbf{x})^2]\le 1.
\end{equation}
\end{lemma}

\begin{proof}
By \Cref{thm:boolean-approx}, there exists a polynomial $S$ of degree at most
$
C_{\mathrm{BH}}\frac{K^2\log(1/\tau)}{\tau^2}
$
such that
$
\E_{\mathbf{x}\sim\cN(0,\mathbf{I}_d)}[|S(\mathbf{x})-f(\mathbf{x})|]\le \tau
$
and
$
\E_{\mathbf{x}\sim\cN(0,\mathbf{I}_d)}[S(\mathbf{x})^2]\le 1.
$
Since $k\ge C_{\mathrm{BH}}\frac{K^2\log(1/\tau)}{\tau^2}$, we have $S\in\cP_k$. Define
$
S_{V}(\mathbf{x}):=\E_{\mathbf{z}}[S(\mathbf{x}_{W^\perp}+\mathbf{z})]
$
and set
$
Q(\mathbf{x}):=S(\mathbf{x})-S_{V}(\mathbf{x}).
$
Gaussian averaging in the $W$-directions preserves degree, so $Q\in\cP_k$. Moreover,
$$
\E_{\mathbf{x}\sim\cN(0,\mathbf{I}_d)}[Q(\mathbf{x})^2]
=
\E_{\mathbf{x}\sim\cN(0,\mathbf{I}_d)}
[\Var(S(\mathbf{x})\mid \mathbf{x}_{W^\perp})]
\le
\E_{\mathbf{x}\sim\cN(0,\mathbf{I}_d)}[S(\mathbf{x})^2]
\le
1.
$$
Also,
$
Q(\mathbf{x})-(f(\mathbf{x})-f_{V}(\mathbf{x}))
=
(S(\mathbf{x})-f(\mathbf{x}))-(S_{V}(\mathbf{x})-f_{V}(\mathbf{x})).
$
Hence
$$
\E_{\mathbf{x}\sim\cN(0,\mathbf{I}_d)}
\big[|Q(\mathbf{x})-(f(\mathbf{x})-f_{V}(\mathbf{x}))|\big]
\le
\E_{\mathbf{x}\sim\cN(0,\mathbf{I}_d)}[|S(\mathbf{x})-f(\mathbf{x})|]
+
\E_{\mathbf{x}\sim\cN(0,\mathbf{I}_d)}[|S_{V}(\mathbf{x})-f_{V}(\mathbf{x})|].
$$
By Jensen's inequality,
$$
\E_{\mathbf{x}\sim\cN(0,\mathbf{I}_d)}[|S_{V}(\mathbf{x})-f_{V}(\mathbf{x})|]
\le
\E_{\mathbf{x}\sim\cN(0,\mathbf{I}_d)}
\!\left[
\E_{\mathbf{z}}[|S(\mathbf{x}_{W^\perp}+\mathbf{z})-f(\mathbf{x}_{W^\perp}+\mathbf{z})|]
\right]
=
\E_{\mathbf{x}\sim\cN(0,\mathbf{I}_d)}[|S(\mathbf{x})-f(\mathbf{x})|],
$$
where the last equality uses the Gaussian resampling identity
$
\mathbf{x}\overset{d}=\mathbf{x}_{W^\perp}+\mathbf{z}.
$
Therefore, the bound \eqref{eq:B-q-approx-bound} follows.
\end{proof}

\begin{lemma}[Control of the regularization terms]\label{lem:B-regularizer-term}
Let $Q=S-S_{V}$ be the polynomial constructed in \Cref{lem:B-q-approx}. Then
\begin{equation}\label{eq:B-PQ-bound}
\E_{\mathbf{x}\sim\cN(0,\mathbf{I}_d)}[P(\mathbf{x})Q(\mathbf{x})]
\le
\sqrt{K\eta}.
\end{equation}
\begin{equation}\label{eq:B-Q-nuclear-bound}
\tr(\mathbf{M}(Q)^{1/2})
\le
O\!\left(\frac{K^{3/2}}{\sqrt{\tau}}\right).
\end{equation}
\end{lemma}

\begin{proof}
Recall that $Q(\mathbf{x})=S(\mathbf{x})-S_{V}(\mathbf{x})$, where
$
S_{V}(\mathbf{x})=\E_{\mathbf{z}}[S(\mathbf{x}_{W^\perp}+\mathbf{z})].
$
Hence
$
\E_{\mathbf{x}\sim\cN(0,\mathbf{I}_d)}[Q(\mathbf{x})\mid \mathbf{x}_{W^\perp}]=0.
$
Let
$
P_{V}(\mathbf{x}):=\E_{\mathbf{z}}[P(\mathbf{x}_{W^\perp}+\mathbf{z})].
$
Since $P_{V}(\mathbf{x})$ depends only on $\mathbf{x}_{W^\perp}$, we have
$
\E_{\mathbf{x}\sim\cN(0,\mathbf{I}_d)}[P_{V}(\mathbf{x})Q(\mathbf{x})]=0,
$
and therefore
$$
\E_{\mathbf{x}\sim\cN(0,\mathbf{I}_d)}[P(\mathbf{x})Q(\mathbf{x})]
=
\E_{\mathbf{x}\sim\cN(0,\mathbf{I}_d)}
[(P(\mathbf{x})-P_{V}(\mathbf{x}))Q(\mathbf{x})].
$$
By \Cref{lem:subspace-poincare},
$$
\E_{\mathbf{x}\sim\cN(0,\mathbf{I}_d)}[(P(\mathbf{x})-P_{V}(\mathbf{x}))^2]
\le
\sum_{i=1}^r
\E_{\mathbf{x}\sim\cN(0,\mathbf{I}_d)}[(\partial_{\boldsymbol{\xi}_i}P(\mathbf{x}))^2]
=
\sum_{i=1}^r \boldsymbol{\xi}_i^\top \mathbf{M}(P)\boldsymbol{\xi}_i
\le
r\eta
\le
K\eta.
$$
Using Eq.\;\eqref{eq:B-q-bound} and the Cauchy--Schwarz inequality gives
$$
\E_{\mathbf{x}\sim\cN(0,\mathbf{I}_d)}[P(\mathbf{x})Q(\mathbf{x})]
\le
\sqrt{
\E_{\mathbf{x}\sim\cN(0,\mathbf{I}_d)}[(P(\mathbf{x})-P_{V}(\mathbf{x}))^2]\,
\E_{\mathbf{x}\sim\cN(0,\mathbf{I}_d)}[Q(\mathbf{x})^2]
}
\le
\sqrt{K\eta}.
$$
This proves the first bound\;\eqref{eq:B-PQ-bound}.

We now prove the second bound\;\eqref{eq:B-Q-nuclear-bound}. By \Cref{thm:boolean-approx}, $S$ also satisfies
$
\E_{\mathbf{x}\sim\cN(0,\mathbf{I}_d)}\|\nabla S(\mathbf{x})\|^2
\le
O(K^2/\tau).
$
Since $S_{V}(\mathbf{x})=\E_{\mathbf{z}}[S(\mathbf{x}_{W^\perp}+\mathbf{z})]$, Jensen's inequality gives
$$
\E_{\mathbf{x}\sim\cN(0,\mathbf{I}_d)}\|\nabla S_{V}(\mathbf{x})\|^2
\le
\E_{\mathbf{x}\sim\cN(0,\mathbf{I}_d)}\|\nabla S(\mathbf{x})\|^2.
$$
Thus, using the elementary inequality $\|\mathbf{a}-\mathbf{b}\|^2\le 2\|\mathbf{a}\|^2+2\|\mathbf{b}\|^2$,
$$
\E_{\mathbf{x}\sim\cN(0,\mathbf{I}_d)}\|\nabla Q(\mathbf{x})\|^2
\le
2\E_{\mathbf{x}\sim\cN(0,\mathbf{I}_d)}\|\nabla S(\mathbf{x})\|^2
+
2\E_{\mathbf{x}\sim\cN(0,\mathbf{I}_d)}\|\nabla S_{V}(\mathbf{x})\|^2
\le
O(K^2/\tau).
$$
Moreover, by \Cref{thm:boolean-approx}, $S$ depends only on the span of the normals $\mathbf{w}_1,\dots,\mathbf{w}_K$, while $S_{V}$ depends only on the span of the projected normals $(\mathbf{w}_1)_{V},\dots,(\mathbf{w}_K)_{V}$. Hence $Q=S-S_{V}$ depends only on a subspace of dimension at most $2K$, and therefore $\rank(\mathbf{M}(Q))\le 2K$. By \Cref{lem:nuclear-rank-trace},
$$
\tr(\mathbf{M}(Q)^{1/2})
\le
\sqrt{\rank(\mathbf{M}(Q))\tr(\mathbf{M}(Q))}
=
\sqrt{\rank(\mathbf{M}(Q))\,\E_{\mathbf{x}\sim\cN(0,\mathbf{I}_d)}\|\nabla Q(\mathbf{x})\|^2}
\le
O\!\left(\frac{K^{3/2}}{\sqrt{\tau}}\right).
$$
This proves the nuclear-norm bound\;\eqref{eq:B-Q-nuclear-bound}, and completes the proof of the lemma.
\end{proof}

For a polynomial $Q\in\cP_k$, write
\begin{equation}\label{eq:B-gain-definition}
\mathsf{Gain}_P(Q)
:=
\E_{(\mathbf{x},y)\sim D}[Q(\mathbf{x})\,y\,\psi(yP(\mathbf{x}))]
-
2\mu\E_{\mathbf{x}\sim\cN(0,\mathbf{I}_d)}[P(\mathbf{x})Q(\mathbf{x})]
-
\nu\,\tr\!\big(\mathbf{M}(Q)^{1/2}\big).
\end{equation}

Intuitively, $\mathsf{Gain}_P(Q)$ is the first-order approximation of how fast the regularized loss $\cL_{\mu,\nu}$ changes in the direction $Q$: approximate optimality of $P$ forces it to be small, while our constructed $Q$ would have large positive gain if $f_{V}$ lost too much label correlation.

\begin{lemma}[Approximate optimality implies no large gain]\label{lem:B-no-large-gain}
Let $P\in\cP_k$ satisfy
$
\cL_{\mu,\nu}(P)\le \inf_{P'\in\cP_k}\cL_{\mu,\nu}(P')+\alpha.
$
If $Q\in\cP_k$ and $\E_{\mathbf{x}\sim\cN(0,\mathbf{I}_d)}[Q(\mathbf{x})^2]\le 5$, then
$$
\mathsf{Gain}_P(Q)\le O(\sqrt{\alpha}).
$$
\end{lemma}

\begin{proof}
The proof is identical to the proof of \Cref{lem:no-large-gain}.
\end{proof}

We next combine the previous estimates to lower bound the gain of the polynomial $Q$ constructed in \Cref{lem:B-q-approx}. By \Cref{lem:B-logistic-gradient-correlation}, \Cref{lem:B-q-approx}, and the bound $|y\psi(yP(\mathbf{x}))|\le 1$,
\begin{equation}\label{eq:B-Q-logistic-correl}
\E_{(\mathbf{x},y)\sim D}[Q(\mathbf{x})\,y\,\psi(yP(\mathbf{x}))]
\ge
\frac{1}{2}\E_{(\mathbf{x},y)\sim D}[(f(\mathbf{x})-f_{V}(\mathbf{x}))y]
-
\frac{1}{2}\sqrt{K\eta}
-
2\tau.
\end{equation}
Combining Eq.\;\eqref{eq:B-Q-logistic-correl}, \eqref{eq:B-PQ-bound}, and \eqref{eq:B-Q-nuclear-bound} with the definition of $\mathsf{Gain}_P(Q)$ in \eqref{eq:B-gain-definition} gives
\begin{equation}\label{eq:B-gain-lower}
\mathsf{Gain}_P(Q)
\ge
\frac{1}{2}\E_{(\mathbf{x},y)\sim D}[(f(\mathbf{x})-f_{V}(\mathbf{x}))y]
-
\left(\frac{1}{2}+2\mu\right)\sqrt{K\eta}
-
2\tau
-
O\!\left(\nu\frac{K^{3/2}}{\sqrt{\tau}}\right).
\end{equation}

The final step is to establish the averaging bound\;\eqref{eq:B-averaging-main-bound}. Choose $\tau:=\eps/32$. By the choice of the sufficiently large constant $C$ in
$
k=C K^2\log(1/\eps)/\eps^2,
$
the condition
$
k\ge C_{\mathrm{BH}}\frac{K^2\log(1/\tau)}{\tau^2}
$
holds. Apply \Cref{lem:B-no-large-gain} with $\alpha=O(\eps^3)$. Since $\E_{\mathbf{x}\sim\cN(0,\mathbf{I}_d)}[Q(\mathbf{x})^2]\le 1\le 5$, and by choosing the hidden constant in the optimization accuracy sufficiently small as allowed in the premise of the theorem, we get
$$
\mathsf{Gain}_P(Q)\le \frac{\eps}{32}.
$$
Combining $\mathsf{Gain}_P(Q)\le \eps/32$ with the bound\;\eqref{eq:B-gain-lower}, and using $\eta=\eps^2/(1024K)$, $\mu=1/128$, $\tau=\eps/32$, and $\nu=c_\nu\eps^{3/2}/K^{3/2}$ with $c_\nu>0$ sufficiently small, we get
$$
\frac{1}{2}\E_{(\mathbf{x},y)\sim D}[(f(\mathbf{x})-f_{V}(\mathbf{x}))y]
\le
\frac{\eps}{32}
+
\left(\frac{1}{2}+2\mu\right)\sqrt{K\eta}
+
2\tau
+
O\!\left(\nu\frac{K^{3/2}}{\sqrt{\tau}}\right)
\le
\frac{\eps}{2}.
$$
This proves the averaging bound\;\eqref{eq:B-averaging-main-bound}:
$$
\E_{(\mathbf{x},y)\sim D}[(f(\mathbf{x})-f_{V}(\mathbf{x}))y]\le \eps.
$$

It remains to extract a single proper Boolean function of halfspaces in $V$. If $W=\{0\}$, then $f\in\mathcal{C}_{K,V}$ and the claim is trivial. Otherwise, by \Cref{def:boolean-averaging},
$
f_{V}(\mathbf{x})=\E_{\mathbf{z}}[h_{\mathbf{z}}(\mathbf{x})]
$
with $h_{\mathbf{z}}\in\mathcal{C}_{K,V}$ for every $\mathbf{z}$. Hence
$$
\E_{\mathbf{z}}\E_{(\mathbf{x},y)\sim D}[(f(\mathbf{x})-h_{\mathbf{z}}(\mathbf{x}))y]
=
\E_{(\mathbf{x},y)\sim D}[(f(\mathbf{x})-f_{V}(\mathbf{x}))y]
\le
\eps.
$$
Therefore there exists some $\mathbf{z}$ such that
$
\E_{(\mathbf{x},y)\sim D}[(f(\mathbf{x})-h_{\mathbf{z}}(\mathbf{x}))y]\le \eps.
$
Since $h_{\mathbf{z}}\in\mathcal{C}_{K,V}$, taking $h=h_{\mathbf{z}}$ proves the consequent.
\end{proof}

\subsection{Finite Search in the Reduced Subspace for Boolean Functions of Halfspaces}

In this section, we show that the existence guarantee from \Cref{sec:existence-guarantee} can be converted into a proper learner for arbitrary Boolean functions of halfspaces. 

A brute-force search over all Boolean maps $B:\{-1,+1\}^K\to\{-1,+1\}$ requires considering $2^{2^K}$ candidates, which is doubly exponential in $K$. Instead, we use the standard hyperplane-arrangement trick, formalized in \Cref{lem:boolean-map-erm}: after the fixed halfspaces partition the sample into cells, the optimal Boolean map is obtained by optimizing independently on each cell.

\begin{lemma}[Empirical optimization over the Boolean map]\label{lem:boolean-map-erm}
Fix a sample $S$ and halfspaces $h_1,\dots,h_K$. For each $b\in\{\pm1\}^K$, let
$$
S_b:=\{(\mathbf{x},y)\in S:(h_1(\mathbf{x}),\dots,h_K(\mathbf{x}))=b\},
$$
and let $N_b^+$ and $N_b^-$ be the numbers of positive and negative labels in $S_b$. Then
$$
\min_{g:\{\pm1\}^K\to\{\pm1\}} \err_S(g(h_1,\dots,h_K))
=
\frac{1}{|S|}\sum_{b\in\{\pm1\}^K}\min\{N_b^+,N_b^-\}.
$$
The minimizer is obtained by assigning each cell its majority label, and can be computed in time $O(K|S|+2^K)$.
\end{lemma}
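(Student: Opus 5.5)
The key observation is that once the halfspaces $h_1,\dots,h_K$ are fixed, the empirical error of $g(h_1,\dots,h_K)$ splits into a sum of independent contributions, one for each cell $S_b$. I will first note that the cells $\{S_b\}_{b\in\{\pm1\}^K}$ partition $S$: every sample point $(\mathbf{x},y)$ lies in exactly one cell, namely the one indexed by $b=(h_1(\mathbf{x}),\dots,h_K(\mathbf{x}))$. For every point in $S_b$ the classifier outputs the same value $g(b)$. Hence
$$
|S|\cdot\err_S(g(h_1,\dots,h_K))=\sum_{b\in\{\pm1\}^K}\big|\{(\mathbf{x},y)\in S_b: y\neq g(b)\}\big|,
$$
and the summand for cell $b$ equals $N_b^-$ if $g(b)=+1$ and $N_b^+$ if $g(b)=-1$.

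Next I will use that $g$ ranges over all maps $\{\pm1\}^K\to\{\pm1\}$. The value $g(b)$ can therefore be chosen independently for each $b$, and the summand for cell $b$ depends only on $g(b)$. The minimum of the sum is thus the sum of the cellwise minima, $\sum_b\min\{N_b^+,N_b^-\}$. This minimum is attained by setting $g(b)$ to the majority label of $S_b$, breaking ties arbitrarily, and setting $g(b)$ arbitrarily for empty cells. Dividing by $|S|$ gives the claimed identity.

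For the running time, I will make one pass over $S$. For each point I evaluate the $K$ halfspace signs to obtain its index $b$, encode $b$ as an integer in $[0,2^K)$, and increment $N_b^+$ or $N_b^-$ in an array of size $2^K$. The pass costs $O(K|S|)$, treating each halfspace evaluation as a unit-cost sign query, or as $O(d)$ arithmetic operations if that is the intended model. Initializing the array and reading off the majority labels and the sum of minima costs $O(2^K)$.

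There is no real obstacle here. The only points that need care are the bookkeeping that the cells partition $S$, so no point is counted twice, and the remark that the value of $g$ on empty cells is irrelevant.
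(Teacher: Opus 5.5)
Your proposal is correct and follows the paper's argument: the cells $S_b$ partition the sample, the classifier is constant on each cell so the empirical error separates cellwise, the majority label minimizes each cell's contribution $\min\{N_b^+,N_b^-\}$, and one pass computes the counts before an $O(2^K)$ sweep over the cells. Your remarks about empty cells and about the cost model for evaluating a halfspace (unit cost versus $O(d)$ arithmetic) make explicit what the paper leaves implicit.
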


\begin{proof}
For fixed $h_1,\dots,h_K$, the cells $S_b$ partition the sample, and $g(h_1,\dots,h_K)$ is constant on each cell. Hence the empirical error separates over the cells. On cell $S_b$, the optimal choice of $g(b)$ is the majority label, contributing $\min\{N_b^+,N_b^-\}/|S|$ to the error. Summing over all $b\in\{\pm1\}^K$ gives the claim. Computing the cell counts requires evaluating the $K$ halfspaces on the sample and then assigning a majority label to each of $2^K$ cells.
\end{proof}

We now restate and prove the main result of our paper (\Cref{thm:main-B}).

\begin{theorem}[Proper learning of Boolean functions of halfspaces]\label{app:thm:main-B}
There exists a proper agnostic learner for the class of arbitrary Boolean functions of $K$ halfspaces under Gaussian marginals with the following guarantee. For every $K\ge 1$, every $\eps,\delta\in(0,1/2)$, and every distribution $D$ over $(\mathbf{x},y)\in\R^d\times\{\pm1\}$ whose $\mathbf{x}$-marginal is $\cN(0,\mathbf{I}_d)$, the learner draws
$
N
=
d^{O(K^2\log(1/\eps)/\eps^2)}\log(1/\delta)
$
samples, runs in time
$
d^{O(K^2\log(1/\eps)/\eps^2)}\log(1/\delta)
+
(K/\eps)^{O(K^3/\eps^{2.5})}\log(1/\delta),
$
and returns a hypothesis $h\in\mathcal{C}_{K}$ such that, with probability at least $1-\delta$,
$
\err_D(h)\le \OPT_K+\eps.
$
\end{theorem}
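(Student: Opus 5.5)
The plan mirrors the proof of \Cref{thm:main}, with \Cref{thm:structural-B} replacing \Cref{app:thm:structural} and an extra step that searches over $K$-tuples and Boolean maps. Set $k=CK^2\log(1/\eps)/\eps^2$, $\eta=\eps^2/(1024K)$, $\mu=1/128$, $\nu=c_\nu\eps^{3/2}/K^{3/2}$, as in \Cref{alg:proper-boolean}. Apply \Cref{prop:compute-P} with accuracy $c\eps^3$ and failure probability $\delta/3$. This uses $(dk)^{O(k)}\poly(1/\eps)\log(1/\delta)=d^{O(K^2\log(1/\eps)/\eps^2)}\log(1/\delta)$ samples and time (absorbing $k^{O(k)}$ into the $d$-term, or into the second term when $d$ is small). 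It yields $P\in\cP_k$ that is an $O(\eps^3)$-approximate minimizer of $\cL_{\mu,\nu}$. Then form $\mathbf{M}(P)$ (computable exactly from the Hermite coefficients) and let $V$ be its eigenspace for eigenvalues $\ge\eta$.

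Next, bound $r=\dim V$ using \Cref{lem:dim-bound}, whose proof only compares with $P=0$ and so applies verbatim: $r\le O(1/(\nu\sqrt\eta))=O(K^{3/2}\eps^{-3/2}\cdot K^{1/2}\eps^{-1})=O(K^2/\eps^{2.5})$. Fix $f\in\mathcal{C}_K$ with $\err_D(f)\le\OPT_K+\eps$. \Cref{thm:structural-B} gives $h_V=B(h_1,\dots,h_K)\in\mathcal{C}_{K,V}$ with $\E[(f-h_V)y]\le\eps$. Since $\err_D(g)=(1-\E[gy])/2$, this gives $\err_D(h_V)\le\OPT_K+3\eps/2$.

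Now discretize. Apply \Cref{fact:cover} with accuracy $c\eps/K$ to each of the $K$ halfspaces $h_j$ (constant halfspaces are included in $\cH$) to get $\tilde h_j\in\cH$ with $\Pr[h_j\ne\tilde h_j]\le O(c\eps/K)$. A union bound then gives $\Pr[B(h_1,\dots,h_K)\ne B(\tilde h_1,\dots,\tilde h_K)]\le O(c\eps)$, so some candidate in the search space has error $\OPT_K+O(\eps)$. We have $|\cH|=(K/\eps)^{O(r)}$.

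The empirical risk minimization step is where the counting must be done carefully. The hypothesis class searched is $\{B(\tilde h_1,\dots,\tilde h_K)\}$, of size at most $|\cH|^K2^{2^K}$. Hoeffding's inequality plus a union bound requires $N_2=O((K\log|\cH|+2^K+\log(1/\delta))/\eps^2)$ fresh samples for uniform $\eps$-accuracy with probability $1-\delta/3$. The doubly exponential count of Boolean maps costs only $2^K$ in $\log$, and \Cref{lem:boolean-map-erm} avoids enumerating maps computationally: for each tuple, the empirically optimal $B$ is found in time $O(KN_2+2^K)$. The empirical minimizer then has true error at most $\OPT_K+O(\eps)$, and rescaling $\eps$ gives the claim. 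The total search time is $|\cH|^K\cdot O(KN_2+2^K)=(K/\eps)^{O(Kr)}\log(1/\delta)=(K/\eps)^{O(K^3/\eps^{2.5})}\log(1/\delta)$. Here $2^K$ and $N_2$ are absorbed since $K\le K^3/\eps^{2.5}$.

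A union bound over the three failure events completes the proof. I expect the main obstacle to be this bookkeeping: verifying that $K\cdot r$ gives exactly the exponent $K^3/\eps^{2.5}$, and that the cover granularity $\eps/K$ only enters through the base of the exponent.
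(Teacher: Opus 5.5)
Your proposal is correct and follows the paper's proof essentially step for step: the same parameter choices, the dimension bound $r=O(K^2/\eps^{2.5})$ via \Cref{lem:dim-bound}, the reduction to some $h_V\in\mathcal{C}_{K,V}$ with $\err_D(h_V)\le\OPT_K+3\eps/2$ via \Cref{thm:structural-B}, a $c\eps/K$-cover combined with a union bound over the $K$ halfspaces, and ERM over at most $|\cH|^K2^{2^K}$ hypotheses, implemented per tuple with \Cref{lem:boolean-map-erm}. One bookkeeping note on the sample count: there is no second term into which $k^{O(k)}$ can be absorbed, so instead bound the feature dimension directly as $\binom{d+k}{k}\le(2ed)^k=d^{O(k)}$ for $d\ge 2$.
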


\begin{proof}
Set
$
k=\Theta\!\left(\frac{K^2\log(1/\eps)}{\eps^2}\right),
$
$
\eta=\Theta\!\left(\frac{\eps^2}{K}\right),
$
and
$
\nu=\Theta\!\left(\frac{\eps^{3/2}}{K^{3/2}}\right).
$

By \Cref{prop:compute-P}, using
$
N_1
=
d^{O(K^2\log(1/\eps)/\eps^2)}\log(1/\delta)
$
samples, we compute a polynomial $P\in\cP_k$ such that
$
\cL_{\mu,\nu}(P)\le \inf_{P'\in\cP_k}\cL_{\mu,\nu}(P')+O(\eps^3)
$
with probability at least $1-\delta/3$.

Form the influence matrix
$
\mathbf{M}(P)=\E_{\mathbf{x}\sim\cN(0,\mathbf{I}_d)}[\nabla P(\mathbf{x})\nabla P(\mathbf{x})^\top],
$
let $V$ be the span of the eigenvectors with eigenvalues at least $\eta$, and let $r=\dim(V)$. Although \Cref{lem:dim-bound} was stated in the halfspace section, it is true regardless of the target concept class and applies verbatim here.
$$
r
=
O\!\left(\frac{1}{\nu\sqrt{\eta}}\right)
=
O\!\left(\frac{K^2}{\eps^{2.5}}\right).
$$

Fix any $f\in\mathcal{C}_K$ with $\err_D(f)\le \OPT_K+\eps$. By \Cref{thm:structural-B}, there exists $h_V\in\mathcal{C}_{K,V}$ such that
$$
\E_{(\mathbf{x},y)\sim D}[(f(\mathbf{x})-h_V(\mathbf{x}))y]\le \eps.
$$
Therefore
$$
\err_D(h_V)
\le
\err_D(f)+\frac{\eps}{2}
\le
\OPT_K+\frac{3\eps}{2}.
$$

Let $\cH\subseteq\mathcal{C}_V$ be a halfspace cover at accuracy $c\eps/K$, for a sufficiently small absolute constant $c>0$. By \Cref{fact:cover},
$$
|\cH|=(K/\eps)^{O(r)}.
$$
Define the finite class
$$
\cG
:=
\left\{
\mathbf{x}\mapsto g(h_1(\mathbf{x}),\dots,h_K(\mathbf{x})):
h_1,\dots,h_K\in\cH,\; g:\{\pm1\}^K\to\{\pm1\}
\right\}.
$$
We claim that $\cG$ contains a hypothesis with error at most $\OPT_K+O(\eps)$.

Indeed, write
$
h_V(\mathbf{x})=B_0(g_1(\mathbf{x}),\dots,g_K(\mathbf{x}))
$
for some $g_1,\dots,g_K\in\mathcal{C}_V$ and some Boolean map $B_0:\{-1, +1\}^K\to\{-1, +1\}$. For each $i$, choose $\widetilde g_i\in\cH$ such that
$$
\Pr_{\mathbf{x}\sim\cN(0,\mathbf{I}_d)}[g_i(\mathbf{x})\neq \widetilde g_i(\mathbf{x})]\le c\eps/K.
$$
Then, by the union bound,
$$
\Pr_{\mathbf{x}\sim\cN(0,\mathbf{I}_d)}
[
B_0(g_1(\mathbf{x}),\dots,g_K(\mathbf{x}))
\neq
B_0(\widetilde g_1(\mathbf{x}),\dots,\widetilde g_K(\mathbf{x}))
]
\le
\sum_{i=1}^K
\Pr[g_i(\mathbf{x})\neq \widetilde g_i(\mathbf{x})]
\le
c\eps.
$$
Thus
$$
\err_D\big(\mathbf{x}\mapsto B_0(\widetilde g_1(\mathbf{x}),\dots,\widetilde g_K(\mathbf{x}))\big)
\le
\err_D(h_V)+c\eps
\le
\OPT_K+O(\eps).
$$
Hence
$$
\min_{h\in\cG}\err_D(h)\le \OPT_K+O(\eps).
$$

It remains to select a good hypothesis from $\cG$. Since
$
|\cG|\le |\cH|^K2^{2^K}
$
and $|\cH|=(K/\eps)^{O(r)}$, we have
$$
\log|\cG|\le O(Kr\log(K/\eps)+2^K).
$$
With $r=O(K^2/\eps^{2.5})$, draw an independent validation sample of size
$$
N_2
=
O\!\left(
\frac{Kr\log(K/\eps)+2^K+\log(1/\delta)}{\eps^2}
\right)
=
O\!\left(
\frac{K^3\log(K/\eps)}{\eps^{4.5}}
+
\frac{2^K+\log(1/\delta)}{\eps^2}
\right).
$$
By Hoeffding's inequality and a union bound over $\cG$, with probability at least $1-\delta/3$, empirical risk minimization over $\cG$ returns a hypothesis $h$ satisfying
$$
\err_D(h)
\le
\min_{h'\in\cG}\err_D(h')+\eps/2
\le
\OPT_K+O(\eps).
$$
Rescaling $\eps$ by a sufficiently small absolute constant gives the stated $\OPT_K+\eps$ guarantee.

Algorithmically, we do not enumerate all $2^{2^K}$ Boolean maps. We enumerate the tuples $(h_1,\dots,h_K)\in\cH^K$, and for each fixed tuple compute the empirically optimal Boolean map using \Cref{lem:boolean-map-erm}. The cost per tuple is $O(KN_2+2^K)$, so the total search time is bounded by
$$
|\cH|^K\cdot \operatorname{poly}(2^K,K,1/\eps,\log(1/\delta)).
$$
Since $|\cH|^K=(K/\eps)^{O(Kr)}$ and
$
r=O(K^2/\eps^{2.5}),
$
the factor $2^K$ is absorbed into
$$
(K/\eps)^{O(Kr)}
=
(K/\eps)^{O(K^3/\eps^{2.5})}.
$$
Combining this with the time for computing $P$ and $V$ gives the claimed running time.
\end{proof}

% We restate and prove the main result of our paper.

% \begin{theorem}[Proper learning of Boolean functions of halfspaces]\label{thm:main-B}
% There exists a proper agnostic learner for the class of arbitrary Boolean functions of $K$ halfspaces under Gaussian marginals with the following guarantee. For every $K\ge 1$, every $\eps,\delta\in(0,1/2)$, and every distribution $D$ on $\R^d\times\{\pm1\}$ whose $\mathbf{x}$-marginal is $\cN(0,\mathbf{I}_d)$, the learner draws
% $$
% N
% =
% d^{O(K^2\log(1/\eps)/\eps^2)}\log(1/\delta)
% $$
% samples, runs in time
% $$
% d^{O(K^2\log(1/\eps)/\eps^2)}\log(1/\delta)
% +
% (K/\eps)^{O(K^3/\eps^{2.5})}\log(1/\delta),
% $$
% and returns a hypothesis $h\in\mathcal{C}_{K}$ such that, with probability at least $1-\delta$,
% $$
% \err_D(h)\le \OPT_K+\eps.
% $$
% \end{theorem}

\section{Proper Learning of Intersections of Halfspaces}
\label{app:intersections}

Perhaps the most important special case of a Boolean function of $K$ halfspaces is an intersection of halfspaces, obtained by taking their conjunction. We denote this class by
$
\mathcal{C}_K^{\cap}
$,
and write $\mathcal{C}_{K,V}^{\cap}$ for the subclass in which all halfspace normal vectors lie in $V$. We also write
$$
\OPT_K^{\cap}:=\inf_{f\in\mathcal{C}_K^{\cap}}\err_D(f).
$$

Although the previous section already applies to arbitrary Boolean functions of halfspaces, intersections have additional geometric structure. In particular, Nazarov's Gaussian surface area bound gives a sharper approximation theorem, and this leads to a better dependence on $K$.

\begin{fact}[GSA bound for intersections of halfspaces {\normalfont\NazBound}]\label{fact:gsa-intersections}
There exists an absolute constant $C>0$ such that for every $K\ge 2$ and every $f\in\mathcal{C}_K^{\cap}$,
$$
\GSA(f)\le O(\sqrt{\log K}).
$$
\end{fact}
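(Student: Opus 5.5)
This fact is Nazarov's bound, which the paper cites rather than proves. I would reconstruct it along the standard route, with the main work being an upper bound on the Gaussian surface area of a convex polytope with $K$ facets.

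First I reduce to the case where $f\in\mathcal{C}_K^{\cap}$ is the indicator of $A=\bigcap_{j=1}^K\{\mathbf{x}:\langle \mathbf{w}_j,\mathbf{x}\rangle+t_j\ge 0\}$, a convex polyhedron with at most $K$ facets $F_j$. Degenerate constant halfspaces only remove facets. For a convex set, the outer Minkowski content in \Cref{def:gsa} equals the Gaussian-weighted boundary measure. Hence
\[
\GSA(A)=\sum_{j=1}^K \int_{F_j}\varphi_d\,d\sigma,
\]
where $\varphi_d$ is the standard Gaussian density and $\sigma$ is $(d-1)$-dimensional Hausdorff measure.

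Next, each facet lies in the hyperplane $\{\langle \mathbf{w}_j,\mathbf{x}\rangle=-t_j\}$. Its Gaussian weight is at most $\varphi(t_j)$ times the $(d-1)$-dimensional Gaussian measure of the facet inside that hyperplane. This already gives a crude bound of $\sum_j\varphi(t_j)$. To obtain $\sqrt{\log K}$, I split the facets at a threshold $R\approx\sqrt{2\log K}$.

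\textbf{Far facets} ($|t_j|>R$). These contribute at most $K\varphi(R)=O(1)$ in total.

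\textbf{Near facets} ($|t_j|\le R$). Here I use Nazarov's argument. Consider the radial vector field $\mathbf{x}\mapsto \mathbf{x}$ restricted to $A\cap\{\|\mathbf{x}\|\text{ not too large}\}$, or equivalently the identity $\int_A \mathrm{div}(\varphi_d\,\mathbf{v})=\int_{\partial A}\varphi_d\,\langle \mathbf{v},\mathbf{n}\rangle$. A clean version takes $\mathbf{v}(\mathbf{x})=-\mathbf{x}$ and uses the fact that on facet $F_j$ we have $\langle \mathbf{x},\mathbf{n}_j\rangle = t_j$ (the signed distance). This gives
\[
\sum_j t_j\,\sigma_{\varphi}(F_j)=\int_A (d-\|\mathbf{x}\|^2)\varphi_d \le \text{bounded}.
\]
That identity controls facets with large $t_j$, not small ones. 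So the actual plan is Nazarov's: bound the surface contribution of the facets within distance about $R$ of the origin by the probability mass of a thin shell $A\setminus A_{-\delta}$, and show
\[
\gamma_d(A\setminus A_{-\delta})\le \delta\,O(R)+O(\delta^2 R^2).
\]
To prove this shell bound, I would write $A\setminus A_{-\delta}=\bigcup_j\{\mathbf{x}\in A: 0\le \langle \mathbf{w}_j,\mathbf{x}\rangle+t_j<\delta\}$. Each point of $A\setminus A_{-\delta}$ lies in the slab of its nearest facet. Along lines in the direction $\mathbf{w}_j$, the Gaussian density varies by a factor of at most $e^{O(\delta R)}$ across a slab when $|t_j|\le R$.

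The key inequality I expect to need is then
\[
\gamma(A\setminus A_{-\delta})\le \int_0^\delta \Big(\sum_j \text{density of points at distance } s \text{ from facet } j\Big)\,ds.
\]
This is the inner parallel body formula for a convex set, where the boundary Gaussian measure of $A_{-s}$ evolves smoothly. I would compare it to $\E[\text{number of facets hit}]$ through a one-dimensional Gaussian tail argument. For each line in direction $\mathbf{w}_j$, the exit point has density weighted by $|x_j|\lesssim R$, via the hazard rate $\varphi(t)/\bar\Phi(t)\le O(1+|t|)$.

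Summing over the near facets gives $\GSA\le O(R)+O(1)=O(\sqrt{\log K})$.

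\textbf{Main obstacle.} The hard step is the near-facet shell estimate. Each facet's Gaussian weight must be charged against the probability mass it separates using the hazard-rate bound $\varphi(t)\le (|t|+1)\,\bar\Phi(t)$, and the charges must not double-count overlapping regions. I would try first to charge each boundary point to the one-dimensional exit event along its own normal, using disjointness of the interior of $A$ across the line fibres.
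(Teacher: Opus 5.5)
Your reduction to a polyhedron $A$ is fine, and so is identifying $\GSA$ with the Gaussian-weighted facet measure. So is the factorization $\int_{F_j}\varphi_d\,d\sigma=\varphi(t_j)\,\gamma_{d-1}(F_j')$, where $F_j'$ is the facet translated into $\mathbf{w}_j^\perp$, and so is the far-facet estimate $K\varphi(R)=O(1)$. The paper itself only cites Nazarov here.

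\textbf{The gap is the near-facet step.} It is the entire content of the theorem, and your proposal does not establish it. Your shell inequality $\gamma_d(A\setminus A_{-\delta})\le \delta\,O(R)+O(\delta^2R^2)$ becomes $\GSA(A)\le O(R)$ once you divide by $\delta$ and let $\delta\to 0$, because the inner Minkowski content of a polytope equals its Gaussian surface measure. So proving it is the whole problem. The slab decomposition and the density comparison $e^{O(\delta R)}$ only show that the shell mass is about $\delta$ times the surface measure; they never bound that surface measure.

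The inside of $A$ is also the wrong side for a charging argument. Inward normal segments over different facets overlap near every edge; for a simplex they all run into the incenter. So there is no disjoint family inside $A$ whose masses you can sum against $\gamma_d(A)\le 1$. Extracting $O(R)$ that way would need width $\delta\approx 1/R$, and at that width the overlaps are not negligible.

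\textbf{The missing idea is to charge outward.} Let $\mathbf{x}$ lie in the relative interior of $F_j$, with outward unit normal $\mathbf{n}_j=-\mathbf{w}_j$, so $\langle \mathbf{x},\mathbf{n}_j\rangle=t_j$. For $s>0$, the metric projection of $\mathbf{x}+s\mathbf{n}_j$ onto the convex set $A$ is $\mathbf{x}$ itself. Hence the exterior prisms $\Pi_j=\{\mathbf{x}+s\mathbf{n}_j:\mathbf{x}\in F_j,\ s>0\}$ are pairwise disjoint subsets of $A^c$. By Fubini, $\gamma_d(\Pi_j)=\gamma_{d-1}(F_j')\,\bar\Phi(t_j)$. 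Combining this with the hazard-rate bound $\varphi(t)\le (1+\max(t,0))\,\bar\Phi(t)$ gives
\[
\sum_{j:\,|t_j|\le R}\varphi(t_j)\,\gamma_{d-1}(F_j')\le (1+R)\sum_{j}\gamma_d(\Pi_j)\le (1+R)\,\gamma_d(A^c)\le 1+R .
\]
Adding the far facets gives $\GSA(A)\le \sqrt{2\ln K}+O(1)$. Your closing sentence points at exactly this: charge each boundary point to the exit ray along its own normal. But you attribute the disjointness to the interior of $A$, where it fails. It is the exterior rays that are disjoint, and once you use them the $\delta$-shell machinery is unnecessary.
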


This improves the general bound $O(K)$ for arbitrary Boolean functions of $K$ halfspaces to $O(\sqrt{\log K})$ for intersections. Using the same Hermite-concentration argument as in \Cref{thm:boolean-approx}, we obtain the following sharper approximation theorem.

\begin{theorem}[Gaussian $L_1$ approximation for intersections of halfspaces]\label{thm:intersection-approx}
There exist absolute constants $C_{\mathrm{IHS}},C'_{\mathrm{IHS}}>0$ with the following property. For every integer $K\ge 2$, every $\tau\in(0,1/10)$, and every $f\in\mathcal{C}_K^{\cap}$, there exists a polynomial $S:\R^d\to\R$ of degree at most
$
C_{\mathrm{IHS}}\frac{\log K\,\log(1/\tau)}{\tau^2}
$
such that
$$
\E_{\mathbf{x}\sim\cN(0,\mathbf{I}_d)}[\,|f(\mathbf{x})-S(\mathbf{x})|\,]\le \tau,
$$
$$
\E_{\mathbf{x}\sim\cN(0,\mathbf{I}_d)}\|\nabla S(\mathbf{x})\|^2\le C'_{\mathrm{IHS}}\frac{\log K}{\tau},
$$
$$
\E_{\mathbf{x}\sim\cN(0,\mathbf{I}_d)}[S(\mathbf{x})^2]\le 1.
$$
\end{theorem}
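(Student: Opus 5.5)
The plan is to run the same Ornstein--Uhlenbeck / Hermite-concentration argument that underlies \Cref{thm:boolean-approx}, with the Gaussian surface area input $O(K)$ from \Cref{fact:gsa-boolean} replaced by Nazarov's bound $\GSA(f)\le O(\sqrt{\log K})$ from \Cref{fact:gsa-intersections}. Write $\Gamma:=\GSA(f)$ and $T_\rho$ for the Ornstein--Uhlenbeck operator. The one GSA-dependent input is the standard noise-sensitivity estimate
$$\E[|f-T_\rho f|]\le O(\Gamma\sqrt{1-\rho})$$
for Boolean $f$, valid in the liminf sense of \Cref{def:gsa}.

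The construction proceeds in three steps.

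\textbf{Step 1: smoothing.} Choose $1-\rho=c\tau^2/\Gamma^2$, so that $\E[|f-T_\rho f|]\le\tau/2$. Since $f$ is $\pm1$-valued and $T_\rho$ is a contraction, $T_\rho f$ takes values in $[-1,1]$ and $\E[(T_\rho f)^2]\le1$.

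\textbf{Step 2: truncation.} Let $S$ be the Hermite truncation of $T_\rho f$ to degree $m$. The weight of $T_\rho f$ above degree $m$ is at most $\rho^{2m}\le e^{-2m(1-\rho)}$. Hence $\E[|T_\rho f-S|]\le\|T_\rho f-S\|_2\le e^{-m(1-\rho)}\le\tau/2$ once
$$m=\Theta\bigl(\log(1/\tau)/(1-\rho)\bigr)=\Theta\bigl(\Gamma^2\log(1/\tau)/\tau^2\bigr).$$
With $\Gamma^2=O(\log K)$ this gives the claimed degree $C_{\mathrm{IHS}}\log K\log(1/\tau)/\tau^2$. Truncation only removes Hermite weight, so $\E[S^2]\le1$.

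\textbf{Step 3: gradient bound.} This is the step where the $1/\tau$ (rather than $1/\tau^2$) dependence has to be earned. Use the identity
$$\E\|\nabla S\|^2=\sum_{|\alpha|\le m}|\alpha|\,\rho^{2|\alpha|}\hat f(\alpha)^2,$$
and split the sum by level $j=|\alpha|$. The key input is a level-wise bound on the Hermite tail, $W_{\ge j}(f)\le O(\Gamma/\sqrt j)$. This follows from the noise-sensitivity estimate: applying it at $1-\rho'=1/j$ gives $\sum_\alpha(1-\rho'^{|\alpha|})\hat f(\alpha)^2=\tfrac12\E[f(f-T_{\rho'}f)]\le O(\Gamma/\sqrt j)$, and $1-\rho'^{j}\ge 1-e^{-1}$. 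Summation by parts then gives
$$\sum_j j\rho^{2j}W_{=j}\le\sum_j \rho^{2j}W_{\ge j}\lesssim\sum_j\rho^{2j}\,\Gamma/\sqrt j\lesssim\Gamma/\sqrt{1-\rho},$$
which equals $\Gamma^2/\tau$ up to constants. With $\Gamma^2=O(\log K)$ this is $C'_{\mathrm{IHS}}\log K/\tau$.

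A structural point for later use: $S$ depends only on $\operatorname{span}\{\mathbf w_1,\dots,\mathbf w_K\}$. Both $T_\rho$ and Hermite truncation commute with rotations, so $S$ inherits this from $f$, which is what the rank bound on $\mathbf M(Q)$ needs.

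I expect the main obstacle to be the level-wise tail bound $W_{\ge j}\lesssim\Gamma/\sqrt j$ with a GSA-linear constant, since this is the bound that produces $\Gamma^2/\tau$ rather than something weaker. If it is already established in the proof of \Cref{thm:boolean-approx} in Appendix~\ref{app:multi-ou} for a general GSA bound $\Gamma$, the theorem follows by citing it verbatim with $\Gamma=O(\sqrt{\log K})$. If not, I would rederive it from the noise-sensitivity estimate via Ledoux's semigroup argument, handling the fact that Nazarov's bound holds uniformly over all intersections (including degenerate ones) by applying it directly to $f$.
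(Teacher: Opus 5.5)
Your proposal is correct. Steps 1 and 2 are exactly the paper's argument: the paper proves this theorem by rerunning \Cref{thm:gsa-ou-approx} with Nazarov's bound (\Cref{fact:gsa-intersections}) in place of \Cref{fact:gsa-boolean}.

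The only divergence is the gradient bound. The paper never uses level-wise Hermite tails. It applies the elementary inequality $n\rho^{2n}\le 1+\rho+\cdots+\rho^{n-1}=(1-\rho^n)/(1-\rho)$ termwise to get
\[
\E\|\nabla S\|^2\le \frac{1}{1-\rho}\sum_{n\ge 1}(1-\rho^n)W_{=n}=\frac{2\Pr[f(\mathbf{x})\neq f(\mathbf{y})]}{1-\rho}\le \frac{C\,\Gamma}{\sqrt{1-\rho}},
\]
with $\mathbf{y}$ a $\rho$-correlated copy of $\mathbf{x}$. Noise sensitivity is therefore invoked only once, at the same $\rho$ used for smoothing.

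Your route instead invokes noise sensitivity at every rate $1-\rho'=1/j$ to obtain the KOS-type concentration $W_{\ge j}\lesssim\Gamma/\sqrt{j}$. It then sums by parts and uses $\sum_j\rho^{2j}/\sqrt{j}\lesssim(1-\rho)^{-1/2}$. This is valid and gives the same $\Gamma^2/\tau$. It buys a stronger intermediate statement (tail bounds at every level) at the cost of a longer computation.

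The step you flag as the main obstacle is not needed in the paper's route. In yours, it is already a one-line consequence of the noise-sensitivity estimate, as your own Step 3 shows, so no separate semigroup argument is required.

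Two small corrections:
\begin{enumerate}
\item The identity should read $\sum_\alpha(1-\rho'^{|\alpha|})\hat f(\alpha)^2=\E[f(f-T_{\rho'}f)]=2\Pr[f(\mathbf{x})\neq f(\mathbf{y})]$, with $\mathbf{y}$ now $\rho'$-correlated. There is no factor $\tfrac12$.
\item You should set $\Gamma:=\max\{\GSA(f),1\}$, or simply $C\sqrt{\log K}$, rather than $\GSA(f)$ itself. An intersection can have arbitrarily small or even zero surface area (for example, a constant), in which case $1-\rho=c\tau^2/\Gamma^2$ leaves $(0,1)$. Every use of $\Gamma$ is as an upper bound, so this costs nothing. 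It mirrors the hypothesis $\GSA(f)\ge 1$ in \Cref{thm:gsa-ou-approx}.
\end{enumerate}
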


\begin{proof}
The proof is identical to that of \Cref{thm:boolean-approx}, except that the Gaussian surface area estimate from \Cref{fact:gsa-boolean} is replaced by Nazarov's sharper bound from \Cref{fact:gsa-intersections}. Thus the factor $K^2$ in the degree and gradient bounds is replaced by $\log K$.
\end{proof}

We now state the corresponding low-dimensional subspace theorem. Its proof is the same as the proof of \Cref{thm:structural-B}, with the approximation theorem above replacing \Cref{thm:boolean-approx}.

\begin{theorem}[Existence of a Low-Dimensional Subspace for intersections of halfspaces]\label{thm:structural-I}
There exists an absolute constant $C>0$ and an absolute constant $c_\nu>0$ such that the following holds. Fix $K\ge 2$ and $\eps\in(0,1/10)$, and set
$
k:=C\frac{\log K\,\log(1/\eps)}{\eps^2},
$
$
\mu:=1/128,
$
$
\eta:=\frac{\eps^2}{1024K},
$
and
$
\nu:=c_\nu\frac{\eps^{3/2}}{\sqrt{K\log K}}.
$
Let $P\in\cP_k$ satisfy
$
\cL_{\mu,\nu}(P)\le \inf_{P'\in\cP_k}\cL_{\mu,\nu}(P')+O(\eps^3),
$
where the hidden constant is sufficiently small. Let $V$ be the span of the eigenvectors of $\mathbf{M}(P)$ whose eigenvalues are at least $\eta$. Then for every $f\in\mathcal{C}_K^{\cap}$, the averaged function $f_V$ from \Cref{def:boolean-averaging} satisfies
$$
\E_{(\mathbf{x},y)\sim D}[(f(\mathbf{x})-f_V(\mathbf{x}))y]\le \eps.
$$
Consequently, there exists $h\in\mathcal{C}_{K,V}^\cap$ such that
$$
\E_{(\mathbf{x},y)\sim D}[(f(\mathbf{x})-h(\mathbf{x}))y]\le \eps.
$$
\end{theorem}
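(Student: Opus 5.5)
We follow the proof of \Cref{thm:structural-B} essentially verbatim, replacing the approximation input \Cref{thm:boolean-approx} with \Cref{thm:intersection-approx}. We then recheck that the new parameter choices make every error term at most a small constant times $\eps$.

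Fix $f\in\mathcal{C}_K^{\cap}$ and define $W$, $\boldsymbol{\xi}_1,\dots,\boldsymbol{\xi}_r$ (with $r\le K$) and $f_V$ exactly as in \Cref{def:boolean-averaging}; if $W=\{0\}$ there is nothing to prove.

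The argument has four steps.

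\textbf{Step 1.} \Cref{lem:B-logistic-gradient-correlation} holds unchanged, since it uses only \Cref{lem:label-flip-decomposition}, \Cref{lem:subspace-poincare}, $r\le K$ and $|f-f_V|\le 2$. It gives a loss of $\tfrac12\sqrt{K\eta}$.

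\textbf{Step 2.} We build $Q=S-S_V$ as in \Cref{lem:B-q-approx}, with $S$ taken from \Cref{thm:intersection-approx} at $\tau=\eps/32$. This requires $k\ge C_{\mathrm{IHS}}\log K\log(1/\tau)/\tau^2$, which holds for $C$ large. The same Jensen and conditional-variance argument gives $\E|Q-(f-f_V)|\le 2\tau$ and $\E Q^2\le 1$.

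\textbf{Step 3.} The first bound of \Cref{lem:B-regularizer-term}, $\E[PQ]\le\sqrt{K\eta}$, is unchanged. For the nuclear-norm bound, \Cref{thm:intersection-approx} gives $\E\|\nabla Q\|^2\le O(\log K/\tau)$. We still have $\rank\mathbf{M}(Q)\le 2K$, because $S$ depends only on $\operatorname{span}\{\mathbf{w}_j\}$ and $S_V$ only on $\operatorname{span}\{(\mathbf{w}_j)_V\}$. \Cref{lem:nuclear-rank-trace} then yields
\[
\tr(\mathbf{M}(Q)^{1/2})\le O\big(\sqrt{K\log K/\tau}\big).
\]

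\textbf{Step 4.} We combine these in the gain inequality \eqref{eq:B-gain-lower}:
\[
\mathsf{Gain}_P(Q)\ge \tfrac12\E[(f-f_V)y]-\left(\tfrac12+2\mu\right)\sqrt{K\eta}-2\tau-O\big(\nu\sqrt{K\log K/\tau}\big).
\]
\Cref{lem:B-no-large-gain} with $\alpha=O(\eps^3)$ and a small hidden constant gives $\mathsf{Gain}_P(Q)\le\eps/32$. We now plug in the parameters. With $\eta=\eps^2/(1024K)$ we get $\sqrt{K\eta}=\eps/32$. With $\nu=c_\nu\eps^{3/2}/\sqrt{K\log K}$ the last term is $O(c_\nu\eps^{3/2}\cdot\tau^{-1/2})=O(c_\nu\eps)$, which is at most $\eps/32$ for small $c_\nu$. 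Together these give $\E[(f-f_V)y]\le\eps$.

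The extraction of a proper hypothesis then follows from convexity. We have $f_V=\E_{\mathbf{z}}h_{\mathbf{z}}$, and each shift $h_{\mathbf{z}}$ is an AND of the shifted halfspaces $\sign(\langle(\mathbf{w}_j)_V,\mathbf{x}\rangle+t_{j,\mathbf{z}})$. Hence $h_{\mathbf{z}}\in\mathcal{C}^{\cap}_{K,V}$: the Boolean map is unchanged, so it remains an intersection, with constants allowed as degenerate halfspaces. Some $\mathbf{z}$ achieves at most the average, which gives the consequent.

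The main point requiring care is Step 3. The gain from Nazarov's bound shows up only in the gradient norm, while the rank factor $2K$ does not improve. The regularization parameter $\nu$ must therefore be scaled as $1/\sqrt{K\log K}$, not $1/K^{3/2}$, so that $\nu\,\tr(\mathbf{M}(Q)^{1/2})$ stays $O(c_\nu\eps)$. One should also confirm that the $\log(1/\tau)$ degree factor of \Cref{thm:intersection-approx} is absorbed by the choice $k=C\log K\log(1/\eps)/\eps^2$.
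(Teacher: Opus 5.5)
Your proposal is correct and matches the paper's proof, which runs the argument of \Cref{thm:structural-B} with \Cref{thm:intersection-approx} substituted. This yields $\tr(\mathbf{M}(Q)^{1/2})\le O(\sqrt{K\log K/\tau})$ and hence the choice $\nu=c_\nu\eps^{3/2}/\sqrt{K\log K}$. Your explicit bookkeeping fills in what the paper leaves implicit: $\sqrt{K\eta}=\eps/32$, the regularization term being $O(c_\nu\eps)$, the $\log(1/\tau)$ factor being absorbed into $k$, and the rank bound $2K$ being unchanged.
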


\begin{proof}
The proof is identical to that of \Cref{thm:structural-B}, replacing \Cref{thm:boolean-approx} by \Cref{thm:intersection-approx}. The only changes are 
$
k=O\!\left(\frac{\log K\,\log(1/\eps)}{\eps^2}\right)
$
and
$
\tr(\mathbf{M}(Q)^{1/2})
\le
O\!\left(\sqrt{\frac{K\log K}{\tau}}\right),
$
instead of
$
O(K^{3/2}/\sqrt{\tau})
$.
Accordingly, choosing
$
\nu=c_\nu\eps^{3/2}/\sqrt{K\log K}
$
with $c_\nu>0$ sufficiently small makes the regularization term in the gain bound at most a small constant multiple of $\eps$.
\end{proof}

We now turn the structural theorem into a proper learner. Since the Boolean map is fixed to be conjunction, there is no search over arbitrary Boolean maps.

\begin{theorem}[Proper learning of intersections of halfspaces]\label{thm:main-I}
There exists a proper agnostic learner for $\mathcal{C}_K^{\cap}$ under Gaussian marginals with the following guarantee. For every $K\ge 2$, every $\eps,\delta\in(0,1/2)$, and every distribution $D$ over $(\mathbf{x},y)\in\R^d\times\{\pm1\}$ whose $\mathbf{x}$-marginal is $\cN(0,\mathbf{I}_d)$, the learner draws
$
N
=
d^{O(\log K\,\log(1/\eps)/\eps^2)}\log(1/\delta)
$
samples, runs in time
$
d^{O(\log K\,\log(1/\eps)/\eps^2)}\log(1/\delta)
+
\left(\frac{\sqrt{\log K}}{\eps}\right)^{O(K^2\sqrt{\log K}/\eps^{2.5})}\log(1/\delta),
$
and returns a hypothesis $h\in\mathcal{C}_K^{\cap}$ such that, with probability at least $1-\delta$,
$
\err_D(h)\le \OPT_K^{\cap}+\eps.
$
\end{theorem}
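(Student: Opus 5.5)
The proof follows the template of \Cref{app:thm:main-B}, with \Cref{thm:structural-I} in place of \Cref{thm:structural-B} and with the Boolean map fixed to the conjunction.

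\textbf{Step 1: compute $P$ and $V$.} Set
$$
k=\Theta\!\left(\frac{\log K\,\log(1/\eps)}{\eps^2}\right),\qquad \mu=1/128,\qquad \eta=\frac{\eps^2}{1024K},\qquad \nu=c_\nu\frac{\eps^{3/2}}{\sqrt{K\log K}}.
$$
Using \Cref{prop:compute-P} with $N_1=d^{O(k)}\poly(1/\eps)\log(1/\delta)$ samples, compute $P\in\cP_k$ that is an $O(\eps^3)$-approximate minimizer of $\cL_{\mu,\nu}$, with probability at least $1-\delta/3$. This gives the $d^{O(\log K\log(1/\eps)/\eps^2)}$ term in both the sample count and the running time. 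Let $V$ be the span of the eigenvectors of $\mathbf{M}(P)$ with eigenvalues at least $\eta$. \Cref{lem:dim-bound} holds for any target class, so
$$
r=\dim V=O\!\left(\frac{1}{\nu\sqrt\eta}\right)=O\!\left(\frac{\sqrt{K\log K}\cdot\sqrt K}{\eps^{3/2}\cdot \eps}\right)=O\!\left(\frac{K\sqrt{\log K}}{\eps^{2.5}}\right).
$$

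\textbf{Step 2: a good intersection with normals in $V$.} Fix $f\in\mathcal{C}_K^\cap$ with $\err_D(f)\le\OPT_K^\cap+\eps$. \Cref{thm:structural-I} gives $h_V\in\mathcal{C}_{K,V}^\cap$ with $\E[(f-h_V)y]\le\eps$. Since $\err_D(g)=(1-\E[g y])/2$, this yields $\err_D(h_V)\le\OPT_K^\cap+3\eps/2$.

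The one point to check is that the extracted $h_{\mathbf z}$ is still an intersection. It is: shifting thresholds keeps the map $B=\wedge$. If some $(\mathbf{w}_j)_V=0$, the $j$-th factor becomes a constant. A constant $+1$ factor is dropped, or replaced by a trivially true halfspace, so we still have at most $K$ halfspaces, which we pad. A constant $-1$ factor makes $h_{\mathbf z}$ the all-negative function, which is itself an intersection (for example, of two opposite halfspaces, since $K\ge 2$). So $h_{\mathbf z}\in\mathcal{C}^\cap_{K,V}$ under the convention that constants are degenerate halfspaces.

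\textbf{Step 3: discretize.} Build the $c\eps/K$-cover $\cH$ of \Cref{fact:cover} on $V$, so $|\cH|=(K/\eps)^{O(r)}$. Write $h_V=\bigwedge_i g_i$ and replace each $g_i$ by its cover element $\widetilde g_i$. A union bound shows $\bigwedge_i\widetilde g_i$ disagrees with $h_V$ with probability at most $c\eps$. Let $\cG^\cap=\{\bigwedge_{i\le K}h_i : h_i\in\cH\}$; then $|\cG^\cap|\le|\cH|^K$ and $\log|\cG^\cap|=O(Kr\log(K/\eps))$. There is no $2^{2^K}$ factor, because the map is fixed.

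\textbf{Step 4: select.} Draw a validation set of size
$$
N_2=O\!\left(\frac{Kr\log(K/\eps)+\log(1/\delta)}{\eps^2}\right)
$$
and run ERM over $\cG^\cap$. Hoeffding's inequality plus a union bound give $\err_D(h)\le\OPT^\cap_K+O(\eps)$ with probability at least $1-\delta/3$. A union bound with the success event of Step 1 and rescaling $\eps$ finish the correctness argument.

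\textbf{Running time.} Enumerating $\cH^K$ costs
$$
(K/\eps)^{O(Kr)}\cdot\poly(N_2)=(K/\eps)^{O(K^2\sqrt{\log K}/\eps^{2.5})}.
$$
The stated bound has base $\sqrt{\log K}/\eps$ instead of $K/\eps$, which is the main obstacle. The base comes from $\log|\cH|$, and the cover needs accuracy $\eps/K$ so that the union bound over the $K$ factors still loses only $O(\eps)$. Two routes to the smaller base:
\begin{itemize}
\item \emph{Sharper perturbation bound.} Bound the effect of perturbing all $K$ factors at once by their surface-area contribution, using \Cref{fact:gsa-intersections}, rather than by a union bound. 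Angular and threshold perturbations of size $\theta$ then change the intersection by $O(\theta\sqrt{\log K})$, so a cover at scale $\eps/\sqrt{\log K}$ suffices.
\item \emph{Absorb into the exponent.} Note that $\log K$ versus $\log\sqrt{\log K}$ only affects the $\log$ of the base. Since $\log(K/\eps)\le O(\log K)\cdot\log(\sqrt{\log K}/\eps)$ is false in general, this requires careful accounting; the concrete inequality to aim for is
$$
Kr\log(K/\eps)\lesssim \frac{K^2\sqrt{\log K}}{\eps^{2.5}}\,\log\frac{\sqrt{\log K}}{\eps}.
$$
\end{itemize}
I would try the first route first, and fall back to stating the bound with base $K/\eps$ if the finer perturbation estimate does not go through.
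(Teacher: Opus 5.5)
Your Steps 1--4 coincide with the paper's proof, which is the proof of \Cref{app:thm:main-B} with \Cref{thm:structural-I} substituted and the conjunction fixed. Your check that $h_{\mathbf z}$ remains an intersection is a detail the paper omits. The gap is the one you flag yourself: the base of the search time. The paper gives no argument here either; it only asserts that a cover of intersections inside $V$ has size $(\sqrt{\log K}/\eps)^{O(K\dim V)}$. Your proposal leaves this open, so as written it proves the theorem only with base $K/\eps$. That is genuinely weaker: for fixed $\eps$ the two exponents differ by a factor $\Theta(\log K/\log\log K)$. For the same reason your second route cannot work. After substituting $r$, your target inequality becomes $\log(K/\eps)\lesssim\log(\sqrt{\log K}/\eps)$, which fails as $K\to\infty$.

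Your first route is the right idea, but the estimate you state does not follow as written. \Cref{fact:gsa-intersections} directly controls only parallel shifts of all facets. Set $m(\mathbf{x}):=\min_i(\langle\mathbf{w}_i,\mathbf{x}\rangle+t_i)$. Then $|\nabla m|=1$ a.e., so by the coarea formula $\gamma_d(\{|m|\le\Delta\})\le 2\Delta\sup_u\GSA(\{m\ge u\})=O(\Delta\sqrt{\log K})$. A rotation of a normal, by contrast, moves the $i$-th facet by $|\langle\widetilde{\mathbf{w}}_i-\mathbf{w}_i,\mathbf{x}\rangle|$, which is unbounded, so a clean $O(\theta\sqrt{\log K})$ bound for angular perturbations is not justified.

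The repair is a truncation. Suppose $\|\widetilde{\mathbf{w}}_i-\mathbf{w}_i\|_2\le\theta$ and $|\widetilde t_i-t_i|\le\theta$ for all $i$. By Gaussian tails and a union bound over $i$, outside an event of probability at most $\eps$ we have $|\widetilde m-m|\le\Delta:=\theta\bigl(1+\sqrt{2\log(2K/\eps)}\bigr)$. On that event the two intersections can disagree only where $|m|\le\Delta$. Their disagreement is therefore $O\bigl(\theta\sqrt{\log K\,\log(K/\eps)}\bigr)+\eps$.

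So it suffices to take a cover at scale $\theta=c\eps/\sqrt{\log K\,\log(K/\eps)}$. Restrict thresholds to $|t|\le\sqrt{2\log(2K/\eps)}$ and replace larger ones by constants, at total cost $O(\eps)$. Since $\log(1/\theta)=O(\log(\sqrt{\log K}/\eps))$, this gives $\log|\cH|=O(r\log(\sqrt{\log K}/\eps))$. Hence $|\cH|^K=(\sqrt{\log K}/\eps)^{O(K^2\sqrt{\log K}/\eps^{2.5})}$, which is the stated bound.
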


\begin{proof}
The proof is identical to the proof of \Cref{thm:main-B}, replacing \Cref{thm:structural-B} by \Cref{thm:structural-I} and replacing the finite class $\cG$ by the cover of intersections inside $V$. With
$
\nu=\Theta(\eps^{3/2}/\sqrt{K\log K})
$
and
$
\eta=\Theta(\eps^2/K),
$
\Cref{lem:dim-bound} gives
$$
\dim(V)
=
O\!\left(\frac{1}{\nu\sqrt{\eta}}\right)
=
O\!\left(\frac{K\sqrt{\log K}}{\eps^{2.5}}\right).
$$
A $\eps$-cover of intersections of $K$ halfspaces in $V$ has size
$$
\left(\frac{\sqrt{\log K}}{\eps}\right)^{O(K\dim(V))}
=
\left(\frac{\sqrt{\log K}}{\eps}\right)^{O(K^2\sqrt{\log K}/\eps^{2.5})}.
$$
Empirical risk minimization over this cover, using an independent validation sample and a union bound, returns a hypothesis with error at most $\OPT_K^{\cap}+\eps$. The running time is the sum of the time needed to compute $P$ and $V$, and the time needed to search over the cover, which gives the claimed bound.
\end{proof}

\bibliographystyle{alpha}
\bibliography{refs}

@book{Riv74,
  author    = {Rivlin, T. J.},
  title     = {The Chebyshev Polynomials},
  publisher = {Wiley},
  year      = {1974},
}

@article{Berkson1944,
  author = {Berkson, Joseph},
  title = {Application of the Logistic Function to Bio-Assay},
  journal = {Journal of the American Statistical Association},
  volume = {39},
  number = {227},
  pages = {357--365},
  year = {1944}
}

@article{Cox1958,
  author = {Cox, D. R.},
  title = {The Regression Analysis of Binary Sequences},
  journal = {Journal of the Royal Statistical Society: Series B},
  volume = {20},
  number = {2},
  pages = {215--242},
  year = {1958}
}

@article{CortesVapnik1995,
  author = {Cortes, Corinna and Vapnik, Vladimir},
  title = {Support-Vector Networks},
  journal = {Machine Learning},
  volume = {20},
  number = {3},
  pages = {273--297},
  year = {1995}
}

@article{FriedmanHastieTibshirani2000,
  author = {Friedman, Jerome and Hastie, Trevor and Tibshirani, Robert},
  title = {Additive Logistic Regression: A Statistical View of Boosting},
  journal = {The Annals of Statistics},
  volume = {28},
  number = {2},
  pages = {337--407},
  year = {2000}
}

@article{BartlettJordanMcAuliffe2006,
  author = {Bartlett, Peter L. and Jordan, Michael I. and McAuliffe, Jon D.},
  title = {Convexity, Classification, and Risk Bounds},
  journal = {Journal of the American Statistical Association},
  volume = {101},
  number = {473},
  pages = {138--156},
  year = {2006}
}

@article{HoerlKennard1970,
  author = {Hoerl, Arthur E. and Kennard, Robert W.},
  title = {Ridge Regression: Biased Estimation for Nonorthogonal Problems},
  journal = {Technometrics},
  volume = {12},
  number = {1},
  pages = {55--67},
  year = {1970}
}

@book{TikhonovArsenin1977,
  author = {Tikhonov, A. N. and Arsenin, V. Y.},
  title = {Solutions of Ill-Posed Problems},
  publisher = {Winston},
  year = {1977}
}

@article{BartlettMendelson2002,
  author = {Bartlett, Peter L. and Mendelson, Shahar},
  title = {Rademacher and Gaussian Complexities: Risk Bounds and Structural Results},
  journal = {Journal of Machine Learning Research},
  volume = {3},
  pages = {463--482},
  year = {2002}
}

@book{ScholkopfSmola2002,
  author = {Sch{\"o}lkopf, Bernhard and Smola, Alexander J.},
  title = {Learning with Kernels},
  publisher = {MIT Press},
  year = {2002}
}

@article{Tibshirani1996,
  author = {Tibshirani, Robert},
  title = {Regression Shrinkage and Selection via the Lasso},
  journal = {Journal of the Royal Statistical Society: Series B},
  volume = {58},
  number = {1},
  pages = {267--288},
  year = {1996}
}

@inproceedings{SrebroRennieJaakkola2004,
  author = {Srebro, Nathan and Rennie, Jason D. M. and Jaakkola, Tommi S.},
  title = {Maximum-Margin Matrix Factorization},
  booktitle = {Advances in Neural Information Processing Systems 17},
  pages = {1329--1336},
  year = {2004}
}

@article{RechtFazelParrilo2010,
  author = {Recht, Benjamin and Fazel, Maryam and Parrilo, Pablo A.},
  title = {Guaranteed Minimum-Rank Solutions of Linear Matrix Equations via Nuclear Norm Minimization},
  journal = {SIAM Review},
  volume = {52},
  number = {3},
  pages = {471--501},
  year = {2010}
}

@inproceedings{DiakonikolasKaneKontonisTzamosZarifis2021,
  author = {Diakonikolas, Ilias and Kane, Daniel M. and Kontonis, Vasilis and Tzamos, Christos and Zarifis, Nikos},
  title = {Agnostic Proper Learning of Halfspaces under Gaussian Marginals},
  booktitle = {Proceedings of the 34th Conference on Learning Theory},
  series = {Proceedings of Machine Learning Research},
  volume = {134},
  pages = {1522--1551},
  year = {2021}
}

@inproceedings{DKN10,
  author    = {Diakonikolas, I. and Kane, D. M. and Nelson, J.},
  title     = {Bounded independence fools degree-2 threshold functions},
  booktitle = {Proceedings of the 51st IEEE Symposium on Foundations of Computer Science (FOCS)},
  pages     = {11--20},
  year      = {2010},
}

@book{szego1967orthogonal,
  author    = {Szeg\H{o}, G{\'a}bor},
  title     = {Orthogonal Polynomials},
  series    = {American Mathematical Society Colloquium Publications},
  volume    = {23},
  publisher = {American Mathematical Society},
  year      = {1967},
}

@article{KKMS08,
  author  = {Kalai, Adam Tauman and Klivans, Adam R. and Mansour, Yishay and Servedio, Rocco A.},
  title   = {Agnostically Learning Halfspaces},
  journal = {SIAM Journal on Computing},
  volume  = {37},
  number  = {6},
  pages   = {1777--1805},
  year    = {2008}
}

@misc{DKPZ21,
  author = {Diakonikolas, I. and Kane, D. M. and Pittas, T. and Zarifis, N.},
  title = {The Optimality of Polynomial Regression for Agnostic Learning under Gaussian Marginals},
  year = {2021},
  note = {arXiv:2106.07131}
}

@inproceedings{DKTKZ21,
  shorthand = {DKTKZ21},
  author    = {Diakonikolas, I. and Kane, D. M. and Kontonis, V. and Tzamos, C. and Zarifis, N.},
  title     = {Agnostic proper learning of halfspaces under Gaussian marginals},
  booktitle = {Proceedings of the 34th Conference on Learning Theory},
  series    = {Proceedings of Machine Learning Research},
  volume    = {134},
  pages     = {1522--1551},
  year      = {2021},
}

@inproceedings{KOS08,
  author    = {Klivans, A. R. and O'Donnell, R. and Servedio, R. A.},
  title     = {Learning geometric concepts via Gaussian surface area},
  booktitle = {49th Annual IEEE Symposium on Foundations of Computer Science (FOCS)},
  pages     = {541--550},
  year      = {2008},
}

@book{Ver18,
  author    = {Vershynin, Roman},
  title     = {High-Dimensional Probability: An Introduction with Applications in Data Science},
  series    = {Cambridge Series in Statistical and Probabilistic Mathematics},
  publisher = {Cambridge University Press},
  year      = {2018},
}

@incollection{Naz03,
  author    = {Nazarov, F.},
  title     = {On the maximal perimeter of a convex set in {$\mathbb{R}^n$} with respect to a Gaussian measure},
  booktitle = {Geometric Aspects of Functional Analysis},
  series    = {Lecture Notes in Mathematics},
  volume    = {1807},
  pages     = {169--187},
  publisher = {Springer},
  year      = {2003},
}

@misc{PSW26,
  author = {Pesenti, L. and Slot, L. and Wiedmer, M.},
  title = {Agnostic Learning in (Almost) Optimal Time via Gaussian Surface Area},
  year = {2026},
  note = {arXiv:2603.06027}
}

@book{BLM13,
  author    = {Boucheron, S. and Lugosi, G. and Massart, P.},
  title     = {Concentration Inequalities: A Nonasymptotic Theory of Independence},
  publisher = {Oxford University Press},
  year      = {2013},
}

@book{OD14,
  author    = {O'Donnell, R.},
  title     = {Analysis of Boolean Functions},
  publisher = {Cambridge University Press},
  year      = {2014},
}

@inproceedings{lange2022properly,
  title={Properly learning monotone functions via local correction},
  author={Lange, Jane and Rubinfeld, Ronitt and Vasilyan, Arsen},
  booktitle={2022 IEEE 63rd Annual Symposium on Foundations of Computer Science (FOCS)},
  pages={75--86},
  year={2022},
  organization={IEEE}
}

@article{lange2025agnostic,
  title={Agnostic proper learning of monotone functions: beyond the black-box correction barrier},
  author={Lange, Jane and Vasilyan, Arsen},
  journal={SIAM Journal on Computing},
  pages={FOCS23--1},
  year={2025},
  publisher={SIAM}
}

@article{pinto2025learning,
  title={Learning and Testing Convex Functions},
  author={Pinto Jr, Renato Ferreira and Marcussen, Cassandra and Mossel, Elchanan and Nadimpalli, Shivam},
  journal={arXiv preprint arXiv:2511.11498},
  year={2025}
}

@article{mehta2002decision,
  author  = {Mehta, Dinesh P. and Raghavan, Vijay},
  title   = {Decision Tree Approximations of Boolean Functions},
  journal = {Theoretical Computer Science},
  volume  = {270},
  number  = {1--2},
  pages   = {609--623},
  year    = {2002},
  doi     = {10.1016/S0304-3975(01)00011-1}
}

@article{awasthi2017power,
  title={The power of localization for efficiently learning linear separators with noise},
  author={Awasthi, Pranjal and Balcan, Maria Florina and Long, Philip M},
  journal={Journal of the ACM (JACM)},
  volume={63},
  number={6},
  pages={1--27},
  year={2017},
  publisher={ACM New York, NY, USA}
}

@inproceedings{diakonikolas2018learning,
  title={Learning geometric concepts with nasty noise},
  author={Diakonikolas, Ilias and Kane, Daniel M and Stewart, Alistair},
  booktitle={Proceedings of the 50th Annual ACM SIGACT Symposium on Theory of Computing},
  pages={1061--1073},
  year={2018}
}

@inproceedings{diakonikolas2022learning,
  title={Learning general halfspaces with adversarial label noise via online gradient descent},
  author={Diakonikolas, Ilias and Kontonis, Vasilis and Tzamos, Christos and Zarifis, Nikos},
  booktitle={International Conference on Machine Learning},
  pages={5118--5141},
  year={2022},
  organization={PMLR}
}

@article{diakonikolas2020non,
  title={Non-convex SGD learns halfspaces with adversarial label noise},
  author={Diakonikolas, Ilias and Kontonis, Vasilis and Tzamos, Christos and Zarifis, Nikos},
  journal={Advances in Neural Information Processing Systems},
  volume={33},
  pages={18540--18549},
  year={2020}
}

@inproceedings{vempala2010corrigendum,
  author    = {Vempala, Santosh S.},
  title     = {Corrigendum: A Random Sampling Algorithm for Learning an Intersection of Halfspaces},
  booktitle = {Proceedings of the 2010 IEEE 51st Annual Symposium on Foundations of Computer Science},
  series    = {FOCS '10},
  pages     = {123},
  year      = {2010},
  publisher = {IEEE Computer Society},
  doi       = {10.1109/FOCS.2010.18}
}

@article{vempala2010randomsampling,
  author  = {Vempala, Santosh S.},
  title   = {A Random-Sampling-Based Algorithm for Learning Intersections of Halfspaces},
  journal = {Journal of the ACM},
  volume  = {57},
  number  = {6},
  articleno = {32},
  pages   = {32:1--32:14},
  month   = nov,
  year    = {2010},
  doi     = {10.1145/1857914.1857916},
  publisher = {Association for Computing Machinery}
}

@inproceedings{vempala2010learningconvex,
  author    = {Vempala, Santosh S.},
  title     = {Learning Convex Concepts from Gaussian Distributions with {PCA}},
  booktitle = {Proceedings of the 2010 IEEE 51st Annual Symposium on Foundations of Computer Science},
  series    = {FOCS '10},
  pages     = {124--130},
  year      = {2010},
  publisher = {IEEE Computer Society},
  doi       = {10.1109/FOCS.2010.19}
}

@article{ehrenfeucht1989learning,
  author  = {Ehrenfeucht, Andrzej and Haussler, David},
  title   = {Learning Decision Trees from Random Examples},
  journal = {Information and Computation},
  volume  = {82},
  number  = {3},
  pages   = {231--246},
  year    = {1989},
  doi     = {10.1016/0890-5401(89)90001-1}
}

@article{blanc2022properly,
  title={Properly learning decision trees in almost polynomial time},
  author={Blanc, Guy and Lange, Jane and Qiao, Mingda and Tan, Li-Yang},
  journal={Journal of the ACM},
  volume={69},
  number={6},
  pages={1--19},
  year={2022},
  publisher={ACM New York, NY}
}

@inproceedings{diakonikolas2024agnostically,
  author    = {Diakonikolas, Ilias and Kane, Daniel M. and Kontonis, Vasilis and Tzamos, Christos and Zarifis, Nikos},
  title     = {Agnostically Learning Multi-Index Models with Queries},
  booktitle = {2024 IEEE 65th Annual Symposium on Foundations of Computer Science (FOCS)},
  pages     = {1931--1952},
  year      = {2024},
  publisher = {IEEE},
  doi       = {10.1109/FOCS61266.2024.00115}
}

@InProceedings{pmlr-v134-diakonikolas21c,
  title     = {The Optimality of Polynomial Regression for Agnostic Learning under Gaussian Marginals in the SQ Model},
  author    = {Diakonikolas, Ilias and Kane, Daniel M. and Pittas, Thanasis and Zarifis, Nikos},
  booktitle = {Proceedings of Thirty Fourth Conference on Learning Theory},
  pages     = {1552--1584},
  year      = {2021},
  volume    = {134},
  series    = {Proceedings of Machine Learning Research},
  publisher = {PMLR}
}

@article{AlbertAnderson1984,
  author  = {Albert, A. and Anderson, J. A.},
  title   = {On the Existence of Maximum Likelihood Estimates in Logistic Regression Models},
  journal = {Biometrika},
  volume  = {71},
  number  = {1},
  pages   = {1--10},
  year    = {1984},
  doi     = {10.1093/biomet/71.1.1}
}

@article{Donoho2006,
  author  = {David L. Donoho},
  title   = {Compressed Sensing},
  journal = {IEEE Transactions on Information Theory},
  volume  = {52},
  number  = {4},
  pages   = {1289--1306},
  year    = {2006},
  doi     = {10.1109/TIT.2006.871582}
}

@inproceedings{HsuSSV22,
  author    = {Hsu, Daniel and Sanford, Clayton and Servedio, Rocco A. and Vlatakis-Gkaragkounis, Emmanouil V.},
  title     = {Near-Optimal Statistical Query Lower Bounds for Agnostically Learning Intersections of Halfspaces with {G}aussian Marginals},
  booktitle = {Proceedings of the 35th Conference on Learning Theory ({COLT})},
  series    = {Proceedings of Machine Learning Research},
  volume    = {178},
  pages     = {283--312},
  year      = {2022},
  publisher = {PMLR},
  eprint    = {2202.05096},
  eprinttype = {arXiv},
}

@inproceedings{lange2025robust,
  title     = {Robust Learning of Halfspaces Under Log-Concave Marginals},
  author    = {Lange, Jane and Vasilyan, Arsen},
  booktitle = {Advances in Neural Information Processing Systems},
  year      = {2025}
}

@article{rosenblatt1958perceptron,
  title     = {The Perceptron: A Probabilistic Model for Information Storage and Organization in the Brain},
  author    = {Rosenblatt, Frank},
  journal   = {Psychological Review},
  volume    = {65},
  number    = {6},
  pages     = {386--408},
  year      = {1958},
  publisher = {American Psychological Association},
  doi       = {10.1037/h0042519}
}

@InProceedings{pmlr-v202-diakonikolas23b,
  title     = {Near-Optimal Cryptographic Hardness of Agnostically Learning Halfspaces and {R}e{LU} Regression under {G}aussian Marginals},
  author    = {Diakonikolas, Ilias and Kane, Daniel and Ren, Lisheng},
  booktitle = {Proceedings of the 40th International Conference on Machine Learning},
  pages     = {7922--7938},
  year      = {2023},
  volume    = {202},
  month     = {23--29 Jul},
  publisher = {PMLR},
  url       = {https://proceedings.mlr.press/v202/diakonikolas23b.html}
}

@article{diakonikolas2020near,
  title     = {Near-optimal {SQ} lower bounds for agnostically learning halfspaces and {ReLUs} under {Gaussian} marginals},
  author    = {Diakonikolas, Ilias and Kane, Daniel and Zarifis, Nikos},
  journal   = {Advances in Neural Information Processing Systems},
  volume    = {33},
  pages     = {13586--13596},
  year      = {2020}
}

@article{goel2020statistical,
  title   = {Statistical-query lower bounds via functional gradients},
  author  = {Goel, Surbhi and Gollakota, Aravind and Klivans, Adam},
  journal = {Advances in Neural Information Processing Systems},
  volume  = {33},
  pages   = {2147--2158},
  year    = {2020}
}

@inproceedings{guruswami2006hardness,
  title     = {Hardness of learning halfspaces with noise},
  author    = {Guruswami, Venkatesan and Raghavendra, Prasad},
  booktitle = {Proceedings of the 47th IEEE Symposium on Foundations of Computer Science (FOCS)},
  pages     = {543--552},
  year      = {2006},
  publisher = {IEEE Computer Society}
}

@inproceedings{feldman2006new,
  title     = {New results for learning noisy parities and halfspaces},
  author    = {Feldman, Vitaly and Gopalan, Parikshit and Khot, Subhash and Ponnuswami, Ashok Kumar},
  booktitle = {Proceedings of the 47th Annual IEEE Symposium on Foundations of Computer Science (FOCS)},
  year      = {2006},
  publisher = {IEEE Computer Society}
}

@inproceedings{alman2026learning,
  title     = {Learning Functions of Halfspaces},
  author    = {Alman, Josh and Patel, Shyamal and Servedio, Rocco A.},
  booktitle = {Proceedings of the 58th Annual ACM Symposium on Theory of Computing (STOC)},
  year      = {2026},
  note      = {To appear}
}

\appendix

\section{Hermite Polynomials}

For a comprehensive discussion of Hermite polynomials, we refer the reader to Section~11.2 of \cite{OD14} and Chapter~5 of \cite{szego1967orthogonal}.

\begin{definition}[Normalized Hermite basis]
For $j\in\mathbb N$, let $H_j:\R\to\R$ denote the normalized univariate Hermite polynomial
$$
H_j(t)
=
\frac{(-1)^j}{\sqrt{j!}}
\exp(t^2/2)\frac{d^j}{dt^j}\exp(-t^2/2).
$$
For a multi-index $\alpha=(\alpha_1,\dots,\alpha_d)\in\mathbb N^d$, write $|\alpha|:=\sum_{i=1}^d \alpha_i$ and
$$
H_\alpha(\mathbf{x})
:=
\prod_{i=1}^d H_{\alpha_i}(\mathbf{x}_i).
$$
We refer to $\{H_\alpha\}_{\alpha\in\mathbb N^d}$ as the normalized multivariate Hermite basis.
\end{definition}

\begin{fact}[Orthonormality and Parseval]\label{fact:hermite-parseval}
The basis $\{H_\alpha\}_{\alpha\in\mathbb N^d}$ is orthonormal in $L_2(\cN(0,\mathbf{I}_d))$:
$$
\E_{\mathbf{x}\sim\cN(0,\mathbf{I}_d)}[H_\alpha(\mathbf{x})H_\beta(\mathbf{x})]
=
\Ind {\{\alpha=\beta\}}.
$$
Consequently, every $P\in\cP_k$ has a unique expansion\footnote{We use the coefficient representation or the feature-map representation depending on convenience.}
$$
P(\mathbf{x}) = \sum_{|\alpha|\le k}c_\alpha H_\alpha(\mathbf{x}) = \langle \mathbf{c},\Phi(\mathbf{x})\rangle
$$
Parseval's identity gives
$$
\E_{\mathbf{x}\sim\cN(0,\mathbf{I}_d)}[P(\mathbf{x})^2]
=
\sum_{|\alpha|\le k}c_\alpha^2.
$$
\end{fact}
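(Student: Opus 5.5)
The statement is \Cref{fact:hermite-parseval}: orthonormality of $\{H_\alpha\}$ in $L_2(\cN(0,\mathbf{I}_d))$, unique expansion of every $P\in\cP_k$, and Parseval. I would reduce everything to the univariate case and then tensorize using independence of the coordinates.

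\textbf{Step 1 (univariate orthonormality).} Write $H_j(t)=\frac{(-1)^j}{\sqrt{j!}}e^{t^2/2}\frac{d^j}{dt^j}e^{-t^2/2}$. Then $H_j$ is a polynomial of degree exactly $j$ with leading coefficient $1/\sqrt{j!}$, by induction on the Rodrigues formula. For $i\le j$, compute $\E[H_i(t)H_j(t)]$ against the Gaussian density $\varphi$. The factor $e^{t^2/2}$ cancels $\varphi$, leaving an integral of $H_i(t)\,\frac{d^j}{dt^j}e^{-t^2/2}$. Integrate by parts $j$ times; the boundary terms vanish by Gaussian decay. This leaves $\frac{1}{\sqrt{j!}}\int H_i^{(j)}(t)e^{-t^2/2}\,dt/\sqrt{2\pi}$.
\begin{itemize}
\item If $i<j$, then $H_i^{(j)}\equiv 0$, so the inner product is zero.
\item If $i=j$, then $H_j^{(j)}=j!/\sqrt{j!}$, and the integral equals $1$.
\end{itemize}

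\textbf{Step 2 (tensorization).} Since $\mathbf{x}$ has independent coordinates,
$$\E[H_\alpha(\mathbf{x})H_\beta(\mathbf{x})]=\prod_{i=1}^d\E[H_{\alpha_i}(\mathbf{x}_i)H_{\beta_i}(\mathbf{x}_i)]=\prod_{i=1}^d\Ind\{\alpha_i=\beta_i\}=\Ind\{\alpha=\beta\}.$$

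\textbf{Step 3 (basis of $\cP_k$).} The set $\{H_\alpha:|\alpha|\le k\}$ lies in $\cP_k$. The leading-term structure is triangular: $H_\alpha=\mathbf{x}^\alpha/\sqrt{\alpha!}+{}$lower-degree terms. By induction on total degree, every monomial $\mathbf{x}^\alpha$ with $|\alpha|\le k$ is therefore a linear combination of the $H_\beta$ with $|\beta|\le|\alpha|$. So these polynomials span $\cP_k$.

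Orthonormality gives linear independence, so the expansion $P=\sum_{|\alpha|\le k}c_\alpha H_\alpha$ is unique, with $c_\alpha=\E[PH_\alpha]$. Parseval then follows by expanding $\E[P^2]$ and applying Step 2.

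\textbf{Main obstacle.} The only mildly delicate point is the integration by parts in Step 1: one must justify that the boundary terms vanish. This holds because each boundary term is a polynomial times $e^{-t^2/2}$. The argument is otherwise routine, and I would cite \cite{OD14} for the details.
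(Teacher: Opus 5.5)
Your proof is correct. The paper itself does not prove this Fact: it is quoted as standard background, with a pointer to Section~11.2 of \cite{OD14} and Chapter~5 of \cite{szego1967orthogonal}, so there is no in-paper argument to compare against step by step.

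Your route has three parts. In one variable you integrate by parts repeatedly against the Rodrigues formula. You then factor $\E[H_\alpha H_\beta]$ over the independent coordinates. Finally, you use the triangular structure $H_\alpha=\mathbf{x}^\alpha/\sqrt{\alpha!}+(\text{lower total degree})$ to show that $\{H_\alpha\}_{|\alpha|\le k}$ spans $\cP_k$. This is the classical self-contained derivation. It has the advantage of working directly from the Rodrigues-formula definition of $H_j$ that the paper adopts. A common alternative is to compare coefficients in $\E_{z\sim\cN(0,1)}[e^{sz-s^2/2}e^{tz-t^2/2}]=e^{st}$, but that first requires matching the generating function to the Rodrigues definition.

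What the citation provides beyond your argument is completeness of $\{H_\alpha\}_{\alpha\in\mathbb N^d}$ in $L_2(\cN(0,\mathbf{I}_d))$, i.e., density of polynomials. Completeness is not needed for the displayed claims about $P\in\cP_k$, which you establish in full. The paper does rely on it silently later, for example in the proof of \Cref{thm:gsa-ou-approx}. There a $\{\pm1\}$-valued $f$ is expanded as $\sum_\alpha c_\alpha H_\alpha$, and the identities $T_\rho f=\sum_\alpha\rho^{|\alpha|}c_\alpha H_\alpha$ and $\E[f(\mathbf{x})f(\mathbf{y})]=\sum_n\rho^n a_n$ are invoked.
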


\begin{fact}[Derivative of Hermite Polynomial]\label{fact:hermite-derivative}
For every multi-index $\alpha\in\mathbb N^d$ and every $i\in[d]$,
$$
\partial_i H_\alpha(\mathbf{x})
=
\begin{cases}
\sqrt{\alpha_i}\,H_{\alpha-\mathbf{e}_i}(\mathbf{x}), & \alpha_i\ge 1,\\
0, & \alpha_i=0.
\end{cases}
$$
Consequently, if $P(\mathbf{x})=\sum_{|\alpha|\le k}c_\alpha H_\alpha(\mathbf{x})$, then
$$
\E_{\mathbf{x}\sim\cN(0,\mathbf{I}_d)}\|\nabla P(\mathbf{x})\|^2
=
\sum_{|\alpha|\le k}|\alpha|c_\alpha^2.
$$
\end{fact}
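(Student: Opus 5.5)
The plan is to prove both parts of \Cref{fact:hermite-derivative} directly from the Rodrigues-type definition of $H_j$, reducing the multivariate statement to a univariate one and then combining with Parseval (\Cref{fact:hermite-parseval}).

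\textbf{Univariate derivative identity.} First I would establish that $H_j'(t)=\sqrt{j}\,H_{j-1}(t)$ for $j\ge1$ and $H_0'=0$. Write $\tilde H_j(t):=(-1)^j e^{t^2/2}\frac{d^j}{dt^j}e^{-t^2/2}$ for the unnormalized (probabilists') Hermite polynomials, so $H_j=\tilde H_j/\sqrt{j!}$. Differentiating the definition by the product rule gives $\tilde H_j'(t)=t\tilde H_j(t)-\tilde H_{j+1}(t)$. So it suffices to show the three-term recurrence $\tilde H_{j+1}(t)=t\tilde H_j(t)-j\tilde H_{j-1}(t)$, which yields $\tilde H_j'=j\tilde H_{j-1}$.

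For the recurrence I would use the generating function $e^{ts-s^2/2}=\sum_j \tilde H_j(t)s^j/j!$. This follows from Taylor expanding $e^{-(t-s)^2/2}$ in $s$ and multiplying by $e^{t^2/2}$. Differentiating the generating function in $t$ gives $s\,e^{ts-s^2/2}$, and comparing coefficients of $s^j$ gives $\tilde H_j'=j\tilde H_{j-1}$ directly, with no separate recurrence needed. Normalizing, $H_j'=\frac{j}{\sqrt{j!}}\tilde H_{j-1}=\sqrt{j}\,\frac{\tilde H_{j-1}}{\sqrt{(j-1)!}}=\sqrt{j}H_{j-1}$.

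\textbf{Multivariate formula.} Since $H_\alpha(\mathbf{x})=\prod_i H_{\alpha_i}(\mathbf{x}_i)$, the derivative $\partial_i$ acts only on the $i$-th factor. This immediately gives $\sqrt{\alpha_i}H_{\alpha-\mathbf{e}_i}$ when $\alpha_i\ge1$, and $0$ when $\alpha_i=0$, because $H_0$ is constant.

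\textbf{Gradient-norm identity.} For $P=\sum_\alpha c_\alpha H_\alpha$ we get $\partial_iP=\sum_{\alpha:\alpha_i\ge1}c_\alpha\sqrt{\alpha_i}\,H_{\alpha-\mathbf{e}_i}$. The map $\alpha\mapsto\alpha-\mathbf{e}_i$ is injective on $\{\alpha:\alpha_i\ge1\}$, so this is an expansion in distinct orthonormal basis elements. Parseval then gives $\E[(\partial_iP)^2]=\sum_\alpha \alpha_i c_\alpha^2$. Summing over $i$ yields $\sum_\alpha|\alpha|c_\alpha^2$.

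The only step needing real care is the univariate identity, in particular keeping the normalization constants straight. That is why I route it through the generating function, which handles the constants cleanly. The remaining steps are bookkeeping.
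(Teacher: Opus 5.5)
Your proof is correct: the generating-function route gives $H_j'=\sqrt{j}\,H_{j-1}$ directly from the paper's Rodrigues-type definition with the normalizations handled cleanly, and the injective reindexing $\beta=\alpha-\mathbf{e}_i$ plus orthonormality yields $\E\|\nabla P\|^2=\sum_{\alpha}|\alpha|c_\alpha^2$. The paper gives no proof of this fact and cites it as standard (O'Donnell \S11.2, Szeg\H{o} Ch.~5); your second step is the same reindexing-and-orthonormality computation the paper carries out in \Cref{lem:gradient-hermite} and \Cref{app:lem:nuclear-hermite}, and taking the trace of $\mathbf{M}(P)$ there gives exactly your identity.
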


\begin{lemma}[Hermite expansion of the gradient]\label{lem:gradient-hermite}
Let $P(\mathbf{x})=\sum_{|\alpha|\le k}c_\alpha H_\alpha(\mathbf{x})$. Then
$$
\nabla P(\mathbf{x})
=
\sum_{|\beta|\le k-1}\mathbf{a}_\beta(P)H_\beta(\mathbf{x}),
$$
where $
\mathbf{a}_\beta(P) = \left(
\sqrt{\beta_1+1}\,c_{\beta+\mathbf{e}_1},
\ldots,
\sqrt{\beta_d+1}\,c_{\beta+\mathbf{e}_d}
\right)^\top$.
\end{lemma}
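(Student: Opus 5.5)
The plan is to differentiate the Hermite expansion term by term using \Cref{fact:hermite-derivative}, and then reindex. Since $P$ is a finite sum, linearity of the gradient gives
$$
\partial_i P(\mathbf{x})=\sum_{|\alpha|\le k}c_\alpha\,\partial_i H_\alpha(\mathbf{x})
=\sum_{|\alpha|\le k,\ \alpha_i\ge 1}c_\alpha\sqrt{\alpha_i}\,H_{\alpha-\mathbf{e}_i}(\mathbf{x}),
$$
because every term with $\alpha_i=0$ vanishes.

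Next I substitute $\beta:=\alpha-\mathbf{e}_i$. This map is a bijection from $\{\alpha:|\alpha|\le k,\ \alpha_i\ge 1\}$ onto $\{\beta\in\mathbb N^d:|\beta|\le k-1\}$, with inverse $\beta\mapsto\beta+\mathbf{e}_i$. Under it, $\alpha_i=\beta_i+1$. Hence
$$
\partial_i P(\mathbf{x})=\sum_{|\beta|\le k-1}\sqrt{\beta_i+1}\,c_{\beta+\mathbf{e}_i}H_\beta(\mathbf{x}).
$$

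Stacking the coordinates $i=1,\dots,d$ into a vector, the coefficient of $H_\beta$ in $\nabla P$ is exactly $\mathbf{a}_\beta(P)$ as defined in the statement. Note that the index set $\{|\beta|\le k-1\}$ is the same for every coordinate, so the vector sum is well defined. This gives the claimed identity.

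No step is a real obstacle. The only point needing care is checking that the reindexing is a bijection onto the full range $|\beta|\le k-1$, with no boundary terms lost or added. This holds because $|\alpha|\le k$ is equivalent to $|\beta|\le k-1$ when $\alpha=\beta+\mathbf{e}_i$.
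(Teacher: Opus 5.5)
Your proposal is correct and follows the paper's proof: differentiate termwise using the Hermite derivative identity, reindex with $\beta=\alpha-\mathbf{e}_i$, and read off the coordinates of $\mathbf{a}_\beta(P)$. You also check explicitly that the reindexing is a bijection onto $\{\beta\in\mathbb N^d:|\beta|\le k-1\}$, which the paper leaves implicit.
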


\begin{proof}
For every $i\in[d]$, differentiating the normalized Hermite polynomial coordinatewise gives
$$
\partial_i H_\alpha(\mathbf{x})=
\begin{cases}
\sqrt{\alpha_i}\,H_{\alpha-\mathbf{e}_i}(\mathbf{x}), & \alpha_i\ge 1,\\
0, & \alpha_i=0
\end{cases}
$$
Therefore
$$
\partial_i P(\mathbf{x})
=
\sum_{\substack{|\alpha|\le k\\ \alpha_i\ge 1}}
\sqrt{\alpha_i}\,c_\alpha H_{\alpha-\mathbf{e}_i}(\mathbf{x})
$$
Reindexing with $\beta=\alpha-\mathbf{e}_i$ and defining $a_{i,\beta}(P):=\sqrt{\beta_i+1}\,c_{\beta+\mathbf{e}_i}$, we get
$$
\partial_i P(\mathbf{x})=\sum_{|\beta|\le k-1}\sqrt{\beta_i+1}\,c_{\beta+\mathbf{e}_i}H_\beta(\mathbf{x})=\sum_{|\beta|\le k-1}a_{i,\beta}(P)H_\beta(\mathbf{x})
$$
Thus the coefficient of $H_\beta$ in the Hermite expansion of $\nabla P$ is precisely
$$
\mathbf{a}_\beta(P)
=
\left(
\sqrt{\beta_1+1}\,c_{\beta+\mathbf{e}_1},
\ldots,
\sqrt{\beta_d+1}\,c_{\beta+\mathbf{e}_d}
\right)^\top.
$$
\end{proof}

\section{The Nuclear Norm}\label{app:nuclear}

In this appendix we collect the main facts about the regularizer
$$
\tr\!\big(\mathbf{M}(P)^{1/2}\big) = \tr((\E[\nabla P(\mathbf{x}) \nabla P(\mathbf{x})^\top])^{1/2}).
$$
In particular, we show that it can naturally be viewed as the nuclear norm of the coefficient matrix of $\nabla P$ in the Hermite basis.

\begin{definition}[Nuclear norm]\label{app:nuclear-def}
Let $\mathbf{A}\in\R^{m\times n}$ have singular values $\sigma_1,\dots,\sigma_r$, where $r=\rank(\mathbf{A})$. Let $\lambda_1,\dots,\lambda_r$ be the nonzero eigenvalues of $\mathbf{A}^\top \mathbf{A}$ (equivalently, of $\mathbf{A}\mathbf{A}^\top$). The nuclear norm of $\mathbf{A}$ is defined by
$$
\|\mathbf{A}\|_*:=\sum_{i=1}^r \sigma_i = \sum_{i=1}^r \sqrt{\lambda_i}.
$$
Equivalently,
\begin{equation}\label{app:nuclear-def-eq}
    \|\mathbf{A}\|_*=\tr\left((\mathbf{A}^\top \mathbf{A})^{1/2}\right)
=\tr\left((\mathbf{A}\mathbf{A}^\top)^{1/2}\right).
\end{equation}
\end{definition}

The reason we are interested in this regularizer is that the properties of Hermite polynomials allow us to rewrite $\tr(\mathbf{M}(P)^{1/2})$ as the nuclear norm of a matrix constructed from the Hermite coefficients of the vector-valued polynomial $\nabla P$.

\begin{lemma}[Nuclear norm representation of the spectral regularizer]
\label{app:lem:nuclear-hermite}
Let $\mathbf{a}_\beta(P)\in\R^d$ be the Hermite gradient coefficients from \Cref{lem:gradient-hermite}. Define $\mathbf{A}(P)$ by collecting the vectors $\mathbf{a}_\beta(P)$ as columns:
$$
\mathbf{A}(P):=\big(\mathbf{a}_\beta(P)\big)_{|\beta|\le k-1}
$$
Then $\mathbf{M}(P)=\mathbf{A}(P)\mathbf{A}(P)^\top$ and consequently
$$
\tr(\mathbf{M}(P)^{1/2})
=
\tr\big((\mathbf{A}(P)\mathbf{A}(P)^\top)^{1/2}\big)
=
\|\mathbf{A}(P)\|_*
$$
\end{lemma}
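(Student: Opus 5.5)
The plan is to compute the matrix $\mathbf{M}(P)$ entrywise in the Hermite basis, using orthonormality from \Cref{fact:hermite-parseval} together with the gradient expansion of \Cref{lem:gradient-hermite}. Then we read off the nuclear-norm identity directly from \Cref{app:nuclear-def}.

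\emph{Step 1: entries of $\mathbf{M}(P)$.} By \Cref{lem:gradient-hermite}, each coordinate of the gradient expands as $\partial_i P(\mathbf{x})=\sum_{|\beta|\le k-1}a_{i,\beta}(P)H_\beta(\mathbf{x})$. This is a finite sum, so expectation and summation may be exchanged. For any $i,j\in[d]$ we then get
$$
\mathbf{M}(P)_{ij}=\E[\partial_iP(\mathbf{x})\,\partial_jP(\mathbf{x})]=\sum_{|\beta|,|\beta'|\le k-1}a_{i,\beta}(P)\,a_{j,\beta'}(P)\,\E[H_\beta(\mathbf{x})H_{\beta'}(\mathbf{x})].
$$
Orthonormality, $\E[H_\beta H_{\beta'}]=\Ind\{\beta=\beta'\}$, collapses this double sum to $\sum_{\beta}a_{i,\beta}(P)a_{j,\beta}(P)$. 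This is exactly the $(i,j)$ entry of $\mathbf{A}(P)\mathbf{A}(P)^\top$, since the columns of $\mathbf{A}(P)$ are the vectors $\mathbf{a}_\beta(P)$ and the rows are indexed by coordinates $i\in[d]$. Hence $\mathbf{M}(P)=\mathbf{A}(P)\mathbf{A}(P)^\top$.

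\emph{Step 2: nuclear norm.} Because $\mathbf{M}(P)=\mathbf{A}\mathbf{A}^\top$ is positive semidefinite, its square root is well defined. Its trace equals $\sum_i\sqrt{\lambda_i(\mathbf{A}\mathbf{A}^\top)}$, the sum of the singular values of $\mathbf{A}(P)$. By \eqref{app:nuclear-def-eq} this is $\|\mathbf{A}(P)\|_*$.

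No step is a real obstacle. The only point needing care is bookkeeping: multi-indices $\beta$ with $\beta+\mathbf{e}_i$ of degree exceeding $k$ have $c_{\beta+\mathbf{e}_i}=0$, and the column index set $\{|\beta|\le k-1\}$ must be fixed consistently so that $\mathbf{A}(P)$ is a well-defined finite $d\times N$ matrix.
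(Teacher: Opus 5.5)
Your proposal is correct and follows the paper's proof essentially verbatim: you expand $\nabla P$ via \Cref{lem:gradient-hermite} and use Hermite orthonormality to collapse the cross terms, giving $\mathbf{M}(P)=\sum_\beta \mathbf{a}_\beta(P)\mathbf{a}_\beta(P)^\top=\mathbf{A}(P)\mathbf{A}(P)^\top$ (you compute entrywise where the paper writes an outer product), and then read off $\|\mathbf{A}(P)\|_*$ from \eqref{app:nuclear-def-eq}. Your bookkeeping caveat never actually arises, since $|\beta|\le k-1$ already forces $|\beta+\mathbf{e}_i|\le k$.
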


\begin{proof}
By definition, $\mathbf{M}(P)=\E[\nabla P(\mathbf{x})\nabla P(\mathbf{x})^\top]$. By \Cref{lem:gradient-hermite},
$$
\mathbf{M}(P)
=
\E\left[
\left(\sum_\beta \mathbf{a}_\beta(P)H_\beta(\mathbf{x})\right)
\left(\sum_\gamma \mathbf{a}_\gamma(P)H_\gamma(\mathbf{x})\right)^\top
\right]
$$
Using orthonormality of the Hermite basis, \Cref{fact:hermite-parseval}, this becomes
$$
\mathbf{M}(P)
=
\sum_{|\beta|\le k-1}\mathbf{a}_\beta(P)\mathbf{a}_\beta(P)^\top
=
\mathbf{A}(P)\mathbf{A}(P)^\top
$$
Finally, by the definition of nuclear norm \eqref{app:nuclear-def-eq},
$$
\tr(\mathbf{M}(P)^{1/2})
=
\tr\big((\mathbf{A}(P)\mathbf{A}(P)^\top)^{1/2}\big)
=
\|\mathbf{A}(P)\|_*
$$

\end{proof}

As a result, $P\mapsto \tr(\mathbf{M}(P)^{1/2})$ is convex on $\cP_k$, since $P\mapsto \mathbf{A}(P)$ is linear and the nuclear norm is convex. Therefore, $\tr(\mathbf{M}(P)^{1/2})$ can be used as a regularizer in our convex optimization problem. 

\begin{lemma}[Rank-trace bound]
\label{lem:nuclear-rank-trace}
For every polynomial $P\in\cP_k$,
$$
\tr(\mathbf{M}(P)^{1/2})
\le
\sqrt{\rank(\mathbf{M}(P))\,\tr(\mathbf{M}(P))} = \sqrt{\rank(\mathbf{M}(P))\,\E[\|\nabla P(\mathbf{x})\|_2^2]}.
$$

\end{lemma}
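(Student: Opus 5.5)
The bound $\tr(\mathbf{M}(P)^{1/2})\le\sqrt{\rank(\mathbf{M}(P))\,\tr(\mathbf{M}(P))}$ is a purely spectral fact about the positive semidefinite matrix $\mathbf{M}:=\mathbf{M}(P)$. The plan is to apply Cauchy--Schwarz to its nonzero eigenvalues, and then identify $\tr(\mathbf{M})$ with the expected squared gradient norm.

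First, $\mathbf{M}=\E[\nabla P(\mathbf{x})\nabla P(\mathbf{x})^\top]$ is symmetric. It is positive semidefinite, since $\mathbf{v}^\top\mathbf{M}\mathbf{v}=\E[\langle \mathbf{v},\nabla P(\mathbf{x})\rangle^2]\ge 0$. It also has finite entries, because $P$ is a polynomial and Gaussian moments are finite; alternatively, one can cite \Cref{app:lem:nuclear-hermite}, which gives $\mathbf{M}=\mathbf{A}(P)\mathbf{A}(P)^\top$. We therefore diagonalize $\mathbf{M}$ with eigenvalues $\lambda_1\ge\dots\ge\lambda_r>0=\lambda_{r+1}=\dots=\lambda_d$, where $r=\rank(\mathbf{M})$. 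In this basis $\mathbf{M}^{1/2}$ has eigenvalues $\sqrt{\lambda_i}$, so $\tr(\mathbf{M}^{1/2})=\sum_{i=1}^r\sqrt{\lambda_i}$.

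Next, apply Cauchy--Schwarz to the vectors $(1,\dots,1)$ and $(\sqrt{\lambda_1},\dots,\sqrt{\lambda_r})$ in $\R^r$:
$$
\sum_{i=1}^r\sqrt{\lambda_i}\le\sqrt{r}\Big(\sum_{i=1}^r\lambda_i\Big)^{1/2}=\sqrt{\rank(\mathbf{M})\,\tr(\mathbf{M})}.
$$
If $r=0$, both sides are zero and there is nothing to prove.

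Finally, for the equality, use linearity of trace and expectation: $\tr(\mathbf{M})=\E[\tr(\nabla P\nabla P^\top)]=\E[\|\nabla P(\mathbf{x})\|_2^2]$. This also agrees with the Hermite formula in \Cref{fact:hermite-derivative}.

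No step is a genuine obstacle. The only point needing care is the justification that $\mathbf{M}$ is a well-defined finite PSD matrix, and that is immediate for polynomials.
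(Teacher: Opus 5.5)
Your proof is correct and follows the paper's argument: Cauchy--Schwarz applied to the square roots of the nonzero eigenvalues of the PSD matrix $\mathbf{M}(P)$, followed by $\tr(\mathbf{M}(P))=\E[\|\nabla P(\mathbf{x})\|_2^2]$ via linearity and the cyclic property of trace. Your added checks, that $\mathbf{M}(P)$ is finite and PSD and that the case $r=0$ holds trivially, are harmless refinements the paper leaves implicit.
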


\begin{proof}
Let $\lambda_1,\dots,\lambda_r$ be the nonzero eigenvalues of $\mathbf{M}(P)$, where $r:=\rank(\mathbf{M}(P))$. Since $\mathbf{M}(P)\succeq 0$, all $\lambda_i$ are nonnegative. Therefore
$$
\tr(\mathbf{M}(P)^{1/2})=\sum_{i=1}^r \sqrt{\lambda_i}.
$$
By Cauchy--Schwarz inequality,
$$
\left(\sum_{i=1}^r \sqrt{\lambda_i}\right)^2
\le
\left(\sum_{i=1}^r 1\right)\left(\sum_{i=1}^r \lambda_i\right)
=
r\,\tr(\mathbf{M}(P)) = \rank(\mathbf{M}(P))\,\tr(\mathbf{M}(P)).
$$
Taking square roots gives
$$
\tr(\mathbf{M}(P)^{1/2})
\le
\sqrt{\rank(\mathbf{M}(P))\,\tr(\mathbf{M}(P))}.
$$
For the second part, using linearity and the cyclic property of trace,
$$
\tr(\mathbf{M}(P))= \tr(\E[\nabla P(\mathbf{x}) \nabla P(\mathbf{x})^\top])= \E[ \nabla P(\mathbf{x})^\top\nabla P(\mathbf{x})]=\E[\|\nabla P(\mathbf{x})\|_2^2].
$$
\end{proof}

\section{Regularized Logistic-Loss Polynomial Regression}\label{app:compute_reg}

We restate and prove the following theorem. The proof is a  uniform convergence result for a Lipschitz loss over a bounded finite-dimensional class. The main technical difficulty is that, in the Hermite basis, the feature map $\Phi(\mathbf{x})$ is unbounded under the Gaussian distribution. To address this, we use a standard truncation argument.

\begin{theorem}[Regularized logistic-loss polynomial regression]\label{thm:compute-reg}
Fix $\mu>0$ to be an absolute constant and let $\nu\ge 0$. Let $D$ be a distribution over $(\mathbf{x},y)\in\R^d\times\{\pm1\}$ whose $\mathbf{x}$-marginal is $\cN(0,\mathbf{I}_d)$. Let $k\in\mathbb Z_+$ and $\eps,\delta>0$. There is an algorithm that draws
$
N=(dk)^{O(k)}\frac{\log(1/\delta)}{\eps^2}
$
samples from $D$, runs in time $\poly(N,d)$, and outputs a polynomial $P\in\cP_k$ such that
$
\cL_{\mu,\nu}(P)\le \inf_{P'\in\cP_k}\cL_{\mu,\nu}(P')+\eps
$
with probability at least $1-\delta$.
\end{theorem}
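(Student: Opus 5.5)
The plan is empirical risk minimization of the regularized objective over a bounded set of Hermite coefficient vectors, followed by a uniform convergence argument. A truncation step handles the fact that the Hermite features are unbounded.

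\textbf{Step 1: restrict to a bounded domain.} Write $P=\langle \mathbf{c},\Phi(\mathbf{x})\rangle$ in the normalized Hermite basis. By \Cref{fact:hermite-parseval}, $\E[P^2]=\|\mathbf{c}\|_2^2$. The objective satisfies $\cL_{\mu,\nu}(0)=\log 2$ and every term is nonnegative, so any $P$ with $\cL_{\mu,\nu}(P)\le \log 2$ obeys $\|\mathbf{c}\|_2^2\le \log 2/\mu=O(1)$. It therefore suffices to minimize over the ball $\mathcal{B}=\{\|\mathbf{c}\|_2\le R\}$ with $R=O(1)$.

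On $\mathcal{B}$, both regularizers are deterministic functions of $\mathbf{c}$, since they do not depend on $D$:
\begin{itemize}
\item $\mu\|\mathbf{c}\|_2^2$ is explicit;
\item $\nu\|\mathbf{A}(P)\|_*$ is explicit by \Cref{app:lem:nuclear-hermite}, and it is convex and linear-in-$\mathbf{c}$ composed with a norm.
\end{itemize}
Consequently only the term $\E[\ell(yP(\mathbf{x}))]$ has to be estimated from samples.

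\textbf{Step 2: solve the empirical problem.} The algorithm draws $N$ samples and minimizes the empirical convex objective
\[
\widehat{\cL}(\mathbf{c})=\tfrac1N\sum_i \ell(y_i\langle \mathbf{c},\Phi(\mathbf{x}_i)\rangle)+\mu\|\mathbf{c}\|_2^2+\nu\|\mathbf{A}(\mathbf{c})\|_*
\]
over $\mathcal{B}$. This objective is strongly convex, and the feasible set has dimension $m=\binom{d+k}{k}\le (dk)^{O(k)}$. Solving it to accuracy $\eps/3$ with the ellipsoid method or projected subgradient takes $\poly(N,d)$ time. The nuclear norm and its subgradient are computed via an SVD of a $d\times m'$ matrix.

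\textbf{Step 3: uniform convergence, which is the main obstacle.} The goal is $\sup_{\mathbf{c}\in\mathcal{B}}|\widehat{\E}\ell-\E\ell|\le\eps/3$ with probability $1-\delta$. The difficulty is that $\ell$ is $1$-Lipschitz but $\Phi(\mathbf{x})$ is unbounded. I would truncate using a threshold
\[
B=(dk)^{O(k)}\log(1/\delta)/\eps^2 .
\]
The tail control comes from Gaussian hypercontractivity for degree-$k$ polynomials:
\begin{itemize}
\item $\E\|\Phi(\mathbf{x})\|_2^2=m$;
\item each $H_\alpha$ has $q$-th moments at most $(q-1)^{k/2}$.
\end{itemize}
Hence $\Pr[\|\Phi(\mathbf{x})\|_2>B]$ is tiny. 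On the complement event, Cauchy–Schwarz gives
\[
\E\big[|\ell(yP)-\ell(0)|\,\Ind\{\|\Phi\|>B\}\big]\le R\,\E\big[\|\Phi\|\Ind\{\|\Phi\|>B\}\big]\le R\sqrt{m}\,\Pr[\|\Phi\|>B]^{1/2},
\]
which is at most $\eps/12$ for this choice of $B$. The same bound holds empirically: by Markov's inequality, with high probability the empirical average of $\|\Phi(\mathbf{x}_i)\|\Ind\{\|\Phi(\mathbf{x}_i)\|>B\}$ is small. On the truncated class, $|\ell(y\langle \mathbf{c},\Phi\rangle)-\ell(0)|\le RB$, and the linear class has Rademacher complexity $RB/\sqrt N$. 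Symmetrization together with McDiarmid's inequality then gives deviation
\[
O\!\big(RB\sqrt{\log(1/\delta)/N}\big)\le\eps/12
\]
once $N=(dk)^{O(k)}\log(1/\delta)/\eps^2$.

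\textbf{Step 4: combine.} Let $\hat{\mathbf{c}}$ be the approximate empirical minimizer and $\mathbf{c}^*$ the population near-minimizer in $\mathcal{B}$. The standard chain is
\[
\cL(\hat{\mathbf{c}})\le\widehat{\cL}(\hat{\mathbf{c}})+\tfrac{\eps}{3}\le\widehat{\cL}(\mathbf{c}^*)+\tfrac{2\eps}{3}\le\cL(\mathbf{c}^*)+\eps .
\]
The constants in $B$ and $N$ are adjusted so that the failure probabilities sum to $\delta$.
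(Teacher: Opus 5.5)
Your architecture matches the paper's: restrict to the ball $\|\mathbf{c}\|_2\le R=\sqrt{\log 2/\mu}$, truncate on $\|\Phi(\mathbf{x})\|_2$, prove uniform convergence, then run the ERM chain. However, Step 3 does not deliver the claimed $N=(dk)^{O(k)}\log(1/\delta)/\eps^2$, for two concrete reasons.

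First, your algorithm minimizes the \emph{untruncated} empirical loss, so you also have to control the empirical tail $\frac1N\sum_i\|\Phi(\mathbf{x}_i)\|_2\Ind\{\|\Phi(\mathbf{x}_i)\|_2>B\}$. Markov's inequality bounds its failure probability only by $O(R/\eps)$ times its mean. You would need that mean to be $O(\delta\eps/R)$, which forces $B$, and then $N$, to grow polynomially in $1/\delta$ (or by a $\log^{\Theta(k)}(1/\delta)$ factor if you use hypercontractivity). No sharper tail inequality rescues this. For $k\ge 3$, take $P=R\,H_k(x_1)$ and $y=-\sign(P(\mathbf{x}))$, so that $\ell(yP)\ge R|H_k(x_1)|$ while $\E\,\ell(yP)\le \log 2+R$. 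A single sample with $R|H_k(x_1)|\ge N(\log 2+R+\eps)$, an event of probability at least $\exp(-O(kN^{2/k}))$, already puts the empirical loss $\eps$ above its mean. So two-sided uniform convergence of the untruncated loss at confidence $1-\delta$ needs $N\gtrsim(\log(1/\delta)/k)^{k/2}$, which exceeds $(dk)^{O(k)}\log(1/\delta)/\eps^2$ once $\delta$ is small enough.

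Second, McDiarmid is range-based. The condition $RB\sqrt{\log(1/\delta)/N}\le\eps/12$ needs $N\gtrsim R^2B^2\log(1/\delta)/\eps^2$, which with your stated $B$ is $(dk)^{O(k)}\log^3(1/\delta)/\eps^6$. The population tail already forces $B$ to grow as $\eps\to0$: with the second-moment bound $\E[\|\Phi\|_2\Ind\{\|\Phi\|_2>B\}]\le m/B$ you need $B\gtrsim Rm/\eps$, hence $N\gtrsim m^2/\eps^4$. So a range-based inequality always leaves an extra $\eps$-dependent factor $B^2$. Separately, the Rademacher complexity of the truncated linear class is $R\sqrt{m/N}$, via $\E\|\frac1N\sum_i\sigma_i\Phi(\mathbf{x}_i)\|_2\le\sqrt{m/N}$, not $RB/\sqrt N$.

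The paper avoids both problems with two choices.

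It builds the truncation into the algorithm. It approximately minimizes, over the ball, the objective $\widehat{\cL}^\Lambda_{\mu,\nu}$ in which $\ell(y\langle\mathbf{c},\Phi(\mathbf{x})\rangle)$ is replaced by the constant $\log 2$ whenever $\|\Phi(\mathbf{x})\|_2>\Lambda:=8Rm/\eps$. This objective is still convex in $\mathbf{c}$, since per sample the term is either the logistic loss or a constant. The truncation error is then the deterministic population bound $R\,\E\|\Phi\|_2^2/\Lambda=\eps/8$, and no empirical tail ever appears.

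It then uses Bousquet's variance-sensitive form of Talagrand's inequality instead of McDiarmid. Each truncated summand minus $\log 2$ has second moment at most $\E\langle\mathbf{c},\Phi(\mathbf{x})\rangle^2=\|\mathbf{c}\|_2^2\le R^2$, regardless of $\Lambda$. So the large range $R\Lambda=O(m/\eps)$ enters only through a term $R\Lambda\log(1/\delta)/N$, and the deviation is at most $\eps/8$ once $N\gtrsim m\log(e/\delta)/\eps^2$.

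With these two changes, your Steps 1 and 4, and Step 2 applied to the truncated objective, go through as written.
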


\begin{proof}
Let $\Phi(\mathbf{x}):=(H_\alpha(\mathbf{x}))\in\R^m$ be the normalized Hermite feature map with $m:=|\{\alpha\in\mathbb N^d:\ |\alpha|\le k\}|=\binom{d+k}{k}$. Every $P\in\cP_k$ has a unique coefficient vector $\mathbf{c}\in\R^m$ such that
$$
P_{\mathbf{c}}(\mathbf{x})=\langle \mathbf{c},\Phi(\mathbf{x})\rangle = \sum_{|\alpha|\le k}c_\alpha H_\alpha(\mathbf{x})
$$
Based on orthogonality of the Hermite basis (\Cref{fact:hermite-parseval}),
$$
\E[\Phi(\mathbf{x})\Phi(\mathbf{x})^\top]=\mathbf{I}_m,
\qquad
\E\|\Phi(\mathbf{x})\|_2^2=m,
\qquad
\E[P_{\mathbf{c}}(\mathbf{x})^2]=\|\mathbf{c}\|_2^2.
$$
By the nuclear norm representation of the trace in \Cref{app:lem:nuclear-hermite}, where $\mathbf{A}(\mathbf{c})$ is a fixed linear function of $\mathbf{c}$, $ \tr(\mathbf{M}(P_{\mathbf{c}})^{1/2})=\|\mathbf{A}(\mathbf{c})\|_*$. Thus, in the Hermite basis, our loss $\eqref{logistic-loss}$ is
$$
\cL_{\mu,\nu}(\mathbf{c})
=
\E_{(\mathbf{x},y)\sim D}\ell(y\langle \mathbf{c},\Phi(\mathbf{x})\rangle)
+
\mu\|\mathbf{c}\|_2^2
+
\nu\|\mathbf{A}(\mathbf{c})\|_*.
$$

Let $\ell_0:=\ell(0)=\log 2$ and let $\mathcal B_R$ be the Euclidean ball of radius $R:=\sqrt{\ell_0/\mu}$:
$$
\mathcal B_R:=\{\mathbf{c}\in\R^m:\|\mathbf{c}\|_2\le R\}.
$$
Since $\ell\ge 0$ and $\|\mathbf{A}(\mathbf{c})\|_*\ge 0$, we have
$
\cL_{\mu,\nu}(\mathbf{c})\ge \mu\|\mathbf{c}\|_2^2
$,
while $\cL_{\mu,\nu}(0)=\ell_0$. Hence every population minimizer lies in $\mathcal B_R$. Since $\mu$ is an absolute constant, $R=O(1)$.

The technical difficulty is that the Hermite feature map $\Phi(\mathbf{x})$ is unbounded under the Gaussian distribution. Set the truncation radius to $\Lambda:=\frac{8Rm}{\eps}$ and define the corresponding truncation indicator by $
G_\Lambda(\mathbf{x}):=\Ind {\{\|\Phi(\mathbf{x})\|_2\le \Lambda\}}$.
For $\mathbf{c}\in\mathcal B_R$, define
$$
\ell_{\Lambda,\mathbf{c}}(\mathbf{x},y)
:=
G_\Lambda(\mathbf{x})\ell(y\langle \mathbf{c},\Phi(\mathbf{x})\rangle)
+
(1-G_\Lambda(\mathbf{x}))\ell_0.
$$
Let the population truncated objective be
$$
\cL^\Lambda_{\mu,\nu}(\mathbf{c})
:=
\E[\ell_{\Lambda,\mathbf{c}}(\mathbf{x},y)]
+
\mu\|\mathbf{c}\|_2^2
+
\nu\|\mathbf{A}(\mathbf{c})\|_*,
$$
and define the empirical truncated objective
$$
\widehat{\cL}^\Lambda_{\mu,\nu}(\mathbf{c})
:=
\frac1N\sum_{i=1}^N\ell_{\Lambda,\mathbf{c}}(\mathbf{x}_i,y_i)
+
\mu\|\mathbf{c}\|_2^2
+
\nu\|\mathbf{A}(\mathbf{c})\|_*.
$$
The algorithm computes $\widehat{\mathbf{c}}\in\mathcal B_R$ and outputs
$\widehat P=P_{\widehat{\mathbf{c}}}$, where $\widehat{\mathbf{c}}$ satisfies
\begin{equation}
\label{eq:approx-erm}
\widehat{\cL}^\Lambda_{\mu,\nu}(\widehat{\mathbf{c}})
\le
\inf_{\mathbf{c}\in\mathcal B_R}\widehat{\cL}^\Lambda_{\mu,\nu}(\mathbf{c})+\eps/4,
\end{equation}

We first compare the truncated and untruncated population objectives. For every $\mathbf{c}\in\mathcal B_R$,
\begin{align*}
\left|
\E\ell(y\langle \mathbf{c},\Phi(\mathbf{x})\rangle)
-
\E\ell_{\Lambda,\mathbf{c}}(\mathbf{x},y)
\right|
&\le
\E\big[
|\langle \mathbf{c},\Phi(\mathbf{x})\rangle|
\Ind {\{\|\Phi(\mathbf{x})\|_2>\Lambda\}}
\big] \\
&\le
R\,\E\big[
\|\Phi(\mathbf{x})\|_2
\Ind {\{\|\Phi(\mathbf{x})\|_2>\Lambda\}}
\big] \\
&\le
R\frac{\E\|\Phi(\mathbf{x})\|_2^2}{\Lambda}
=
\frac{Rm}{\Lambda}
=
\frac{\eps}{8}.
\end{align*}
The first inequality follows from the $1$-Lipschitzness of the logistic loss and the definition of the truncated loss; the second follows from the Cauchy--Schwarz inequality and $\|\mathbf{c}\|_2\le R$; the third uses Markov-type tail bound.

The regularizers are the same in $\cL_{\mu,\nu}$ and $\cL^\Lambda_{\mu,\nu}$, so
\begin{equation}
\label{eq:truncation-error}
\sup_{\mathbf{c}\in\mathcal{B}_R}
\left|\cL_{\mu,\nu}(\mathbf{c})-\cL^\Lambda_{\mu,\nu}(\mathbf{c})\right|
\le \frac{\eps}{8}.
\end{equation}
We next control the empirical error for the truncated loss class. Define $ h_{\mathbf{c}}(\mathbf{x},y):=\ell_{\Lambda,\mathbf{c}}(\mathbf{x},y)-\ell_0$. For $\mathbf{c}\in\mathcal B_R$, we have $|h_{\mathbf{c}}(\mathbf{x},y)|\le R\Lambda$,
$$
\E[h_{\mathbf{c}}(\mathbf{x},y)^2]\le R^2.
$$
Moreover, by symmetrization (\cite[Lemma 11.4]{BLM13}) and the contraction principle (\cite[Theorem 11.6]{BLM13}),
$$
\E\sup_{\mathbf{c}\in\mathcal B_R}
\left|
\frac1N\sum_{i=1}^Nh_{\mathbf{c}}(\mathbf{x}_i,y_i)-\E h_{\mathbf{c}}(\mathbf{x},y)
\right|
\le
C R\sqrt{\frac{m}{N}}.
$$
Applying Bousquet's version of Talagrand's inequality for bounded empirical processes, see \cite[Theorem~12.5]{BLM13}, gives that with probability at least $1-\delta$,
$$
\sup_{\mathbf{c}\in\mathcal B_R}
\left|
\frac1N\sum_{i=1}^N\ell_{\Lambda,\mathbf{c}}(\mathbf{x}_i,y_i)
-
\E\ell_{\Lambda,\mathbf{c}}(\mathbf{x},y)
\right|
\le
C\left(
R\sqrt{\frac{m}{N}}
+
R\sqrt{\frac{\log(e/\delta)}{N}}
+
R\Lambda\frac{\log(e/\delta)}{N}
\right).
$$
Since $\Lambda=8Rm/\eps$ and $R=O(1)$, the right-hand side is at most $\eps/8$ provided $N\ge C_\mu m\frac{\log(e/\delta)}{\eps^2}$. As $m=\binom{d+k}{k}\le (dk)^{O(k)}$, this is implied by
$$
N=(dk)^{O(k)}\frac{\log(e/\delta)}{\eps^2}.
$$
Consequently, with probability at least $1-\delta$,
\begin{equation}
\label{eq:truncated-uniform-convergence}
\sup_{\mathbf{c}\in \mathcal{B}_R}
\left|
\widehat{\cL}^\Lambda_{\mu,\nu}(\mathbf{c})-\cL^\Lambda_{\mu,\nu}(\mathbf{c})
\right|
\le \frac{\eps}{8}.
\end{equation}
Let $\mathbf{c}^\star\in\mathcal B_R$ be a minimizer of $\cL_{\mu,\nu}$ over $\R^m$. On the event above,
$$
\cL_{\mu,\nu}(\widehat{\mathbf{c}})
\overset{\eqref{eq:truncation-error}}{\le}
\cL^\Lambda_{\mu,\nu}(\widehat{\mathbf{c}})+\frac{\eps}{8} 
\overset{\eqref{eq:truncated-uniform-convergence}}{\le}
\widehat{\cL}^\Lambda_{\mu,\nu}(\widehat{\mathbf{c}})+\frac{\eps}{4} 
\overset{\eqref{eq:approx-erm}}{\le}
\widehat{\cL}^\Lambda_{\mu,\nu}(\mathbf{c}^\star)+\frac{\eps}{2} 
\overset{\eqref{eq:truncated-uniform-convergence}}{\le}
\cL^\Lambda_{\mu,\nu}(\mathbf{c}^\star)+\frac{3\eps}{4} 
\overset{\eqref{eq:truncation-error}}{\le}
\cL_{\mu,\nu}(\mathbf{c}^\star)+\eps.
$$
Therefore
$$
\cL_{\mu,\nu}(\widehat P)
\le
\inf_{P'\in\cP_k}\cL_{\mu,\nu}(P')+\eps.
$$

Finally, $\widehat{\cL}^\Lambda_{\mu,\nu}$ is a finite-dimensional convex objective over $\mathcal B_R$: the loss term is convex, $\mu\|\mathbf{c}\|_2^2$ is convex, and $\|\mathbf{A}(\mathbf{c})\|_*$ is convex because $\mathbf{A}$ is linear and the nuclear norm is convex (\Cref{app:lem:nuclear-hermite}). The dimension is $m\le (dk)^{O(k)}$, so an $\eps/4$-approximate minimizer can be found in time polynomial in the number of samples and the feature dimension, which is written as $\poly(N,d)$ with the dependence on $k$ absorbed into $(dk)^{O(k)}$.
\end{proof}

\section{FT-Mollification}\label{app:approx}

The univariate Gaussian $L_1$ approximation of a threshold by a degree-$O(\tau^{-2})$ polynomial is a standard consequence of the FT-mollification argument of \cite{DKN10}. 

\begin{theorem}[\cite{DKN10}]\label{prop:threshold-L1-standard}
There exists an absolute constant $C_{\mathrm{HS}}>0$ such that for every $\tau\in(0,1/10)$ and every $b\in\R$, the FT-mollification construction of \cite{DKN10} yields a polynomial $p:\R\to\R$ of degree at most $C_{\mathrm{HS}}\tau^{-2}$ satisfying
\begin{equation}
\label{q:taylor-good-approx}
\E_{x\sim\cN(0,1)}\bigl[\,|p(x)-\sign(x+b)|\,\bigr]\le \tau.
\end{equation}
\end{theorem}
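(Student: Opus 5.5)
We would prove the statement by the standard FT-mollification route of \cite{DKN10}. The plan has three steps: smooth the threshold at scale about $\tau$, take a low-degree Taylor approximation of the smooth function on a bounded window, and control the Gaussian tails outside the window.

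First, reduce to the case $|b|\le B:=\sqrt{2\log(4/\tau)}$. If $|b|>B$, then $\sign(x+b)$ equals the constant $\sign(b)$ except on a set of Gaussian measure at most $e^{-b^2/2}\le\tau/4$. The constant polynomial $p\equiv\sign(b)$, of degree $0$, then already achieves error at most $2\cdot\tau/4<\tau$.

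For $|b|\le B$, fix a Schwartz bump $\rho$ with $\int\rho=1$, $\rho\ge 0$, and $\hat\rho$ supported in $[-1,1]$; this is the standard construction in \cite{DKN10}, taking $\rho=|\hat\chi|^2$ normalized, where $\chi$ is a compactly supported bump. Let $\rho_c(t):=c\rho(ct)$ with $c:=1/\tau$, and set $\tilde F:=\sign(\cdot+b)*\rho_c$. Then $\tilde F$ is entire, bounded by $1$, and its $j$-th derivative satisfies $\|\tilde F^{(j)}\|_\infty\le (2c)^j$ by the Fourier support bound. Moreover, $|\tilde F(x)-\sign(x+b)|\le \min\bigl(2,\,O\bigl(1/(c|x+b|)^2\bigr)\bigr)$ because $\rho$ has rapidly decaying tails. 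Integrating this pointwise bound against the Gaussian density, which is bounded by $1/\sqrt{2\pi}$, gives $\E|\tilde F-\sign(x+b)|=O(1/c)=O(\tau)$, and we tune the constant to obtain $\tau/2$.

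Next, let $p$ be the degree-$k$ Taylor polynomial of $\tilde F$ at $-b$. On the window $|x+b|\le R$, the Lagrange remainder is at most $(2cR)^{k+1}/(k+1)!$. Outside the window we bound $\E|p-\tilde F|$ using Cauchy--Schwarz against the Gaussian tail $\Pr[|x+b|>R]$ together with a moment bound on $p$. The Taylor coefficients of $p$ are at most $(2c)^j/j!$, so $\E[p(x)^2]^{1/2}\le\sum_j (2c)^j\,\E|x+b|^{2j}^{1/2}/j!$.

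This tail control is the main obstacle, because making the window remainder small forces $k\gtrsim ecR$, while the tails force $R$ to be large. The resolution in \cite{DKN10} is to take $R=\Theta(c)=\Theta(1/\tau)$ and $k=\Theta(c^2)=\Theta(\tau^{-2})$. On the window, Stirling's formula then gives a remainder of $(O(c^2)/k)^{k}\le 2^{-k}\le\tau/8$. Off the window, the Gaussian weight is $e^{-\Omega(c^2)}$, using $|b|\le B\ll R$. This weight beats the coefficient growth, since $\E|x+b|^{2j}\le (O(j))^{j}+O(B)^{2j}$, which makes each term $(2c)^j\sqrt{O(j)^j}/j!\le (O(c)/\sqrt j)^j$. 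The sum of these terms is $e^{O(c^2)}$, and its product with the square root of the tail probability, $e^{-\Omega(R^2)}$, stays below $\tau/8$ once the constant in $R$ is chosen large relative to the one in $k$.

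Adding the three error contributions (mollification, window remainder, and tail) gives $\E|p-\sign(x+b)|\le\tau$ with $\deg p\le C_{\mathrm{HS}}\tau^{-2}$.
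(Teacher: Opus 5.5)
Your argument is correct in substance, but it is more elaborate than necessary, and its last step orders the constants backwards. The paper does not reprove this statement. It cites \cite{DKN10} and records only the construction: a mollified sign $g$ with global bounds $\sup_x|g^{(j)}(x)|\le (C/\tau)^j$ and $L_1$ error $\tau/2$, Taylor-expanded at $x=0$ rather than at the threshold $x=-b$, to degree $k=\Theta(\tau^{-2})$.

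With that centering, the $L_1$ bound takes one line. The derivative bound is global, so the Lagrange remainder $|g(x+b)-p(x)|\le (C/\tau)^{k+1}|x|^{k+1}/(k+1)!$ holds for every $x\in\R$, not just on a window. Integrating against the Gaussian, with $\E|x|^{k+1}\le (O(\sqrt{k}))^{k+1}$ and $(k+1)!\ge ((k+1)/e)^{k+1}$, gives $\E|g(x+b)-p(x)|\le \bigl(O(1/(\tau\sqrt{k}))\bigr)^{k+1}\le 2^{-(k+1)}$ once $k=C_{\mathrm{HS}}\tau^{-2}$ with $C_{\mathrm{HS}}$ large. This is uniform in $b$, since neither the derivative bounds nor the Gaussian moments depend on $b$.

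So the window-versus-tail tension you call the main obstacle comes only from applying Taylor's theorem on $|x+b|\le R$. The following steps can all be dropped:
\begin{itemize}
\item the Bernstein bounds on the coefficients;
\item the Minkowski and Cauchy--Schwarz tail estimate;
\item the reduction to $|b|\le B$, whose constant polynomial is strictly not the FT-mollification construction, though harmless.
\end{itemize}
What your route does give is an explicit but crude bound $\|p\|_2\le e^{O(c^2)}$. By contrast, the paper's window-plus-tail analysis in \Cref{lem:threshold-L2-gradient} uses the Chebyshev growth bound \Cref{fact:cheb-growth} with $R=\alpha\sqrt{k}$. That yields the $O(1)$ bounds on $\E[p^2]$ and $\E[(p')^2]$ that \Cref{app:thm:threshold-approx} needs later.

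On the constants, you conclude that the tail term is small ``once the constant in $R$ is chosen large relative to the one in $k$.'' Read literally this is circular. Your own window estimate $(2ecR/(k+1))^{k+1}\le 2^{-(k+1)}$ requires the reverse: with $R=ac$ and $k=Ac^2$ you need $A\ge 4ea$.

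The fix is that your bound on $\|p\|_2$ does not depend on $k$. Even summed over all $j\ge 1$, $\sum_j (Mc/\sqrt{j})^j\le e^{O(M^2c^2)}$ for the absolute constant $M$. The terms peak near $j\approx M^2c^2/e$ and fall below $e^{-j}$ once $j\ge e^2M^2c^2$, and the $B$-part contributes only $e^{O(cB)}$. So the order is:
\begin{enumerate}
\item Choose $a$ large enough that $e^{O(c^2)}$ times $\sqrt{2}\,e^{-a^2c^2/16}$ is at most $\tau/8$; the second factor bounds $\Pr[|x+b|>R]^{1/2}$ whenever $R\ge 2B$.
\item Only then set $A\ge 4ea$.
\end{enumerate}
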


In this section, we observe that the same construction also gives the $L_2$ and gradient bounds (\Cref{lem:threshold-L2-gradient}) needed in the main text.

The construction of \cite{DKN10}  starts with a mollified sign $g$, obtained by applying the
FT-mollifier at scale $\Theta(\tau)$. This mollified function $g$ is monotone non-decreasing bounded in $[-1,1]$ and satisfies\footnote{The first and last inequalities below are proven in \cite{DKN10}. The second inequality follows by bounding $\E_{x\sim\cN(0,1)}[(g'(x+b))^2] \leq C/\tau \cdot \E_{x\sim\cN(0,1)}[g'(x+b)]\leq C/\tau \int_{-\infty}^{+\infty} \frac{g'(x+b)}{\sqrt{2\pi}} dx \leq C/\tau $. Here, we used the fact that $g'$ is non-negative since $g$ is monotone non-decreasing, the penultimate inequality follows from  Gaussian anti-concentration, and the last inequality follows since $g$ is bounded in $[-1,1]$.}, for all $j \ge 0$,
$$
\sup_{x\in\R}|g^{(j)}(x+b)|\le (C/\tau)^j, \qquad 
\E_{x\sim\cN(0,1)}[(g'(x+b))^2]\le C\tau^{-1}.
$$
$$
\E_{x\sim\cN(0,1)}\bigl[\,|g(x+b)-\sign(x+b)|\,\bigr]\le \tau/2.
$$
Choosing $k=\Theta(\tau^{-2})$, \cite{DKN10} take $p$ to be the degree-$k$
Taylor polynomial at $x=0$ of the function $x\mapsto g(x+b)$:
$$
p(x):=
\sum_{j=0}^k \frac{g^{(j)}(b)}{j!}x^j.
$$
The work of \cite{DKN10} shows that
this polynomial satisfies Eq.\;\eqref{q:taylor-good-approx}:
To control the behavior of this polynomial outside the interval where the
Taylor approximation is accurate, we use the following standard Chebyshev
growth bound.

\begin{fact}[Chebyshev growth bound {\normalfont\cite[Theorem~2.20]{Riv74}}]\label{fact:cheb-growth}
Let $R,M>0$, and let $q$ be a polynomial of degree at most $k$. If $|q(x)|\le M$ for $|x|\le R,$ then for every $|x|\ge R$,
$$
|q(x)|\le M\left(\frac{2|x|}{R}\right)^k.
$$
\end{fact}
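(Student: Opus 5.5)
The plan is to reduce to the interval $[-1,1]$ and then use the extremal property of the Chebyshev polynomial $T_k$. Set $\tilde q(s):=q(Rs)/M$. Then $\tilde q$ has degree at most $k$ and $|\tilde q(s)|\le 1$ for $|s|\le 1$. It therefore suffices to show that $|\tilde q(s)|\le (2|s|)^k$ for $|s|\ge 1$, and substituting $s=x/R$ then gives the stated bound. Replacing $\tilde q(s)$ by $\tilde q(-s)$ if necessary, we may assume $s\ge 1$.

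\emph{Step 1: comparison with $T_k$.} We show $|\tilde q(s)|\le T_k(s)$ for $s\ge1$ by Lagrange interpolation at the $k+1$ Chebyshev extremal nodes $\eta_j=\cos(j\pi/k)$, $j=0,\dots,k$. These nodes are distinct, and $\tilde q$ has degree at most $k$, so
$$\tilde q(s)=\sum_{j=0}^k \tilde q(\eta_j)\,\ell_j(s), \qquad \ell_j(s)=\prod_{i\ne j}\frac{s-\eta_i}{\eta_j-\eta_i}.$$
For $s\ge 1$ every numerator factor $s-\eta_i$ is nonnegative. Because the nodes are decreasing in $j$, exactly $j$ of the denominator factors $\eta_j-\eta_i$ are negative. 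Hence $\operatorname{sign}\ell_j(s)=(-1)^j$.

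The polynomial $T_k$ also has degree $k$ and satisfies $T_k(\eta_j)=(-1)^j$, so interpolating it at the same nodes gives
$$T_k(s)=\sum_j (-1)^j\ell_j(s)=\sum_j|\ell_j(s)|.$$
Combining this with $|\tilde q(\eta_j)|\le 1$ yields
$$|\tilde q(s)|\le \sum_j|\ell_j(s)|=T_k(s).$$

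\emph{Step 2: growth of $T_k$.} For $s\ge1$ we use the representation $T_k(s)=\cosh(k\,\mathrm{arccosh}\, s)$. This gives
$$T_k(s)=\tfrac12\big[(s+\sqrt{s^2-1})^k+(s-\sqrt{s^2-1})^k\big]\le (s+\sqrt{s^2-1})^k\le (2s)^k,$$
where the middle inequality holds because the two bracketed terms have product $1$, so the smaller one is at most the larger one.

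The main point requiring care is the sign bookkeeping in Step 1, since it is what converts the interpolation identity into an equality with $T_k$ rather than merely an upper bound. The degenerate case $k=0$ is trivial, since then $\tilde q$ is constant, and the case $\deg q<k$ needs no separate treatment because interpolation at $k+1$ nodes is exact for every polynomial of degree at most $k$.
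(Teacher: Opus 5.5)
Your proof is correct. The paper gives no proof of its own: it only cites Rivlin. Your argument is the classical proof behind that citation, namely comparing $\tilde q$ with $T_k$ via Lagrange interpolation at the extremal nodes $\cos(j\pi/k)$ and then bounding $T_k(s)\le (s+\sqrt{s^2-1})^k\le (2s)^k$. One cosmetic point: at $s=1$ you have $\ell_j(1)=0$ for $j\ge 1$, so the sign claim should read $(-1)^j\ell_j(s)\ge 0$, which is all that the identity $T_k(s)=\sum_j|\ell_j(s)|$ requires.
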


We now combine Taylor's theorem on a central interval with \Cref{fact:cheb-growth} on the tails to obtain the required $L_2$ and gradient bounds for the polynomial $p$.

\begin{lemma}\label{lem:threshold-L2-gradient}
Let $p$ be the degree-$k$ polynomial defined above, where $k=\Theta(\tau^{-2})$ is chosen with a sufficiently large implicit constant. Then
$$
\E_{x\sim\cN(0,1)}[p(x)^2]\le 5,
\qquad
\E_{x\sim\cN(0,1)}[(p'(x))^2]\le C\tau^{-1}.
$$
\end{lemma}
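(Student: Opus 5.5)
We split $\R$ into a central interval $I:=[-R_0,R_0]$ and its complement, with $R_0=\Theta(\sqrt{k})$ to be fixed. On $I$ we use Taylor's theorem; on the tails we use \Cref{fact:cheb-growth} together with Gaussian tail decay.

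\textbf{Central interval.} For $|x|\le R_0$, Taylor's theorem with Lagrange remainder and the derivative bounds $\sup|g^{(j)}|\le (C/\tau)^j$ give
$$
|p(x)-g(x+b)|\le \frac{(C|x|/\tau)^{k+1}}{(k+1)!}\le \Big(\frac{eC|x|}{\tau(k+1)}\Big)^{k+1}.
$$
Since $k=\Theta(\tau^{-2})$ with a large constant, we have $\tau k\ge C'\sqrt{k}$. Choosing $R_0=c\sqrt{k}$ with $c$ small relative to that constant makes the bracket at most $1/2$ on $I$. The remainder is then at most $2^{-(k+1)}$, so $|p|\le 1+2^{-k}$ on $I$.

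For $p'$, we differentiate the Taylor polynomial: $p'$ is the degree-$(k-1)$ Taylor polynomial of $g'(\cdot+b)$. The same remainder bound, now using $\sup|g^{(j+1)}|\le (C/\tau)^{j+1}$, gives $|p'(x)-g'(x+b)|\le (C/\tau)2^{-k}$ on $I$.

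\textbf{Tails.} Applying \Cref{fact:cheb-growth} with $M=2$ and radius $R_0$ gives $|p(x)|\le 2(2|x|/R_0)^k$ for $|x|\ge R_0$. Then
$$
\E[p^2\Ind\{|x|>R_0\}]\le 4\int_{|x|>R_0}(2|x|/R_0)^{2k}\varphi(x)\,dx.
$$
The integrand is $\exp\big(2k\log(2|x|/R_0)-x^2/2\big)$. With $R_0^2=c^2k$ and $c^2$ small relative to $k$, the Gaussian factor dominates. This is the main obstacle: a Chebyshev bound from $R_0$ alone grows like $(2/c)^{2k}$, while $e^{-R_0^2/2}=e^{-c^2k/2}$ cannot beat it if $c$ is small.

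To fix this, we need an $R_0$ of order $\sqrt{k}$ with a large constant, which requires a Taylor accuracy window of size $\sqrt{k}\cdot$large. That is achievable because $\tau k/\sqrt{k}=\tau\sqrt{k}=\Theta(\sqrt{C_{\mathrm{HS}}})$ can be made large by enlarging the constant in $k=\Theta(\tau^{-2})$. So we take $R_0=A\sqrt{k}$ with $A$ large, and $k=C_1\tau^{-2}$ with $C_1\gg (eCA)^2$ so that the Taylor bracket stays at most $1/2$. The tail integral is then at most $\int_{|x|>A\sqrt k}\exp(2k\log(2|x|/(A\sqrt{k}))-x^2/2)\,dx$. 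Substituting $x=A\sqrt{k}\,s$ with $s\ge1$, the exponent is $k\big(2\log(2s)-A^2s^2/2\big)\le -k$ once $A^2\ge 8$. The tail contribution is therefore $e^{-\Omega(k)}$.

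For $p'$, we apply \Cref{fact:cheb-growth} to $p'$ (degree $k-1$) with $M=2C/\tau$. We get $\E[p'^2\Ind\{|x|>R_0\}]\le (C/\tau)^2e^{-\Omega(k)}=o(1)$, since $k=\Theta(\tau^{-2})$.

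\textbf{Combining.} For the $L_2$ bound,
$$
\E[p^2]\le \E[(|g|+2^{-k})^2]+e^{-\Omega(k)}\le (1+2^{-k})^2+e^{-\Omega(k)}\le 5,
$$
since $|g|\le 1$.

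For the gradient bound, on $I$ we have $p'^2\le 2g'(x+b)^2+2(C/\tau)^22^{-2k}$. Taking expectations and using the footnoted bound $\E[g'(x+b)^2]\le C\tau^{-1}$ gives $O(\tau^{-1})+o(1)$. Adding the tail contribution yields $\E[p'^2]\le C\tau^{-1}$ after adjusting constants.
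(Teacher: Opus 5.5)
Your argument is correct and is essentially the paper's proof. Both use a Taylor-remainder bound on the window $|x|\le \alpha\sqrt{k}$ with $\alpha$ a large constant; this forces $k\gtrsim\alpha^2\tau^{-2}$, which is exactly the point you reach after discarding the small-$c$ attempt. Both then apply \Cref{fact:cheb-growth} on the tails for $p$ and $p'$, and the footnoted bound $\E[(g'(x+b))^2]\le C\tau^{-1}$ on the central interval. The only cosmetic difference is the tail estimate: you bound it by a direct Gaussian integral, while the paper uses the moment estimate $\E|x|^{2k}\le(2k)^k$ to get the tail term $4(8/\alpha^2)^k$.
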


\begin{proof}
Choose a universal constant $\alpha\ge 4$ and set $R:=\alpha\sqrt{k}$. By Taylor's theorem and the derivative bounds for $g$,
$$
|g(x+b)-p(x)|
\le
\left(\frac{C|x|}{\tau k}\right)^{k+1},
$$
$$
|g'(x+b)-p'(x)|
\le
C\tau^{-1}
\left(\frac{C|x|}{\tau k}\right)^k.
$$
If $k\ge C\alpha^2\tau^{-2}$, then for all $|x|\le R$,
$$
|p(x)|\le 2,
\qquad
|g'(x+b)-p'(x)|\le C\tau^{-1}4^{-k}
\leq O(1).
$$
Hence
$$
\E_{x\sim\cN(0,1)}[p(x)^2\Ind{\{|x|\le R\}}]\le 4,
\qquad
\E_{x\sim\cN(0,1)}[(p'(x))^2\Ind{\{|x|\le R\}}]\le C\tau^{-1},
$$
where the second inequality uses $\E_{x\sim\cN(0,1)}[(g'(x+b))^2]\le C\tau^{-1}$ and $(a+b)^2\le 2a^2+2b^2$.

By \Cref{fact:cheb-growth}, for $|x|\ge R$,
$$
|p(x)|\le 2\left(\frac{2|x|}{R}\right)^k.
$$
Therefore, using $\E_{x\sim\cN(0,1)}[|x|^{2k}]=(2k-1)!!\le (2k)^k$ and taking $\alpha$ sufficiently large,
$$
\E_{x\sim\cN(0,1)}[p(x)^2]
\le
4+4\left(\frac{2}{R}\right)^{2k}\E_{x\sim\cN(0,1)}[|x|^{2k}]
\le
4+4\left(\frac{8}{\alpha^2}\right)^k
\le 5.
$$
Similarly, since $|p'(x)|\le C\tau^{-1}$ on $[-R,R]$, applying
\Cref{fact:cheb-growth} to $p'$ gives, for all $|x|\ge R$,
$$
|p'(x)|
\le
C\tau^{-1}\left(\frac{2|x|}{R}\right)^{k-1}.
$$
Using $\E_{x\sim\cN(0,1)}[|x|^{2m}]\le (2m)^m$ with $m=k-1$, we obtain
$$
\E_{x\sim\cN(0,1)}[(p'(x))^2\Ind{\{|x|>R\}}]
\le
C\tau^{-2}\left(\frac{8}{\alpha^2}\right)^{k-1}
\le
C\tau^{-1},
$$
for sufficiently large absolute constants in the choices of $\alpha$ and
$k=\Theta(\tau^{-2})$. Combining the estimates on $|x|\le R$ and $|x|>R$
gives
$$
\E_{x\sim\cN(0,1)}[(p'(x))^2]\le C\tau^{-1}.
$$
\end{proof}

\section{Multidimensional OU}\label{app:multi-ou}

We first prove a general approximation theorem in terms of Gaussian surface area, and then specialize it to intersections of halfspaces and Boolean functions of halfspaces.

\begin{theorem}[Gaussian $L_1$ approximation from Gaussian surface area]\label{thm:gsa-ou-approx}
There exists an absolute constant $C>0$ such that the following holds. Let $f:\R^d\to\{\pm1\}$ satisfy $\GSA(f)\ge 1$. For every $\tau\in(0,1/10)$, there exists a polynomial $S:\R^d\to\R$ of degree at most
$
C\frac{\GSA(f)^2\log(1/\tau)}{\tau^2}
$
such that
$$
\E_{\mathbf{x}\sim\cN(0,\mathbf{I}_d)}[|f(\mathbf{x})-S(\mathbf{x})|]\le \tau,
\qquad
\E_{\mathbf{x}\sim\cN(0,\mathbf{I}_d)}[S(\mathbf{x})^2]\le 1,
\qquad
\E_{\mathbf{x}\sim\cN(0,\mathbf{I}_d)}\|\nabla S(\mathbf{x})\|^2\le C\frac{\GSA(f)^2}{\tau}.
$$
\end{theorem}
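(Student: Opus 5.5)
We would use the Ornstein--Uhlenbeck semigroup together with the KOS08-style Hermite concentration bound coming from Gaussian surface area. Let $T_\rho$ be the OU operator, $T_\rho H_\alpha=\rho^{|\alpha|}H_\alpha$. Write $\rho=e^{-t}$ and $f=\sum_\alpha \hat f_\alpha H_\alpha$. The candidate approximator is the truncation of $T_\rho f$ to degree $k$:
$$
S:=\sum_{|\alpha|\le k}\rho^{|\alpha|}\hat f_\alpha H_\alpha.
$$
Two properties are then immediate from Parseval (\Cref{fact:hermite-parseval}). First, $\E[S^2]\le\sum_\alpha\hat f_\alpha^2=\E[f^2]=1$. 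Second, by \Cref{fact:hermite-derivative},
$$
\E\|\nabla S\|^2=\sum_{|\alpha|\le k}|\alpha|\rho^{2|\alpha|}\hat f_\alpha^2 .
$$

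The key input is the standard noise-sensitivity bound (Ledoux/KOS08): for $f$ with values in $\{\pm1\}$,
$$
\tfrac12\E|f-T_\rho f|\le \Pr[f(\mathbf{x})\neq f(\rho\mathbf{x}+\sqrt{1-\rho^2}\mathbf{x}')]\le \sqrt{\pi}\,\sqrt{t}\,\GSA(f),
$$
up to constants. The first inequality holds because $|f-T_\rho f|=1-fT_\rho f$ pointwise, since $|T_\rho f|\le 1$.

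We handle the three required bounds as follows.

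\emph{$L_1$ error.} Choose $t=c\tau^2/\GSA(f)^2$, which gives $\E|f-T_\rho f|\le\tau/2$. The truncation error is at most $\|T_\rho f-S\|_2\le \rho^{k}=e^{-tk}$. Taking $k=C\log(1/\tau)/t=C\GSA(f)^2\log(1/\tau)/\tau^2$ makes this at most $\tau/2$. Combining the two pieces gives $\E|f-S|\le\tau$ with the claimed degree.

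\emph{Gradient bound.} This is the step needing care. The crude bound $\max_j j\rho^{2j}\le 1/(2et)$ gives $\E\|\nabla S\|^2\le O(\GSA^2/\tau^2)$, which is too weak. The sharper route uses $\sum_\alpha|\alpha|\rho^{2|\alpha|}\hat f_\alpha^2=-\tfrac12\frac{d}{dt}\E[(T_{e^{-t}}f)^2]$. One could bound this via monotonicity and the semigroup. The cleaner path is to compare with noise sensitivity: since $j e^{-2tj}\lesssim (1-e^{-2tj})/t$ for $tj\lesssim 1$, and is exponentially small for larger $tj$, we get
$$
\sum_\alpha|\alpha|\rho^{2|\alpha|}\hat f_\alpha^2\lesssim \frac1t\sum_\alpha(1-\rho^{2|\alpha|})\hat f_\alpha^2=\frac1t\bigl(1-\E[f\,T_{\rho^2}f]\bigr)=\frac1t\E|f-T_{\rho^2}f|\lesssim \frac{\sqrt{t}\,\GSA(f)}{t}.
$$
More precisely, $je^{-2tj}\le C(1-e^{-2tj})/t$ holds for all $j\ge1$: for $tj\le1$ use $1-e^{-u}\ge u/2$, and for $tj\ge1$ use $je^{-2tj}\le 1/t\cdot\sup_u ue^{-2u}$ together with $1-e^{-2}\ge 1/2$. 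With $t=c\tau^2/\GSA^2$ this yields $\E\|\nabla S\|^2\lesssim \GSA(f)/\sqrt t=O(\GSA(f)^2/\tau)$, as required.

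The main obstacle is exactly this gradient estimate. It needs the $1/\sqrt t$ rate rather than $1/t$, and it rests on the Ledoux/KOS08 noise-sensitivity inequality for sets of bounded Gaussian surface area. We would cite that inequality rather than reprove it. The hypothesis $\GSA(f)\ge1$ is used only to ensure $t<1$, so that the constants above apply.
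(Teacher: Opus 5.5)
Your proposal is correct and follows the paper's proof essentially step for step. Both use Ornstein--Uhlenbeck smoothing with $1-\rho\asymp\tau^2/\GSA(f)^2$, Hermite truncation at degree $O(\log(1/\tau)/(1-\rho))$, Parseval for the $L_2$ bound, and a comparison of $n\rho^{2n}$ with a noise-sensitivity weight to obtain the $\GSA(f)/\sqrt{1-\rho}$ gradient bound. The only difference is cosmetic: the paper uses the one-line inequality $n\rho^{2n}\le(1-\rho^n)/(1-\rho)$ with noise sensitivity at correlation $\rho$, whereas you split into cases on $tj$ and use noise sensitivity at correlation $\rho^2$, which gives the same bound up to constants.
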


\begin{proof}

Write the Hermite expansion
$
f=\sum_{\alpha}c_\alpha H_\alpha.
$
For $\rho\in(0,1)$, let $T_\rho$ denote the Ornstein--Uhlenbeck operator, so that for $\rho$-correlated standard Gaussians $\mathbf{x},\mathbf{y}$,
$$
T_\rho f(\mathbf{x})
:=
\sum_{\alpha}\rho^{|\alpha|}c_\alpha H_\alpha(\mathbf{x})
=
\E[f(\mathbf{y})\mid \mathbf{x}].
$$
Throughout the proof, expectations and probabilities involving both
$\mathbf{x}$ and $\mathbf{y}$ are taken over this joint $\rho$-correlated
Gaussian distribution.

By Jensen's inequality and the fact that $f$ is $\{\pm1\}$-valued,
$$
\E_{\mathbf{x}\sim\cN(0,\mathbf{I}_d)}[|f(\mathbf{x})-T_\rho f(\mathbf{x})|]
\le
\E[|f(\mathbf{x})-f(\mathbf{y})|]
=
2\Pr[f(\mathbf{x})\neq f(\mathbf{y})].
$$
By the Gaussian surface area noise-sensitivity bound \cite[Corollary~14]{KOS08},
$$
\Pr[f(\mathbf{x})\neq f(\mathbf{y})]
\le
C_0\,\GSA(f)\sqrt{1-\rho}.
$$
Choose
$
1-\rho=c\tau^2/\GSA(f)^2
$
for a sufficiently small absolute constant $c>0$. Then
$$
\E_{\mathbf{x}\sim\cN(0,\mathbf{I}_d)}[|f-T_\rho f|]\le \frac{\tau}{2}.
$$

Now truncate the Hermite expansion of $T_\rho f$ at degree $m$:
$$
S(\mathbf{x}):=\sum_{|\alpha|\le m}\rho^{|\alpha|}c_\alpha H_\alpha(\mathbf{x}).
$$
Since $f$ is $\{\pm1\}$-valued, Parseval's identity \Cref{fact:hermite-parseval} gives
$$
\E_{\mathbf{x}\sim\cN(0,\mathbf{I}_d)}[S(\mathbf{x})^2]
=
\sum_{|\alpha|\le m}\rho^{2|\alpha|}c_\alpha^2
\le
\sum_\alpha c_\alpha^2
=
\E[f(\mathbf{x})^2]
=
1.
$$
Also,
$$
\|T_\rho f-S\|_2^2
=
\sum_{|\alpha|>m}\rho^{2|\alpha|}c_\alpha^2
\le
\rho^{2m}\sum_\alpha c_\alpha^2
\le
\rho^{2m}.
$$
Hence
$$
\|T_\rho f-S\|_1
\le
\|T_\rho f-S\|_2
\le
\rho^m.
$$
Choose
$
m\ge \frac{\log(2/\tau)}{\log(1/\rho)}.
$
Since $\log(1/\rho)\ge 1-\rho$ for $\rho\in(0,1)$, it is enough to take
$$
m
=
O\!\left(\frac{\log(1/\tau)}{1-\rho}\right)
=
O\!\left(\frac{\GSA(f)^2\log(1/\tau)}{\tau^2}\right).
$$
Then $\|T_\rho f-S\|_1\le \tau/2$, and therefore
$$
\E_{\mathbf{x}\sim\cN(0,\mathbf{I}_d)}[|f(\mathbf{x})-S(\mathbf{x})|]
\le
\E_{\mathbf{x}\sim\cN(0,\mathbf{I}_d)}[|f(\mathbf{x})-T_\rho f(\mathbf{x})|]+\E_{\mathbf{x}\sim\cN(0,\mathbf{I}_d)}[|T_\rho f(\mathbf{x})-S(\mathbf{x})|]
\le
\tau.
$$

It remains to prove the gradient bound. Let
$
a_n:=\sum_{|\alpha|=n}c_\alpha^2
$
denote the Hermite weight of $f$ on level $n$. By \Cref{fact:hermite-derivative},
$$
\E_{\mathbf{x}\sim\cN(0,\mathbf{I}_d)}\|\nabla S(\mathbf{x})\|^2
=
\sum_{n=1}^m n\rho^{2n}a_n
\le
\sum_{n\ge 1}n\rho^{2n}a_n.
$$
For every $n\ge 1$,
$$
n\rho^{2n}
\le
\frac{1-\rho^n}{1-\rho}.
$$
Indeed,
$
(1-\rho^n)/(1-\rho)=1+\rho+\cdots+\rho^{n-1}\ge n\rho^{n-1}\ge n\rho^{2n}.
$
Therefore
$$
\E_{\mathbf{x}\sim\cN(0,\mathbf{I}_d)}\|\nabla S(\mathbf{x})\|^2
\le
\frac{1}{1-\rho}\sum_{n\ge1}(1-\rho^n)a_n.
$$
Since
$
\E[f(\mathbf{x})f(\mathbf{y})]=\sum_{n\ge0}\rho^n a_n
$
and
$
\E[f(\mathbf{x})^2]=1,
$
we have
$$
\sum_{n\ge1}(1-\rho^n)a_n
=
1-\E[f(\mathbf{x})f(\mathbf{y})]
=
2\Pr[f(\mathbf{x})\neq f(\mathbf{y})].
$$
Using the same Gaussian surface area noise-sensitivity bound,
$$
\sum_{n\ge1}(1-\rho^n)a_n
\le
C\,\GSA(f)\sqrt{1-\rho}.
$$
Thus
$$
\E_{\mathbf{x}\sim\cN(0,\mathbf{I}_d)}\|\nabla S(\mathbf{x})\|^2
\le
C\frac{\GSA(f)}{\sqrt{1-\rho}}.
$$
Substituting $1-\rho=\Theta(\tau^2/\GSA(f)^2)$ gives
$$
\E_{\mathbf{x}\sim\cN(0,\mathbf{I}_d)}\|\nabla S(\mathbf{x})\|^2
\le
C\frac{\GSA(f)^2}{\tau}.
$$
This completes the proof.
\end{proof}

We use the Gaussian surface area bounds recalled in
\Cref{fact:gsa-intersections} and \Cref{fact:gsa-boolean}: for intersections of $K$ halfspaces, $f\in\mathcal C_K^\cap$, one has $\GSA(f)\le O(\sqrt{\log K})$, while for arbitrary Boolean functions of $K$ halfspaces, $f\in\mathcal C_K$, one has $\GSA(f)\le O(K)$.

The following two corollaries are the approximation inputs used for intersections and Boolean functions of halfspaces.

\begin{corollary}[Gaussian $L_1$ approximation for intersections of halfspaces]\label{app:thm:intersection-approx}
There exist absolute constants $C_{\mathrm{IHS}},C'_{\mathrm{IHS}}>0$ with the following property. For every integer $K\ge 2$, every $\tau\in(0,1/10)$, and every $f\in \mathcal{C}_K^{\cap}$, there exists a polynomial $S:\R^d\to\R$ of degree at most
$
C_{\mathrm{IHS}}\frac{\log K\,\log(1/\tau)}{\tau^2}
$
such that, for $\mathbf{x}\sim\cN(0,\mathbf{I}_d)$,
$$
\E[|f(\mathbf{x})-S(\mathbf{x})|]\le \tau,
\qquad
\E[S(\mathbf{x})^2]\le 1,
\qquad
\E\|\nabla S(\mathbf{x})\|^2\le C'_{\mathrm{IHS}}\frac{\log K}{\tau}.
$$
\end{corollary}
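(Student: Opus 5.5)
This corollary should follow by specializing \Cref{thm:gsa-ou-approx} to $f\in\mathcal{C}_K^{\cap}$ and inserting Nazarov's bound from \Cref{fact:gsa-intersections}. For $K\ge 2$ that fact gives $\GSA(f)\le C_N\sqrt{\log K}$ for an absolute constant $C_N$. Then $\GSA(f)^2\le C_N^2\log K$. Substituting this into the three conclusions of \Cref{thm:gsa-ou-approx} should give degree at most $C C_N^2\log K\log(1/\tau)/\tau^2$, the $L_2$ bound $\E[S^2]\le 1$ unchanged, and $\E\|\nabla S\|^2\le C C_N^2\log K/\tau$. Reading off $C_{\mathrm{IHS}}$ and $C'_{\mathrm{IHS}}$ from these should finish the generic case.

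The main obstacle is that \Cref{thm:gsa-ou-approx} assumes $\GSA(f)\ge 1$, and an intersection of halfspaces can have smaller surface area, for instance when it is nearly empty, nearly all of $\R^d$, or a single far-away halfspace. To handle this I would first check whether the proof of \Cref{thm:gsa-ou-approx} really uses that hypothesis. The noise-sensitivity bound $\Pr[f(\mathbf{x})\ne f(\mathbf{y})]\le C_0\GSA(f)\sqrt{1-\rho}$ holds for any $\GSA(f)$, so the hypothesis only serves to make sure $1-\rho=c\tau^2/\GSA(f)^2$ is at most a constant and $\rho\in(0,1)$.

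I would therefore rerun that proof with a surrogate $\Gamma:=\max\{\GSA(f),1\}$, or more directly $\Gamma:=C_N\sqrt{\log K}$ once the constants ensure $\Gamma\ge 1$ (for $K\ge 2$, $\log K\ge\log 2$). Set $1-\rho=c\tau^2/\Gamma^2$. Since $\GSA(f)\le\Gamma$, every inequality in the proof survives with $\GSA(f)$ replaced by $\Gamma$:
\begin{itemize}
\item the noise-sensitivity step gives $\E|f-T_\rho f|\le 2C_0\Gamma\sqrt{1-\rho}\le\tau/2$;
\item truncating at $m=O(\log(1/\tau)/(1-\rho))$ gives $\|T_\rho f-S\|_1\le\tau/2$;
\item the gradient estimate $\E\|\nabla S\|^2\le \GSA(f)\cdot O(1/\sqrt{1-\rho})\le O(\Gamma^2/\tau)$ still holds.
\end{itemize}
This yields the stated bounds with $\Gamma^2=O(\log K)$. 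The degenerate cases where $f$ is a.s.\ constant are covered as well, since then $S$ is the constant $f$, with gradient zero.

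Finally I would note that $S$ is obtained from $f$ by applying $T_\rho$ and truncating the Hermite expansion. Both operations commute with rotations, so $S$ depends only on the span of the normals. This property is used later in \Cref{lem:B-regularizer-term} and its intersection analogue, so it is worth recording.
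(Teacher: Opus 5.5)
Your proposal is correct and follows the paper's argument: the paper's proof inserts Nazarov's bound $\GSA(f)=O(\sqrt{\log K})$ from \Cref{fact:gsa-intersections} into \Cref{thm:gsa-ou-approx} and reads off the degree, $L_2$, and gradient bounds. The one difference is in your favor. The paper applies \Cref{thm:gsa-ou-approx} without checking its hypothesis $\GSA(f)\ge 1$, which fails for many $f\in\mathcal{C}_K^{\cap}$; for example, a single halfspace has $\GSA(f)\le 1/\sqrt{2\pi}$. You observe that this hypothesis only serves to keep $\rho\in(0,1)$, and that rerunning the proof with $1-\rho=c\tau^2/\Gamma^2$ for $\Gamma=\max\{\GSA(f),1\}=O(\sqrt{\log K})$ preserves all three estimates, which closes that gap.
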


\begin{proof}
By \Cref{fact:gsa-intersections}, $\GSA(f)\le O(\sqrt{\log K})$. Applying \Cref{thm:gsa-ou-approx} gives degree
$
O\!\left(\frac{\log K\,\log(1/\tau)}{\tau^2}\right)
$
and gradient bound
$$
\E\|\nabla S(\mathbf{x})\|^2\le O\!\left(\frac{\log K}{\tau}\right).
$$
The $L_1$ and $L_2$ bounds are those of \Cref{thm:gsa-ou-approx}.
\end{proof}

\begin{corollary}[Gaussian $L_1$ approximation for Boolean functions of halfspaces]\label{app:thm:boolean-approx}
There exist absolute constants $C_{\mathrm{BH}},C'_{\mathrm{BH}}>0$ with the following property. For every integer $K\ge 1$, every $\tau\in(0,1/10)$, and every $f\in\mathcal{C}_K$, there exists a polynomial $S:\R^d\to\R$ of degree at most
$
C_{\mathrm{BH}}\frac{K^2\log(1/\tau)}{\tau^2}
$
such that, for $\mathbf{x}\sim\cN(0,\mathbf{I}_d)$,
$$
\E[|f(\mathbf{x})-S(\mathbf{x})|]\le \tau,
\qquad
\E[S(\mathbf{x})^2]\le 1,
\qquad
\E\|\nabla S(\mathbf{x})\|^2\le C'_{\mathrm{BH}}\frac{K^2}{\tau}.
$$
Moreover, if
$
f=B(f_1,\dots,f_K)
$
where $f_j(\mathbf{x})=\sign(\langle \mathbf{w}_j,\mathbf{x}\rangle+t_j)$, then $S$ depends only on the span of the normal vectors $\mathbf{w}_1,\dots,\mathbf{w}_K$.
\end{corollary}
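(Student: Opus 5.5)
The plan is to apply \Cref{thm:gsa-ou-approx} directly, with the Gaussian surface area bound from \Cref{fact:gsa-boolean}, and then check the extra ``moreover'' claim about which directions $S$ depends on.

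\textbf{Step 1 (normalizing the surface area).} Fix $f=B(f_1,\dots,f_K)\in\mathcal{C}_K$. By \Cref{fact:gsa-boolean}, $\GSA(f)\le C_1K$ for an absolute constant $C_1$. \Cref{thm:gsa-ou-approx} assumes $\GSA(f)\ge 1$, but its proof uses $\GSA(f)$ only through the noise-sensitivity estimate $\Pr[f(\mathbf{x})\neq f(\mathbf{y})]\le C_0\,\GSA(f)\sqrt{1-\rho}$. This estimate remains true if $\GSA(f)$ is replaced by any upper bound. So I would rerun that proof with the parameter $G:=\max(1,C_1K)=\Theta(K)$ in place of $\GSA(f)$. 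This also covers the degenerate cases where $f$ is constant or has tiny surface area.

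\textbf{Step 2 (reading off the bounds).} Take $1-\rho=c\tau^2/G^2$ and truncation degree $m=O(\log(1/\tau)/(1-\rho))$, exactly as in the proof of \Cref{thm:gsa-ou-approx}. This gives:
\begin{itemize}
\item degree $O(K^2\log(1/\tau)/\tau^2)$, which fixes $C_{\mathrm{BH}}$;
\item $L_1$ error at most $\tau$;
\item $\E[S^2]\le 1$ by Parseval;
\item $\E\|\nabla S\|^2\le CG/\sqrt{1-\rho}=O(K^2/\tau)$, which fixes $C'_{\mathrm{BH}}$.
\end{itemize}

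\textbf{Step 3 (the subspace claim).} This is the only part not already contained in earlier results. Let $U=\operatorname{span}\{\mathbf{w}_1,\dots,\mathbf{w}_K\}$. Then $f(\mathbf{x})=F(\mathbf{x}_U)$ for some function $F$. I would rotate coordinates so that $U$ is spanned by the first $\dim U$ basis vectors. The multivariate Hermite basis is rotation invariant in the sense that $\cP_m$ and each degree level are preserved under rotations. Hence both $T_\rho$ and the projection onto degrees $\le m$ commute with rotations, and the truncation $S$ is independent of the basis used.

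In the rotated basis, $f$ depends only on the first $\dim U$ coordinates. Every Hermite coefficient $c_\alpha$ with some $\alpha_i>0$ for $i>\dim U$ therefore vanishes, since $\E[H_{\alpha_i}(x_i)]=0$ by independence. So $S$ is a polynomial in $\mathbf{x}_U$ only.

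A cleaner way to say the same thing: $T_\rho f(\mathbf{x})=\E[F(\rho\mathbf{x}_U+\sqrt{1-\rho^2}\,\mathbf{g}_U)]$ depends only on $\mathbf{x}_U$. Degree truncation of a function of $\mathbf{x}_U$ is the same as truncation within $L_2$ of the marginal Gaussian on $U$, so $S$ also depends only on $\mathbf{x}_U$.

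\textbf{Main obstacle.} The main point to get right is the rotation-invariance of the degree-$\le m$ truncation, which is what makes the subspace claim hold. It follows because $\cP_m$ is rotation invariant and the projection is orthogonal in $L_2(\cN(0,\mathbf{I}_d))$, which is itself rotation invariant. The remaining steps are bookkeeping on constants.
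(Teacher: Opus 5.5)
Your proposal is correct and follows the same route as the paper: plug $\GSA(f)\le O(K)$ from \Cref{fact:gsa-boolean} into \Cref{thm:gsa-ou-approx}, read off the degree, $L_1$, $L_2$ and gradient bounds, and observe that Ornstein--Uhlenbeck smoothing and degree truncation preserve dependence on $\operatorname{span}\{\mathbf{w}_1,\dots,\mathbf{w}_K\}$. You are also more careful than the paper on two points it passes over: running the proof with $G=\max(1,C_1K)$ repairs the hypothesis $\GSA(f)\ge 1$, which fails for constants and for every single halfspace (surface area at most $1/\sqrt{2\pi}$), and your rotation-invariance argument for the chaos decomposition justifies the subspace claim, which the paper only asserts.
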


\begin{proof}
By \Cref{fact:gsa-boolean}, $\GSA(f)\le O(K)$. Applying \Cref{thm:gsa-ou-approx} gives degree
$
O\!\left(\frac{K^2\log(1/\tau)}{\tau^2}\right)
$
and gradient bound
$$
\E\|\nabla S(\mathbf{x})\|^2\le O\!\left(\frac{K^2}{\tau}\right).
$$
The $L_1$ and $L_2$ bounds are those of \Cref{thm:gsa-ou-approx}. If $f=B(f_1,\dots,f_K)$, then $f$ depends only on the span of the normal vectors $\mathbf{w}_1,\dots,\mathbf{w}_K$; the Ornstein--Uhlenbeck operator and Hermite truncation preserve this dependence.
\end{proof}

\section{Gaussian Poincaré inequality}\label{app:gp}

We restate and prove the following lemmas.

\begin{lemma}[Conditional Gaussian Poincar\'e]\label{app:lem:cond-poincare}
Let $\boldsymbol{\xi}\in\R^d$ be a unit vector and write $\mathbf{x}=\mathbf{u}+z\boldsymbol{\xi}$, where $\mathbf{u}\in\boldsymbol{\xi}^\perp$ and $z\sim\cN(0,1)$ are independent. For every weakly differentiable $G:\R^d\to\R$,
$$
\E\!\left[\Big(G(\mathbf{x})-\E[G(\mathbf{x})\mid \mathbf{u}]\Big)^2\right]
\le \E\big[(\partial_{\boldsymbol{\xi}} G(\mathbf{x}))^2\big].
$$
\end{lemma}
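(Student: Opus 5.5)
The plan is to reduce to the one-dimensional Gaussian Poincaré inequality by conditioning on $\mathbf{u}$. Since $\mathbf{u}$ and $z$ are independent, for each fixed $\mathbf{u}\in\boldsymbol{\xi}^\perp$ we define the univariate function $g_{\mathbf{u}}(z):=G(\mathbf{u}+z\boldsymbol{\xi})$. Then
\[
\E[G(\mathbf{x})\mid \mathbf{u}]=\E_{z\sim\cN(0,1)}[g_{\mathbf{u}}(z)],
\]
so the left-hand side of the lemma is exactly
\[
\E_{\mathbf{u}}\big[\Var_{z\sim\cN(0,1)}(g_{\mathbf{u}}(z))\big].
\]
By the chain rule, $g_{\mathbf{u}}'(z)=\partial_{\boldsymbol{\xi}}G(\mathbf{u}+z\boldsymbol{\xi})$.

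Next, we apply the standard one-dimensional Gaussian Poincaré inequality $\Var_{z}(g(z))\le \E_z[g'(z)^2]$ to each $g_{\mathbf{u}}$. If a self-contained argument is preferred, expand $g_{\mathbf{u}}$ in the normalized univariate Hermite basis, $g_{\mathbf{u}}=\sum_j b_j H_j$. Then $\Var(g_{\mathbf{u}})=\sum_{j\ge1}b_j^2$, and by the derivative identity in \Cref{fact:hermite-derivative}, $\E[(g_{\mathbf{u}}')^2]=\sum_{j\ge1} j\,b_j^2$. Since $j\ge 1$ in every nonzero-variance term, the inequality follows termwise. This gives
\[
\Var_z(g_{\mathbf{u}})\le \E_z\big[(\partial_{\boldsymbol{\xi}}G(\mathbf{u}+z\boldsymbol{\xi}))^2\big].
\]

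Taking expectation over $\mathbf{u}$ and using Fubini together with the identity $\mathbf{u}+z\boldsymbol{\xi}\overset{d}{=}\mathbf{x}$ yields the claimed bound $\E[(\partial_{\boldsymbol{\xi}}G(\mathbf{x}))^2]$.

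The only real obstacle is regularity. For $G$ merely weakly differentiable, we need $g_{\mathbf{u}}$ to be weakly differentiable with the stated derivative for almost every $\mathbf{u}$. We can handle this by restricting to the case where the right-hand side is finite and $G\in L_2(\gamma_d)$; otherwise there is nothing to prove. We then invoke absolute continuity on almost every line (the ACL characterization of Sobolev functions), or approximate $G$ by smooth functions in the Gaussian Sobolev norm and pass to the limit in both sides. In the paper's applications $G$ is $\phi\circ P$ or a polynomial, both smooth, so this step is routine.
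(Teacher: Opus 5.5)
Your proposal is correct and follows essentially the same route as the paper: condition on $\mathbf{u}$, apply the one-dimensional Gaussian Poincar\'e inequality to $g_{\mathbf{u}}(z)=G(\mathbf{u}+z\boldsymbol{\xi})$, whose derivative is $\partial_{\boldsymbol{\xi}}G$, and then average over $\mathbf{u}$. The paper simply cites the Poincar\'e inequality from Boucheron--Lugosi--Massart; your Hermite-expansion proof of the one-dimensional inequality and your remark on regularity for weakly differentiable $G$ are extras the paper does not include.
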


\begin{proof}
Condition on $\mathbf{u}\in\boldsymbol{\xi}^\perp$ and set $g_{\mathbf{u}}(z):=G(\mathbf{u}+z\boldsymbol{\xi})$. Then $\Var(G(\mathbf{x})\mid \mathbf{u})=\Var(g_{\mathbf{u}}(z))$. By the Gaussian Poincar\'e inequality~\cite[Theorem~3.20]{BLM13},
$$
\Var(g_{\mathbf{u}}(z))\le \E[(g_{\mathbf{u}}'(z))^2\mid \mathbf{u}].
$$
Since $g_{\mathbf{u}}'(z)=\partial_{\boldsymbol{\xi}}G(\mathbf{u}+z\boldsymbol{\xi})$, we get
$$
\Var(G(\mathbf{x})\mid \mathbf{u})\le \E[(\partial_{\boldsymbol{\xi}}G(\mathbf{x}))^2\mid \mathbf{u}].
$$
Taking expectation over $\mathbf{u}$ proves the claim.
\end{proof}

\begin{lemma}[Subspace Gaussian Poincar\'e]\label{app:subspace_gp}
Let $W\subseteq \R^d$ be a subspace of dimension $r$, and let $\{\boldsymbol{\xi}_1,\dots,\boldsymbol{\xi}_r\}$ be an orthonormal basis of $W$. Write $\mathbf{x}=\mathbf{u}+\mathbf{z}$, where $\mathbf{u}\in W^\perp$, $\mathbf{z}=\sum_{i=1}^r z_i\boldsymbol{\xi}_i$, and $z_1,\dots,z_r\stackrel{\mathrm{i.i.d.}}{\sim}\cN(0,1)$ are independent of $\mathbf{u}$. Then for every weakly differentiable $G:\R^d\to\R$,
$$
\E\!\left[\Big(G(\mathbf{x})-\E[G(\mathbf{x})\mid \mathbf{u}]\Big)^2\right]
\le \sum_{i=1}^r \E\big[(\partial_{\boldsymbol{\xi}_i}G(\mathbf{x}))^2\big].
$$
\end{lemma}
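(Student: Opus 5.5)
The plan is to reduce to the multivariate Gaussian Poincaré inequality applied fiberwise, in the same way \Cref{app:lem:cond-poincare} reduces to the one-dimensional inequality. First, condition on $\mathbf{u}\in W^\perp$. Then define $g_{\mathbf{u}}:\R^r\to\R$ by
$$
g_{\mathbf{u}}(z_1,\dots,z_r):=G\Big(\mathbf{u}+\sum_{i=1}^r z_i\boldsymbol{\xi}_i\Big).
$$
Because $\mathbf{u}$ and $(z_1,\dots,z_r)$ are independent and $(z_i)$ is standard Gaussian on $\R^r$, the conditional variance $\Var(G(\mathbf{x})\mid\mathbf{u})$ equals $\Var(g_{\mathbf{u}}(\mathbf{z}))$ with $\mathbf{z}\sim\cN(0,\mathbf{I}_r)$. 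The left-hand side of the claim is then exactly $\E_{\mathbf{u}}[\Var(g_{\mathbf{u}}(\mathbf{z}))]$.

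Next, apply the $r$-dimensional Gaussian Poincaré inequality \cite[Theorem~3.20]{BLM13}:
$$
\Var(g_{\mathbf{u}}(\mathbf{z}))\le \E\|\nabla g_{\mathbf{u}}(\mathbf{z})\|^2 .
$$
By the chain rule, the $i$-th partial derivative of $g_{\mathbf{u}}$ is $\partial_{\boldsymbol{\xi}_i}G(\mathbf{u}+\mathbf{z})$, since the map $\mathbf{z}\mapsto\mathbf{u}+\sum_i z_i\boldsymbol{\xi}_i$ is affine with columns $\boldsymbol{\xi}_i$. Hence
$$
\|\nabla g_{\mathbf{u}}\|^2=\sum_{i=1}^r(\partial_{\boldsymbol{\xi}_i}G)^2 .
$$
Taking expectation over $\mathbf{u}$ and using the resampling identity $\mathbf{x}\overset{d}{=}\mathbf{u}+\mathbf{z}$ then gives the claim.

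An alternative, avoiding the multivariate Poincaré inequality, is to tensorize the one-dimensional version. One uses the Efron--Stein decomposition $\Var(g)\le\sum_i \E[\Var_i(g)]$ over the independent coordinates $z_i$, and then applies \Cref{app:lem:cond-poincare} to each direction $\boldsymbol{\xi}_i$ while conditioning on all other coordinates.

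The only delicate step is regularity. We need $g_{\mathbf{u}}$ to be weakly differentiable for almost every $\mathbf{u}$, with the weak derivative given by the chain rule, so that Fubini applies. This follows because $G$ is weakly differentiable and the coordinate change $\mathbf{x}\leftrightarrow(\mathbf{u},\mathbf{z})$ is an orthogonal linear map. In the applications, $G$ is either a polynomial or $\phi\circ P$ with $\phi$ Lipschitz, so $G$ is locally Lipschitz and the issue is harmless. If the right-hand side is infinite, the statement is trivial.
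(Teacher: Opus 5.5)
Your proposal is correct and follows the paper's proof essentially line for line: condition on $\mathbf{u}$, apply the $r$-dimensional Gaussian Poincar\'e inequality \cite[Theorem~3.20]{BLM13} to $g_{\mathbf{u}}$, identify $\partial g_{\mathbf{u}}/\partial z_i$ with $\partial_{\boldsymbol{\xi}_i}G$, and average over $\mathbf{u}$. Your remarks on the Efron--Stein alternative and on regularity are sound additions that the paper does not include.
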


\begin{proof}
Condition on $\mathbf{u}\in W^\perp$ and set
$
g_{\mathbf{u}}(z_1,\dots,z_r):=G(\mathbf{u}+\sum_{i=1}^r z_i\boldsymbol{\xi}_i).
$
Then $\Var(G(\mathbf{x})\mid \mathbf{u})=\Var(g_{\mathbf{u}}(z_1,\dots,z_r))$. By the Gaussian Poincar\'e inequality~\cite[Theorem~3.20]{BLM13},
$$
\Var(g_{\mathbf{u}})\le \sum_{i=1}^r \E\!\left[\left(\frac{\partial g_{\mathbf{u}}}{\partial z_i}\right)^2 \middle| \mathbf{u}\right].
$$
Since $\frac{\partial g_{\mathbf{u}}}{\partial z_i}=\partial_{\boldsymbol{\xi}_i}G(\mathbf{x})$, we get
$$
\Var(G(\mathbf{x})\mid \mathbf{u})\le \sum_{i=1}^r \E[(\partial_{\boldsymbol{\xi}_i}G(\mathbf{x}))^2\mid \mathbf{u}].
$$
Taking expectation over $\mathbf{u}$ proves the claim.
\end{proof}

\end{document}